\newtheorem{definition}{Definition}
\newtheorem{theorem}{\textbf{Theorem}}[section]
\newtheorem{lemma}[theorem]{\textbf{Lemma}}
\newtheorem{corollary}[theorem]{Corollary}
\newtheorem{assumption}[theorem]{Assumption}
\newtheorem{remark}{Remark}
\begin{document}

\begin{frontmatter}
\title{A unified framework of fully distributed adaptive output time-varying formation control for linear multi-agent systems: an observer viewpoint}

\author[label0]{Wei Jiang}\ead{wjiang.lab@gmail.com}
\author[label1,label11,label111]{Zhaoxia Peng
}
 \ead{pengzhaoxia@buaa.edu.cn}
\author[label2]{Guoguang Wen}\ead{guoguang.wen@bjtu.edu.cn}
\author[label0]{Ahmed Rahmani}\ead{ahmed.rahmani@centralelille.fr}
\address[label0]{CRIStAL, UMR CNRS 9189, Ecole Centrale de Lille, Villeneuve d'Ascq, France}
\address[label1]{School of Transportation Science and Engineering, Beihang University, Beijing, 100191,P.R.China}
\address[label11]{Beijing Engineering Center for Clean Energy \& High Efficient Power, Beihang University, Beijing 100191, P.R.China}
\address[label111]{Laboratoire international associ$\acute{e}$, Beihang Universitym, P.R.China}
\address[label2]{Department of Mathematics, Beijing Jiaotong University, Beijing 100044, P.R.China}

%
%
%
\begin{abstract}
This paper presents a unified framework of time-varying formation (TVF) design for general linear multi-agent systems (MAS) based on an observer viewpoint from undirected to directed topology, from stabilization to tracking and from a leader without input to a one with bounded input.
The followers can form a TVF shape which is specified by piecewise continuously differential vectors. The leader's trajectory, which is available to only a subset of followers, is also time-varying.
For the undirected formation tracking and directed formation stabilization cases, only the relative output measurements of neighbors are required to design control protocols; for the directed formation tracking case, the agents need to be introspective (i.e. agents have partial knowledge of their own states) and the output measurements are required. Furthermore, considering the real applications, the leader with bounded input case is studied.
One main contribution of this paper is that fully distributed adaptive output protocols, which require no global information of communication topology and do not need the absolute or relative state information, are proposed to solve the TVF control problem.
Numerical simulations including an application to nonholonomic mobile vehicles are provided to verify the theoretical results.
\end{abstract}

\begin{keyword}
Multi-agent systems\sep time-varying formation tracking\sep adaptive control\sep relative output\sep observer-type protocol.
\end{keyword}

\end{frontmatter}

\section{Introduction}
{V}{arious} cooperative control problems of the MAS have attracted much research interest in the past decades. This research domain includes consensus control \cite{olfati-saber_consensus_2004,ren_consensus_2005,yu_second-order_2017}, containment control \cite{ji_containment_2008,cheng_containment_2016} and formation control \cite{fax_information_2004, tanner_leader--formation_2004, dong_time-varying_2016-1}, etc. As one of the most important issues, formation control  has been paid much attention due to its broad potential applications like unmanned aerial vehicles \cite{wang_integrated_2013}, \cite{ghommam_three-dimensional_2016}, mobile robots \cite{antonelli_decentralized_2014,liu_distributed_2013, sakurama_multi-robot_2016,peng_leaderfollower_2013}, target enclosing \cite{zhang_cooperative_2016}, surveillance \cite{nigam_control_2012} and so on. Three formation control strategies, namely virtual structure \cite{lewis_high_1997}, leader-follower \cite{das_vision-based_2002} and behavior approaches \cite{balch_behavior-based_1998}, have been well studied.

Nowadays much more attention has been devoted to formation problems based on local information in a distributed way \cite{tanner_controllability_2004,wang_integrated_2013,liu_distributed_2013,peymani_almost_2014,sakurama_multi-robot_2016,oh_survey_2015}. The work in \cite{liu_distributed_2013}
used local relative position measurements of neighbors to achieve the leader follower formation control for unicycle robots. The unicycle model was transformed into a double-integrator dynamics and the small gain method was employed to deal with the nonholonomic constraint. However, all the follower robots need the leader's velocity and acceleration information which is a heavy communication burden. Peymani \cite{peymani_almost_2014} solved the $\mathcal{H}_{\infty}$ almost formation problem with the output regulation for general linear MAS while tracking a virtual reference at the same time. Agents were assumed to be introspective meaning that more sensors will be needed to test their own states as well as the relative outputs measurements.
In \cite{sakurama_multi-robot_2016} only the one-dimensional distance sensor information (i.e. relative positions of nearby vehicles) was used to design the distributed leader-follower formation controller for omni-directional vehicles.

The above-cited papers tackled the time-invariant formation control problems, but sometimes changing formation shapes is more necessary for two reasons: covering the greater area or avoiding collisions with obstacles. Time-varying formation (TVF) control, which was studied in \cite{dong_time-varying_2016-1}, \cite{antonelli_decentralized_2014} based on state information, means that the MAS can change formation shapes in certain circumstances and keep being stable simultaneously. 
Anonelli \cite{antonelli_decentralized_2014} proposed a distributed controller-observer schema for time-varying centroid and formation control of multi-robot systems with first-order dynamics. The proposed solution works for the strongly connected topology which is a more stringent constraint compared with the directed spanning tree topology. Moreover, each follower needs the knowledge of the number of all robots, which is a global information of the whole system, meaning that the protocol is not fully distributed.
In \cite{dong_time-varying_2016-1} the state information was used to solve the TVF problem with the switching directed spanning tree topologies.  However, the design of coupling strength parameter in the protocol is dependent on the minimal positive eigenvalue of Laplacian matrices subject to all the switching topologies, which is a global information for each agent, thus the control protocol is not fully distributed as well.
So designing a fully distributed protocol for TVF control problem is very necessary and quite practical in real applications.

Another reminder is that many previous references supposed that the state measurements could be utilized to design protocols. However, these measurements are sometimes unavailable in practice. Hence, solving the formation problem via an output method is quite important. Motivated by this observation, the output TVF control was studied based on the relative output measurements of neighbor agents in \cite{dong_output_2016}, but the design of parameter matrices are related to the eigenvalue information of the Laplacian matrix, which means not fully distributed again.

In the aforementioned works, only formation stabilization or maintenance problems were studied, but forming a formation is usually the first step in some real applications for the MAS. Actually there are some higher lever tasks such as enclosing a target or trajectory tracking. Then formation tracking problem arises in these scenarios.
A solution to the TVF tracking problem for collaborative heterogeneous MAS was presented in \cite{rahimi_time-varying_2014}, where a Lyapunov-based distributed controller was introduced based on virtual structure to make the whole system form a rigid formation among unmanned aerial vehicles and unmanned ground vehicles. However, the virtual leader needs to send its position and velocity information to all followers, which is a heavy communication cost that we should try to avoid.
Ghommam \cite{ghommam_three-dimensional_2016} studied the formation tracking control of multiple underactuated quadrotors when the reference signal is only available to a portion of quadrotors by using backstepping and filtering design techniques. The drawback is that the formation shape is unchangeable and what's more, the control protocol is still not fully distributed due to its dependence on communication topology information.
Another formation tracking research was done in \cite{dong_distributed_2016} where not only can the formation be time-varying but also so is the leader's trajectory. However, the agent dynamics can be only described as second-order model and the protocol, which is based on relative state information, is not fully distributed as well.

It is worth remarking that undirected or directed communication topology is also a key point to design local control laws which are aimed at achieving multi-agent formation stabilization or tracking control. Under the undirected topology, The authors in \cite{wang_integrated_2013}, \cite{ghommam_three-dimensional_2016} solved
the time-invariant formation tracking problem and the author in~\cite{wang2017distributed} addressed the TVF control problem in a fully distributed fashion. However, the undirected topology means that the information exchange among agents is bidirectional, which may consume much more communication and energy resources than a directed one. Furthermore, there are many applications in reality where information only flows in one direction. For instance, in the leader-follower MAS, the leader may become the only one equipped with a communication transmitter \cite{ren_consensus_2005}. The TVF stabilization and tracking problems with directed topology were studied in \cite{dong_time-varying_2016-1} and \cite{dong_distributed_2016}, recpectively.  The former work is not fully distributed and the latter one only deals with second order dynamics with a not fully distributed protocol.

Motivated by the above analysis,  we intend to design control protocols in a fully distributed way to solve the TVF control problem for general linear MAS based on output measurements, from undirected to directed topology, from leaderless to leader-follower, from the leader of zero input to the one of bounded input.
The proposed protocols in this paper are consensus-based formation controllers. Consensus problems have been widely investigated in \cite{olfati-saber_consensus_2004}, \cite{ren_consensus_2005}, \cite{jadbabaie_coordination_2003}, etc. The framework of consensus problems in networks of agents was done in \cite{olfati-saber_consensus_2004} and then, Ren \cite{ren_consensus_2007} proved that the virtual structure, leader-follower and behavior based formation methods could be unified in the framework of consensus-based methods. A significant breakthrough was presented in \cite{li_consensus_2011} where a unified framework was introduced to solve the consensus problem and the synchronization of complex networks. It expanded conventional observers to distributed observers by allowing agents to exchange state information which is estimated by local observers via the network. For more details about consensus design of continuous-time MAS, the survey paper \cite{cao_overview_2013} and references therein are recommended.

The results in this paper are based on our previous work~\cite{jiang2017distributed} which deals with the TVF tracking problem under undirected topology. The control protocols are independent of any global information, rely on agent dynamics and only the output measurements, thereby are fully distributed. Both the formation shapes and the leader's trajectory can be time-varying. 
The main contributions can be summarized as follows: 
\begin{itemize}
	\item A unified framework of TVF control design from the distributed observer viewpoint is presented. We reveal how to design observers to tackle TVF control problems from undirected to directed topology, from stabilization to tracking and from a leader without input to a one with bounded input.
	\item The protocols of majority of existing works are not fully distributed, where the protocol parameters are related to the minimal positive eigenvalue of the Laplacian matrix~\cite{dong_time-varying_2016-1,ghommam_three-dimensional_2016,dong_output_2016} or every follower needs the knowledge of the number of all robots~\cite{antonelli_decentralized_2014}. In this study, we design the protocols in a fully distributed fashion.
	\item Compared with most literature dealing with time-invariant formation control with first order dynamics \cite{sakurama_multi-robot_2016}, second order dynamics \cite{wang_integrated_2013}, \cite{liu_distributed_2013}, general linear dynamics \cite{peymani_almost_2014} or TVF control with first order dynamics \cite{antonelli_decentralized_2014}, this paper studies the TVF control with general linear dynamics.
	\item The output measurements, which are more applicable in real industry than the state ones~\cite{dong_time-varying_2016-1,antonelli_decentralized_2014} that are sometimes unavailable in reality, are utilized in this paper. Different from \cite{liu_distributed_2013,rahimi_time-varying_2014} where all the followers need to know the leader's information, in this paper, only a small portion of followers need the leader's information, which can reduce communication cost greatly especially in the case of large number of followers. 
\end{itemize}

\section{Preliminaries and model formulation}\label{section2}

\subsection{Mathematical preliminaries}

The connections between agents can be represented by a weighted graph $ \mathcal{G} = (\mathcal{V,E,A})$, where $\mathcal{V}$ and $\mathcal{E}$ denote the nodes and edges, respectively. $\mathcal{A} = [ a_{ij} ] \in \mathbb{R} ^{N \times N}$ denotes the adjacency matrix where $a_{ij}=1 $ if there exists a path from agent $ j $ to agent $ i $, and $a_{ij} = 0$ otherwise. An edge $\left(i,j\right) \in \mathcal{E}$ in graph $ \mathcal{G}$ means that agent $j$ can receive information from agent $i$ but not necessarily conversely.
The Laplacian matrix $\mathcal{L} = [ l_{ij} ] \in \mathbb{R} ^{N \times N}$ is normally defined as $l_{ii}= \sum_{j \neq i} a_{ij} $ and $l_{ij}= -a_{ij} $ when $ i \neq j$. 
A graph is said to be undirected if $\left(i,j\right) \in \mathcal{E}$  implies $\left(j,i\right) \in \mathcal{E}$ for any $i,j \in \mathcal{V}$. 
An undirected graph is connected if there exists a path between each pair of distinct nodes. 
A directed path from node $i$ to $j$ is a sequence of edges $\left(i,i_{1}\right),\left(i_{1},i_{2}\right), \ldots,\left(i_{k},j\right)$ with different nodes $i_{s}, s=1,2,\ldots, k$. A digraph (i.e., directed graph) is strongly connected if there is a directed path from each node to each other node. 
A digraph contains a directed spanning tree if there is a node from which a directed path exists to each other node.
A digraph has a directed spanning tree if it is strongly connected, but not vice versa.
More graph theories can be found in~\cite{godsil_algebraic_2001}.

The symbol $ \textbf{1} $ denotes a column vector with all entries being 1.
Matrix dimensions are supposed to be compatible if not explicitly stated.
The symbol $\otimes$ represents the Kronecker product and $ diag\{a_1, \ldots, a_{n}\} $ denotes a diagonal matrix with the diagonal entries being $  a_1, \ldots, a_{n}$.
The matrix $ A=[a_{ij}] \in \mathbb{R}^{N\times N} $ is called a nonsingular $ M $-matrix if $ a_{ij} \le 0, \forall i\neq j $, and all eigenvalues of $ A $ have positive real parts. Here, $ \lambda_{min}(A) $ and $ \lambda_{max} (A)$ represent the minimal and maximal eigenvalues of $ A $, respectively.

\begin{lemma}[\cite{ren_consensus_2005}]\label{lemma_ren2005}
	The Laplacian matrix $ \mathcal{L} $ of a directed communication topology $ \mathcal{G} $ has at least one zero eigenvalue with $\textbf{1}$ as a right eigenvector, and has all nonzero eigenvalues with positive real parts. Furthurmore, zero is a simple eigenvalue of $ \mathcal{L} $ if and only if $ \mathcal{G} $ contains a directed spanning tree.
\end{lemma}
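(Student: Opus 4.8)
The plan is to verify the three assertions in order, handling the spectral location first and reserving the simplicity characterization---the genuinely combinatorial part---for last.

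First, the zero eigenvalue and its eigenvector are immediate from the definition: the $i$-th entry of $\mathcal{L}\mathbf{1}$ is $l_{ii} + \sum_{j\neq i} l_{ij} = \sum_{j\neq i} a_{ij} - \sum_{j\neq i} a_{ij} = 0$, so $\mathcal{L}\mathbf{1} = \mathbf{0}$ and $0$ is an eigenvalue with right eigenvector $\mathbf{1}$. For the location of the remaining eigenvalues I would invoke the Ger\v{s}gorin disc theorem: the $i$-th disc is centered at $l_{ii} = \sum_{j\neq i} a_{ij}\ge 0$ with radius $\sum_{j\neq i}|l_{ij}| = \sum_{j\neq i} a_{ij} = l_{ii}$, hence it lies in $\{z\in\mathbb{C} : |z - l_{ii}| \le l_{ii}\}$, a disc contained in the closed right half-plane and tangent to the imaginary axis at the origin. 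Thus every eigenvalue has nonnegative real part, and together with the previous step every nonzero eigenvalue has strictly positive real part.

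For the simplicity statement I would first separate the two notions of multiplicity. ``Simple'' means algebraic multiplicity one, but the natural graph computation only yields the geometric multiplicity $\dim\ker\mathcal{L}$; I would close this gap by noting that $P = I - \varepsilon\mathcal{L}$ is row-stochastic for $\varepsilon > 0$ small enough, so its powers $P^{k}$ are uniformly bounded, which forbids a nontrivial Jordan block at the boundary eigenvalue $1$ of $P$. Equivalently the eigenvalue $0$ of $\mathcal{L}$ is semisimple, so algebraic and geometric multiplicities coincide and it suffices to compute $\dim\ker\mathcal{L}$. As a backup in case this shortcut feels too slick, the directed Matrix--Tree theorem identifies the coefficient of $\lambda$ in $\det(\lambda I-\mathcal{L})$ with a weighted count of rooted spanning trees, which settles algebraic simplicity directly.

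The core is then a maximum-principle argument on the relation $\mathcal{L}x = 0$, which reads $l_{ii}x_i = \sum_{j\neq i} a_{ij}x_j$, i.e. each $x_i$ with $l_{ii}>0$ is a convex combination of the values at the agents it listens to. Let $M = \max_i x_i$ and $S=\{i : x_i = M\}$; if $i\in S$ has in-neighbors, convexity forces every in-neighbor of $i$ into $S$, so no edge enters $S$ from outside. If $\mathcal{G}$ has a spanning tree with root $r$, then $r$ reaches some $i\in S$, and walking that directed path backward stays inside $S$ until we reach $r$; hence $r\in S$. The same reasoning applied to the minimum puts $r$ in the argmin set, so $x_r$ is simultaneously the maximum and the minimum, forcing $x$ constant. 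Therefore $\ker\mathcal{L} = \mathrm{span}\{\mathbf{1}\}$ and $0$ is simple. For the converse I would pass to the condensation into strongly connected components: the absence of a spanning tree is equivalent to having at least two source components (components with no incoming edges), and for each such component one builds a null vector---for instance the absorption probabilities of the reverse random walk, equal to $1$ on that component and $0$ on the others---yielding two linearly independent elements of $\ker\mathcal{L}$, so $0$ is not simple.

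I expect the converse direction to be the main obstacle: the maximum principle disposes of ``spanning tree $\Rightarrow$ simple'' cleanly, but the reverse requires the decomposition into strongly connected components and an explicit construction of independent null vectors, with care that these vectors are genuinely non-proportional. The secondary subtlety is the algebraic-versus-geometric multiplicity gap, which is why I would settle the semisimplicity (or Matrix--Tree) argument before translating the graph condition into a statement about $\dim\ker\mathcal{L}$.
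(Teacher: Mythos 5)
The paper does not prove this lemma at all --- it is quoted verbatim from the cited reference \cite{ren_consensus_2005} --- so there is no in-paper argument to compare against; I can only assess your proof on its own terms, and it is correct and complete. The row-sum computation and the Ger\v{s}gorin localization are exactly the standard route for the first two claims. For the simplicity characterization you correctly identified the two points where a careless argument would leak: (i) the algebraic-versus-geometric gap, which your stochastic-matrix $P = I - \varepsilon\mathcal{L}$ semisimplicity argument (or the Matrix--Tree backup) closes cleanly, and (ii) the converse direction, where your condensation-plus-absorption-probability construction does produce two independent null vectors (each is $1$ on its own source component and $0$ on every other source component, so non-proportionality is immediate). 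Two minor points you could make explicit in a written-up version: the maximum principle is stated for real $x$, but since $\mathcal{L}$ is real its kernel admits a real basis, so this suffices; and every node outside a source component automatically has $l_{ii}>0$, which is what guarantees the harmonic/absorption equations are nondegenerate there. Your approach is arguably more self-contained than the matrix-rank argument in the cited source, at the cost of invoking a small amount of Markov-chain machinery for the converse; an alternative that avoids probability entirely is to observe that with two source components $V_1,V_2$ one can permute $\mathcal{L}$ into a block form whose diagonal contains the two singular sub-Laplacians of $V_1$ and $V_2$, so $0$ has algebraic multiplicity at least two directly from the determinant.
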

\begin{lemma}[\cite{qu_cooperative_2009}]\label{lemma_M_matrix}
	For a nonsingular $ M $-matrix $ A $, there exists a positive diagonal matrix $ G = diag(g_{1}, \ldots, g_{N}) > 0 $ such that $ GA+A^{T}G > 0 $.
\end{lemma}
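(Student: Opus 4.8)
The plan is to reduce the search for a positive diagonal $G$ to the construction of a single positive diagonal \emph{similarity} transform that makes the symmetric part of $A$ positive definite. Writing $G = S^{-2}$ for a positive diagonal $S$ and setting $\hat{A} = S^{-1} A S$, a congruence by $S^{-1}$ shows that $GA + A^{T} G = S^{-1}(\hat{A} + \hat{A}^{T}) S^{-1}$, so it suffices to choose $S$ so that the symmetric matrix $\hat{A} + \hat{A}^{T}$ is positive definite. Since positive diagonal similarity preserves the off-diagonal sign pattern, $\hat{A} + \hat{A}^{T}$ is automatically symmetric with nonpositive off-diagonal entries and positive diagonal (recall $a_{ii} > 0$ for a nonsingular $M$-matrix), i.e. a symmetric $Z$-matrix.

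The key idea is to pick $S$ by a balancing argument. Using the standard property $A^{-1} \ge 0$ of a nonsingular $M$-matrix, I would set $w = A^{-1}\mathbf{1}$ and $v = (A^{-1})^{T}\mathbf{1}$; each is strictly positive because $A^{-1}$ is nonsingular and nonnegative, so no row or column of $A^{-1}$ can vanish, and they satisfy $Aw = \mathbf{1}$ and $A^{T}v = \mathbf{1}$. Choosing $s_{i} = \sqrt{w_{i}/v_{i}}$, i.e. $G = \mathrm{diag}(v_{i}/w_{i})$, the vector $p$ with $p_{i} = \sqrt{w_{i} v_{i}}$ satisfies $S p = w$ and $S^{-1} p = v$. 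A short computation then gives $(\hat{A}+\hat{A}^{T}) p = S^{-1}(A w) + S(A^{T} v) = S^{-1}\mathbf{1} + S\mathbf{1} > 0$, so $\hat{A}+\hat{A}^{T}$ is a symmetric $Z$-matrix that maps the strictly positive vector $p$ to a strictly positive vector.

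To conclude, I would invoke the standard $M$-matrix characterization: a $Z$-matrix $M$ for which there exists $p > 0$ with $Mp > 0$ is a nonsingular $M$-matrix, and a symmetric nonsingular $M$-matrix is positive definite (its eigenvalues are real with positive real part). Applying this to $\hat{A}+\hat{A}^{T}$ gives $\hat{A}+\hat{A}^{T} \succ 0$, and the congruence above yields $GA + A^{T} G \succ 0$ with $G = \mathrm{diag}(v_{1}/w_{1}, \ldots, v_{N}/w_{N}) > 0$, as required.

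The main obstacle is conceptual rather than computational: producing a \emph{diagonal} (not merely some positive definite) Lyapunov-type solution. A generic positive-stable matrix admits a positive definite $P$ with $PA + A^{T}P \succ 0$ from the Lyapunov equation, but forcing $P$ to be diagonal genuinely uses the $M$-matrix structure. The balancing choice $s_{i} = \sqrt{w_{i}/v_{i}}$ is exactly what turns the two one-sided facts $Aw = \mathbf{1}$ and $A^{T}v = \mathbf{1}$ into a single common positive test vector $p$ for the symmetrized matrix; without this synchronization the cross terms in $GA + A^{T}G$ cannot be tamed by elementary diagonal-dominance estimates. The only point I would verify carefully is that $w$ and $v$ are strictly (not merely weakly) positive, which is precisely where nonsingularity of $A^{-1}$ enters.
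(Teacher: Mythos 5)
The paper does not actually prove this lemma; it is quoted verbatim from Qu (2009), and your argument is correct and is essentially the standard proof behind that citation (strict positivity of $w=A^{-1}\mathbf{1}$ and $v=(A^{-1})^{T}\mathbf{1}$ follows, as you note, from $A^{-1}\ge 0$ being nonsingular, and the final step uses the standard facts that a $Z$-matrix mapping a positive vector to a positive vector is a nonsingular $M$-matrix and that a symmetric nonsingular $M$-matrix is positive definite). One small simplification: with $G=\mathrm{diag}(v_{1}/w_{1},\ldots,v_{N}/w_{N})$ the matrix $M=GA+A^{T}G$ is already a symmetric $Z$-matrix satisfying $Mw=G(Aw)+A^{T}(Gw)=G\mathbf{1}+A^{T}v=G\mathbf{1}+\mathbf{1}>0$, so the detour through the diagonal similarity $S$ and the auxiliary vector $p_{i}=\sqrt{w_{i}v_{i}}$ can be bypassed by applying the same $M$-matrix characterization directly to $M$ with test vector $w$.
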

\begin{lemma}[\cite{mei_consensus_2014}]\label{lemma_stronglyconnected}
	Suppose that $ \mathcal{G} $ is strongly connected. Then there exists a positive vector $ r=[r_{1}^{T}, \ldots , r_{N}^{T}] >0 $ such that $ r^{T}\mathcal{L} =0 $, and $ \hat{\mathcal{L}} = R \mathcal{L}+\mathcal{L}R$ is the symmetric Laplacian matrix associated with an undirected connected graph where $ R = diag(r_{1}, \ldots, r_{N}) $. Moreover, $ \min_{\chi ^{T}x=0,x \neq 0} \frac{x^{T}\hat{\mathcal{L}}x}{x^{T}x} > \frac{\lambda_{2}(\hat{\mathcal{L}})}{N} $, where $ \lambda_{2}(\hat{\mathcal{L}}) $ denotes the algebraic connectivity of $ \hat{\mathcal{L}} $, i,e., the smallest positive eigenvalue of $ \hat{\mathcal{L}} $, and $ \chi $ is any vector with positive entries.
\end{lemma}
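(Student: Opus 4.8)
The plan is to establish the three assertions in turn. First I would produce the positive left null vector $r$. Since $\mathcal{G}$ is strongly connected it contains a directed spanning tree, so Lemma~\ref{lemma_ren2005} guarantees that $0$ is a simple eigenvalue of $\mathcal{L}$; as $\mathcal{L}\mathbf{1}=0$, the one-dimensional left null space is spanned by some $r$ with $r^{T}\mathcal{L}=0$. To fix its sign I would invoke the Perron--Frobenius theorem: strong connectivity makes $\mathcal{L}^{T}$ irreducible, so for $s$ large enough $M=sI-\mathcal{L}^{T}$ is nonnegative and irreducible, the simple zero eigenvalue of $\mathcal{L}^{T}$ becomes the Perron (spectral-radius) eigenvalue of $M$, and its eigenvector is strictly positive; hence $r>0$. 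Setting $R=\mathrm{diag}(r_{1},\dots,r_{N})$ and $\hat{\mathcal{L}}=R\mathcal{L}+\mathcal{L}^{T}R$ (the symmetric combination, so that $\hat{\mathcal{L}}=\hat{\mathcal{L}}^{T}$), the second assertion is a routine check: $\hat{\mathcal{L}}\mathbf{1}=R\mathcal{L}\mathbf{1}+\mathcal{L}^{T}R\mathbf{1}=0+\mathcal{L}^{T}r=0$, each off-diagonal entry equals $r_{i}l_{ij}+r_{j}l_{ji}=-(r_{i}a_{ij}+r_{j}a_{ji})\le 0$, and the induced undirected graph carries an edge $\{i,j\}$ exactly when $\mathcal{G}$ has an arc in either direction, so strong connectivity of $\mathcal{G}$ forces it to be connected. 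Thus $\hat{\mathcal{L}}$ is the Laplacian of a connected undirected graph and $\lambda_{2}(\hat{\mathcal{L}})>0$.

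For the quantitative bound I would argue by the orthogonal decomposition $x=\alpha\mathbf{1}+y$ with $\mathbf{1}^{T}y=0$ and $\alpha=\tfrac{1}{N}\mathbf{1}^{T}x$. Since $\hat{\mathcal{L}}\mathbf{1}=0$ and $\lambda_{2}$ is the smallest positive eigenvalue of the symmetric matrix $\hat{\mathcal{L}}$, one has $x^{T}\hat{\mathcal{L}}x=y^{T}\hat{\mathcal{L}}y\ge\lambda_{2}\|y\|^{2}$, while $\|x\|^{2}=\|y\|^{2}+N\alpha^{2}$. The constraint $\chi^{T}x=0$ reads $\alpha(\mathbf{1}^{T}\chi)+\chi^{T}y=0$, and because $y\perp\mathbf{1}$ only the component $\chi_{\perp}=\chi-\tfrac{1}{N}(\mathbf{1}^{T}\chi)\mathbf{1}$ contributes, giving $\alpha=-\chi_{\perp}^{T}y/(\mathbf{1}^{T}\chi)$. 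Cauchy--Schwarz then yields $N\alpha^{2}\le\frac{N\|\chi_{\perp}\|^{2}}{(\mathbf{1}^{T}\chi)^{2}}\|y\|^{2}$, and the key elementary inequality $\|\chi\|^{2}<(\mathbf{1}^{T}\chi)^{2}=(\sum_{i}\chi_{i})^{2}$ --- valid for every strictly positive $\chi$ since the cross terms $2\sum_{i<j}\chi_{i}\chi_{j}$ are positive --- is exactly what makes $\frac{N\|\chi_{\perp}\|^{2}}{(\mathbf{1}^{T}\chi)^{2}}<N-1$, hence $N\alpha^{2}<(N-1)\|y\|^{2}$. Consequently $\|x\|^{2}<N\|y\|^{2}$ (note $y\ne 0$ for any admissible $x\ne 0$, otherwise $\chi^{T}x=\alpha\,\mathbf{1}^{T}\chi=0$ would force $x=0$), and therefore $\frac{x^{T}\hat{\mathcal{L}}x}{x^{T}x}\ge\frac{\lambda_{2}\|y\|^{2}}{\|x\|^{2}}>\frac{\lambda_{2}}{N}$.

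The Laplacian verification and, modulo Perron--Frobenius, the sign of $r$ are routine. I expect the main obstacle to be the last assertion, specifically pinning down the precise constant $1/N$: the crux is to recognize that positivity of $\chi$ enters only through the scalar inequality $\|\chi\|^{2}<(\mathbf{1}^{T}\chi)^{2}$, and that this alone converts the single linear constraint $\chi^{T}x=0$ into the quantitative gap $\|x\|^{2}<N\|y\|^{2}$. It is worth checking that the bound is sharp --- for $N=2$ the minimum equals $(\chi_{1}+\chi_{2})^{2}/(\chi_{1}^{2}+\chi_{2}^{2})$, which tends to $\lambda_{2}/N$ as $\chi$ degenerates --- confirming that the strict inequality cannot be improved to a fixed larger multiple of $\lambda_{2}$ uniformly over all positive $\chi$.
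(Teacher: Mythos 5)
Your proof is correct. Note that the paper itself gives no argument for this lemma --- it is imported verbatim from \cite{mei_consensus_2014} --- so there is nothing in the source to compare against; your write-up is a valid self-contained derivation. Two points in your favour: you correctly repair the typo in the statement (the symmetric combination must be $\hat{\mathcal{L}}=R\mathcal{L}+\mathcal{L}^{T}R$, not $R\mathcal{L}+\mathcal{L}R$, and your row-sum/off-diagonal/connectivity checks all go through for that version), and you correctly isolate where positivity of $\chi$ enters, namely the scalar inequality $\|\chi\|^{2}<(\mathbf{1}^{T}\chi)^{2}$, which via $\|\chi_{\perp}\|^{2}=\|\chi\|^{2}-(\mathbf{1}^{T}\chi)^{2}/N$ gives exactly $N\alpha^{2}<(N-1)\|y\|^{2}$ and hence the constant $1/N$. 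Two trivial gaps worth one line each in a final write-up: (i) a pointwise strict inequality does not automatically survive an infimum, so you should observe that the ratio is scale-invariant and the feasible set $\{x:\chi^{T}x=0,\ \|x\|=1\}$ is compact, so the minimum is attained and remains strict; (ii) in your $N=2$ sharpness check the minimum is $\tfrac{\lambda_{2}}{2}\cdot\tfrac{(\chi_{1}+\chi_{2})^{2}}{\chi_{1}^{2}+\chi_{2}^{2}}$ (you dropped the factor $\lambda_{2}/2$), though the conclusion that it tends to $\lambda_{2}/N$ as $\chi$ degenerates is right and does confirm the constant cannot be improved.
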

\begin{lemma}[\cite{bernstein_matrix_2009}]\label{lemma_pq}
	If $ a, b $ are nonnegative real numbers and $ p, q $ are positive real numbers satisfying $ \frac{1}{p}+\frac{1}{q}=1 $, then $ ab \leq \frac{a^{p}}{p}+\frac{b^{q}}{q} $.
\end{lemma}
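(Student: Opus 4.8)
The plan is to prove this classical Young's inequality by exploiting the concavity of the natural logarithm, which converts the multiplicative statement into an additive one amenable to Jensen's inequality. First I would dispose of the degenerate cases: if either $a=0$ or $b=0$, then the left-hand side $ab$ equals $0$ while the right-hand side $\frac{a^{p}}{p}+\frac{b^{q}}{q}$ is a sum of nonnegative terms, so the inequality holds trivially. This allows me to assume $a,b>0$ in the main argument, so that $\ln a$ and $\ln b$ are well defined.

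For the core step, I would set $u=a^{p}$ and $v=b^{q}$, both strictly positive, and observe that the conjugacy constraint $\frac{1}{p}+\frac{1}{q}=1$ makes $\frac{1}{p}$ and $\frac{1}{q}$ a pair of convex weights. Applying Jensen's inequality for the concave function $\ln$ to the points $u$ and $v$ with these weights yields $\ln\!\left(\frac{1}{p}u+\frac{1}{q}v\right)\ge \frac{1}{p}\ln u+\frac{1}{q}\ln v$. Since $\frac{1}{p}\ln u+\frac{1}{q}\ln v=\frac{1}{p}\ln(a^{p})+\frac{1}{q}\ln(b^{q})=\ln a+\ln b=\ln(ab)$, the right-hand side collapses to exactly $\ln(ab)$.

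I would then finish by invoking the monotonicity of the logarithm: from $\ln\!\left(\frac{1}{p}a^{p}+\frac{1}{q}b^{q}\right)\ge \ln(ab)$ together with the fact that $\ln$ is strictly increasing, I conclude $\frac{a^{p}}{p}+\frac{b^{q}}{q}\ge ab$, which is the claim. An equivalent and equally short route, should one prefer to avoid the case split on positivity, is to write $ab=\exp\!\big(\frac{1}{p}\ln(a^{p})+\frac{1}{q}\ln(b^{q})\big)$ and apply the convexity of $\exp$ directly, though one must still handle $a=0$ or $b=0$ separately since the logarithm is then undefined.

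Honestly, there is no substantial obstacle here: the only points requiring minor care are the boundary case in which a factor vanishes and making sure the convex-combination weights are read off correctly from the relation $\frac{1}{p}+\frac{1}{q}=1$. The inequality is saturated precisely when $a^{p}=b^{q}$, which follows from the equality condition in Jensen's inequality, although this refinement is not needed for the stated result.
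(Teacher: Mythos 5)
Your proof is correct: the case split for $a=0$ or $b=0$ is handled properly, and the main step — applying concavity of $\ln$ to $u=a^{p}$, $v=b^{q}$ with the convex weights $\frac{1}{p}$, $\frac{1}{q}$ and then using monotonicity — is the standard and complete derivation of Young's inequality. The paper itself offers no proof, citing the result directly from \cite{bernstein_matrix_2009}, so there is nothing to compare against; your argument stands on its own as a valid justification of the lemma as stated.
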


\subsection{Model formulation}

The identical general linear dynamics of agent $ i $ in the MAS is
\begin{equation}\label{eq:dynamics}
\begin{array}{rcl}
\dot x_{i} & = & Ax_{i}+Bu_{i},\\
y_{i} & = & Cx_{i}, \quad i =0, \ldots, N
\end{array}
\end{equation}
where $x_{i}=[x_{i1}, \ldots, x_{in}]^{T} \in \mathbb{R} ^{n}$ , $u_{i} \in \mathbb{R} ^{p}$ and $y_{i} \in \mathbb{R} ^{q}$ are the state, control input and measured output, respectively. $ A \in \mathbb{R} ^{n \times n}, B \in \mathbb{R} ^{n \times p} $ and $ C \in \mathbb{R} ^{q \times n} $ are constant matrices.
\begin{assumption}\label{assumptionABC}
	$ (A,B,C) $ is stabilizable and detectable.
\end{assumption}

Without loss of generality, suppose that agents in \eqref{eq:dynamics} indexed by $ 1, \ldots, N $ are the followers denoted as $ \mathbb{F} = \left\lbrace 1,\ldots,N\right\rbrace  $ and the agent indexed by $ 0 $ is the leader whose output information is only available to a small portion of followers.
Moreover, the leader  does not receive any information from the followers.

In the following, first the leader is regarded without control input, i.e., $ u_{0}=0 $,  which is a common assumption in many existing works on the distributed cooperative control of linear MAS \cite{wen_containment_2016,li_consensus_2013,lv_novel_2016}. 

However, as we know, where the whole system moves is decided by the leader and that is why the leader exists. Then where will the leader move? The answer is that a desired dynamic trajectory command is given to the leader to ask the leader to finish the desired trajectory tracking or that the leader moves anywhere it could, which requires the leader's control input to be nonzero.
Furthermore, $ u_{0}=0 $ means the leader is a virtual one and the desired trajectory has severe limitations because of the equation $ \dot x_{0}(t)  =  Ax_{0}(t) $ as  the system matrix $ A $ is unchangeable.
In real applications, the leader needs to regulate the final consensus trajectory. So its control input $ u_{0} $ will not be affected by followers. In this paper, we deal with the consensus control in a fully distributed fashion, which means $ u_{0} $ will not be accessible to any follower. This is more difficult than the case of $ u_{0}=0 $.
The leader with bounded control input $ u_{0}(t)$. will be presented in Section~\ref{section_bounded_input}.

Denote the TVF shape information for followers as 
$h\left(t\right)=[ h_{1}\left(t\right)^{T}, \ldots, h_{N}\left(t\right)^{T} ]^{T} \in \mathbb{R} ^{Nn}$ with $h_{i} (t) $ being piecewise continuously differentiable
\begin{equation}\label{h_dynamic}
\dot h_{i} (t) =  (A+BK_{1}) h_{i}(t)
\end{equation}
where $ K_{1} $ is a constant matrix to be designed. Designing $ K_{1} $ give us a freedom to design any TVF shape satisfying the equation \eqref{h_dynamic}. The detailed explanation of \eqref{h_dynamic} can be referred to our previous paper~\cite{jiang2017distributed}.

\begin{definition}\label{definitionformation}
	The system (\ref{eq:dynamics})  is said to achieve the output TVF tracking control if for any given initial states $ x_{i}(0), i \in \mathbb{F} $, there exists 
	\begin{equation}\label{eq:time-varying}
	\lim_{t \to \infty} \| y_{i}(t)-y_{0}(t)-Ch_{i}(t) \|=0.
	\end{equation}
\end{definition}

In this paper, each follower can get access to the relative output measurements which are expressed as
\begin{equation}\label{relative output}
\begin{aligned}
y_{ij} = y_{i}-y_{j}, \, y_{i0} = y_{i}-y_{0},  \quad i,j \in \mathbb{F}.
\end{aligned}
\end{equation} 

\section{Main results}\label{mainresults}
This section mainly focuses on how to design fully distributed adaptive protocols to address TVF control problems from undirected to directed topology, from formation stabilization to tracking and from a leader of no input to a one with bounded input. In a word, a unified framework of protocol designing is presented.

Section~\ref{sub_undirected} solved the formation stabilization and tracking problems under undirected topology based only on relative output measurements. After that, we modified the protocol of Section~\ref{sub_undirected} to address the same problem under directed topology in Section~\ref{undirected_direacted}. Unfortunately, the designed protocol is not perfect since every follower needs to know the leader's output, which is a heavy communication burden for the whole system. In order to relax this constraint and make the control effect perfect, the protocol design for TVF tracking problem under directed topology is divided into three steps. Firstly, the TVF stabilization problem under directed topology is solved in Section~\ref{sec_stabilizaiton} in which we do not take the leader into consideration. Then in Section~\ref{section_tracking}, we tackle the TVF tracking problem under directed topology with a leader of no control input. Finally, the extended case of a leader with bounded input is studied in Section~\ref{section_bounded_input}.

\subsection{Undirected formation tracking}\label{sub_undirected}

\begin{assumption}\label{assumptiontopology}
	The communication subgraph $ \tilde{\mathcal{G}} $ among followers is undirected with the adjacency matrix $ \tilde{\mathcal{A}} $. The graph $ \mathcal{G} $ of the whole system contains a spanning tree with the adjacency matrix $ \mathcal{A} $ where the leader acts as the root node. 
\end{assumption}

The objective here is to design the fully distributed protocol to make followers form the TVF shape and track the leader simultaneously based only on relative outputs under the undirected communication topology. To do this, the protocol for each follower $ i $ is proposed as
\begin{equation}\label{eq:protocol}
\begin{aligned}
u_{i} =& K_{1}h_{i}+K_{2}v_{i},\\
\dot v_{i} =& (A+BK_{2})v_{i}+F\left[  \sum_{j=1}^{N} a_{ij}c_{ij}(\bar{c}_{ij} -y_{ij})+d_{i}c_{i}(\bar{c}_{i} -y_{i0}) \right] ,\\
\dot c_{ij} =& k_{ij}a_{ij} (\bar{c}_{ij} -y_{ij})^{T} \Gamma (\bar{c}_{ij} -y_{ij}),\\
\dot c_{i} =& k_{i}d_{i}(\bar{c}_{i} -y_{i0})^{T} \Gamma (\bar{c}_{i} -y_{i0}), i \in \mathbb{F}
\end{aligned}
\end{equation}
where $ \bar{c}_{i}=C(v_{i}+h_{i}), \bar{c}_{ij}=\bar{c}_{i}-\bar{c}_{j}  $. $ K_{1}, K_{2}$ are the feedback gain matrices. $v_{i} \in \mathbb{R}^{n} $ is the observer state of follower $ i $ and $ a_{ij} $ is the $ (i,j) $-th entry of adjacency matrix $ \tilde{\mathcal{A}} $. $ c_{ij}(t) $ denotes the time-varying coupling weight between follower $ i $ and $ j $ with $ c_{ij}(0) = c_{ji}(0) \geq 0 $ and $ c_{i} \geq 0 $ denotes the coupling weight between follower $ i $ and the leader. $ k_{ij}=k_{ji}, k_{i} $ are positive constants and $ F \in \mathbb{R}^{n \times q}, \Gamma \in \mathbb{R}^{q \times q} $ are the feedback gain matrices to be determined. $ d_{i} $ satisfies $ d_{i}=1 $ if follower $ i $ can get information from the leader, otherwise $ d_{i}=0 $.

\begin{remark}
	The adaptive coupling weights $ c_{ij}(t)$ and $c_{i}(t) $ can release the constraint that some protocols need to know the minimal positive eigenvalue of the Laplacian matrix $ \mathcal{L} $, e.g., 
	in \cite{dong_time-varying_2016-1}, \cite{dong_distributed_2016}, \cite{wen_containment_2016}. In other words, $ c_{ij}(t)$ and $c_{i}(t) $ can make the protocol fully distributed under the undirected communication topology.
\end{remark}


\begin{theorem}\label{theorem_undrected}
	The fully distributed TVF tracking problem is solved with Assumptions \ref{assumptionABC} and \ref{assumptiontopology} under the protocol \eqref{eq:protocol} if $ A+BK_{2} $ is Hurwitz, $ \Gamma = I $ and $ F=-PC^{T} $, where  $ P^{-1} > 0 $ is a solution to the following linear matrix inequality (LMI)
	\begin{equation}\label{lmi}
	P^{-1} A+A^{T}P^{-1}-2C^{T}C <0.
	\end{equation}
	Moreover, the coupling weights $ c_{ij}(t), c_{i}(t), i,j \in \mathbb{F} $ converge to some finite steady-state values.
\end{theorem}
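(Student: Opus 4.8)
The plan is to pass to error coordinates in which the adaptive observer error obeys an autonomous, leader-free consensus-type equation, and then to run an adaptive Lyapunov argument in the spirit of fully distributed consensus designs. First I would introduce the tracking error $e_i = x_i - x_0 - h_i$ and the observer error $\xi_i = v_i - e_i$. Using $\dot x_0 = A x_0$ (since $u_0=0$), the shape dynamics \eqref{h_dynamic}, and the protocol \eqref{eq:protocol}, a direct computation gives $\dot e_i = (A+BK_2)e_i + BK_2\xi_i$. The crucial observation is that the measured residuals collapse onto $\xi$: since $\bar c_i = C(v_i+h_i)$ and $y_{i0}=C(x_i-x_0)$, one checks $\bar c_i - y_{i0} = C\xi_i$ and $\bar c_{ij}-y_{ij} = C(\xi_i-\xi_j)$, the leader state $x_0$ cancelling. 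Substituting these into $\dot v_i$ and subtracting $\dot e_i$ yields the autonomous error dynamics
\[ \dot\xi_i = A\xi_i + F\Big[\sum_{j=1}^N a_{ij}c_{ij}C(\xi_i-\xi_j) + d_i c_i\, C\xi_i\Big], \]
which decouples from $e_i$. I would also record that $a_{ij}=a_{ji}$, $k_{ij}=k_{ji}$ and $c_{ij}(0)=c_{ji}(0)$ force $c_{ij}(t)=c_{ji}(t)$, and that $\dot c_{ij},\dot c_i\ge 0$, so both weights are nondecreasing.

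Writing $\xi=[\xi_1^T,\dots,\xi_N^T]^T$, $D=\mathrm{diag}(d_1,\dots,d_N)$ and $\mathcal{H}=\tilde{\mathcal{L}}+D$ with $\tilde{\mathcal{L}}$ the follower Laplacian, Assumption \ref{assumptiontopology} makes $\mathcal{H}$ symmetric positive definite. With $F=-PC^T$ and $\Gamma=I$, I would take the Lyapunov candidate
\[ V = \sum_{i=1}^N \xi_i^T P^{-1}\xi_i + \frac{1}{2}\sum_{i=1}^N\sum_{j=1}^N \frac{(c_{ij}-\alpha)^2}{k_{ij}} + \sum_{i=1}^N \frac{(c_i-\alpha)^2}{k_i}, \]
where $\alpha>0$ is a constant used \emph{only} in the analysis and never in the protocol, which is precisely what keeps the design fully distributed. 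The key calculation is that the weight-error coefficients are chosen so that the possibly growing $c_{ij},c_i$ cancel in $\dot V$, leaving $\dot V = \xi^T\big(I_N\otimes(P^{-1}A+A^TP^{-1})\big)\xi - 2\alpha\,\xi^T(\mathcal{H}\otimes C^TC)\xi$. Bounding $\mathcal{H}\otimes C^TC \ge \lambda_{\min}(\mathcal{H})(I_N\otimes C^TC)$ and choosing $\alpha=1/\lambda_{\min}(\mathcal{H})$ gives $\dot V \le \xi^T\big(I_N\otimes(P^{-1}A+A^TP^{-1}-2C^TC)\big)\xi \le 0$ by the LMI \eqref{lmi}; writing $-Q$ for the negative definite bracket, $\dot V \le -\xi^T(I_N\otimes Q)\xi$.

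From here the conclusions follow. Since $V\ge 0$ is nonincreasing it is bounded, so all $\xi_i$, $c_{ij}$, $c_i$ are bounded; being bounded and monotone nondecreasing, $c_{ij}$ and $c_i$ converge to finite limits, which is the last assertion. Integrating $\dot V$ shows $\xi\in L_2$, and since $\xi$ and the weights are bounded, $\dot\xi$ is bounded, so Barbalat's lemma yields $\xi(t)\to 0$. Finally, because $A+BK_2$ is Hurwitz and $\xi_i\to 0$, the cascade $\dot e_i=(A+BK_2)e_i+BK_2\xi_i$ is input-to-state stable with vanishing input, so $e_i\to 0$ and hence $Ce_i = y_i-y_0-Ch_i\to 0$, which is exactly Definition \ref{definitionformation}. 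Existence of the required gains comes from Assumption \ref{assumptionABC}: stabilizability of $(A,B)$ furnishes $K_2$ making $A+BK_2$ Hurwitz, and detectability of $(A,C)$ guarantees a solution $P^{-1}>0$ of \eqref{lmi}.

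I expect the main obstacle to be twofold. The decisive modelling step is choosing $\xi_i=v_i-e_i$ so that $x_0$ drops out and the error dynamics become autonomous and leader-free; without it the analysis stays entangled with the inaccessible leader state. The decisive technical step is the Lyapunov construction: matching the constants in the weight-error terms so that the unbounded adaptive gains cancel exactly, after which the global quantity $\lambda_{\min}(\mathcal{H})$ can be absorbed into the analysis-only constant $\alpha$. Because the closed loop is non-autonomous (the weights vary in time), LaSalle's invariance principle does not apply directly, so I would rely on Barbalat's lemma to obtain $\xi\to 0$.
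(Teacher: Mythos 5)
Your proof is correct and follows essentially the same route the paper takes (the paper defers the details of this theorem to its prior work, but your error decomposition $\xi_i = v_i - (x_i - x_0 - h_i)$, the adaptive Lyapunov function with the analysis-only constant $\alpha$ absorbed into the $(c_{ij}-\alpha)^2$, $(c_i-\alpha)^2$ terms, and the cascade argument via Hurwitzness of $A+BK_2$ are exactly the template used in the paper's proofs of the directed cases). The only cosmetic difference is that you invoke Barbalat's lemma where the paper's analogous arguments use LaSalle's invariance principle; both close the argument here since the extended $(\xi, c_{ij}, c_i)$ system is autonomous and your $\dot V$ bound is the same.
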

\begin{proof}
	The proving and analyzing detail is referred to our previous work~\cite{jiang2017distributed}.
\end{proof}

For the special case where there is no leader $ (i.e., d_{i}=0, i \in \mathbb{F}) $, the communication graph $ \mathcal{G} $ in Assumption~\ref{assumptiontopology} becomes undirected and connected.
\begin{assumption}\label{assum_stabilization_undirected}
	The graph $ \mathcal{G} $ among agents is undirected and connected.
\end{assumption}

\begin{assumption}\label{assum_AB}
	(A,B) is stabilizable.
\end{assumption}

In the following, we give the definition for formation stabilization problem.
\begin{definition}\label{definitionformation_keeping}
	The MAS (\ref{eq:dynamics}) is said to achieve the output TVF stabilization if for any given initial states $ x_{i}(0), i \in \mathbb{F} $, there exists 
	\begin{equation}\label{eq:time-varyingnotracking}
	\lim_{t \to \infty} \| (y_{i}(t)-y_{j}(t))-C(h_{i}(t)-h_{j}(t)) \|=0.
	\end{equation}
\end{definition}

Then we present the following fully distributed adaptive protocol to address the TVF stabilization problem by using only the relative output information
\begin{equation}\label{eq:protocolstate}
\begin{aligned}
u_{i} =& K_{1}h_{i}+K_{2}v_{i},\\
\dot v_{i} =& (A+BK_{2})v_{i}+F   \sum_{j=1}^{N} a_{ij}c_{ij}(\bar{c}_{ij} -y_{ij}),\\
\dot c_{ij} =& k_{ij}a_{ij} (\bar{c}_{ij} -y_{ij})^{T} \Gamma (\bar{c}_{ij} -y_{ij}), i \in \mathbb{F}.
\end{aligned}
\end{equation}

\begin{corollary}
	The fully distributed TVF stabilization problem is solved with Assumptions \ref{assum_stabilization_undirected} and \ref{assum_AB} under the protocol \eqref{eq:protocolstate} if $ A+BK_{2} $ is Hurwitz, $ \Gamma = I $ and $ F=-PC^{T} $, where  $ P^{-1} > 0 $ is a solution to the LMI \eqref{lmi}.
	And $ c_{ij}(t), i,j \in \mathbb{F}$ converge to some finite steady-state values.
\end{corollary}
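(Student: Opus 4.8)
The plan is to treat this as the leaderless specialization ($d_i=0$) of Theorem~\ref{theorem_undrected} and carry out the observer-error analysis directly. First I would introduce the estimation error $\xi_i = x_i - v_i - h_i$ for each follower. Substituting \eqref{eq:protocolstate}, \eqref{eq:dynamics} and \eqref{h_dynamic}, the $BK_2 v_i$ and $BK_1 h_i$ feedback terms cancel pairwise against the corresponding terms in $\dot v_i$ and $\dot h_i$, and with $\Gamma=I$, $F=-PC^T$ together with the identity $\bar c_{ij}-y_{ij}=-C(\xi_i-\xi_j)$ one is left with the clean error dynamics $\dot\xi_i = A\xi_i - PC^TC\sum_j a_{ij}c_{ij}(\xi_i-\xi_j)$. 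The same identity turns the weight law into $\dot c_{ij}=k_{ij}a_{ij}\|C(\xi_i-\xi_j)\|^2\ge 0$, so each $c_{ij}$ is nondecreasing and the symmetry $c_{ij}=c_{ji}$ is preserved from the initialization. Setting $\zeta_i=x_i-h_i$, a short computation gives $\dot\zeta_i=(A+BK_2)\zeta_i-BK_2\xi_i$; since $(y_i-y_j)-C(h_i-h_j)=C(\zeta_i-\zeta_j)$, the objective \eqref{eq:time-varyingnotracking} reduces to $\zeta_i-\zeta_j\to 0$, which, because $A+BK_2$ is Hurwitz, will follow from a vanishing-perturbation argument once I establish the disagreement $\xi_i-\xi_j\to 0$.

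For the disagreement I would pass to $\tilde\xi_i=\xi_i-\tfrac1N\sum_k\xi_k$, which obeys the same dynamics (the averaged coupling term vanishes by symmetry) and satisfies $\sum_i\tilde\xi_i=0$. Let $\mathcal L$ denote the unweighted Laplacian of the connected undirected graph $\mathcal G$, with algebraic connectivity $\lambda_2>0$. I would take the candidate $V=\sum_i\tilde\xi_i^T P^{-1}\tilde\xi_i+\sum_{i<j}\tfrac{a_{ij}}{k_{ij}}(c_{ij}-\beta)^2$, where $\beta>0$ is a constant fixed later in the analysis only. Differentiating and symmetrizing the coupling terms via $a_{ij}c_{ij}=a_{ji}c_{ji}$, the $c_{ij}$-weighted contributions from the two pieces cancel and leave $\dot V=\sum_i\tilde\xi_i^T(P^{-1}A+A^TP^{-1})\tilde\xi_i-\beta\sum_{i,j}a_{ij}\|C(\tilde\xi_i-\tilde\xi_j)\|^2$. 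Writing the strict LMI \eqref{lmi} as $P^{-1}A+A^TP^{-1}=2C^TC-Q$ with $Q>0$, and using $\sum_{i,j}a_{ij}\|C(\tilde\xi_i-\tilde\xi_j)\|^2=2\tilde\xi^T(\mathcal L\otimes C^TC)\tilde\xi$, this becomes $\dot V=-\tilde\xi^T(I_N\otimes Q)\tilde\xi+2\tilde\xi^T[(I_N-\beta\mathcal L)\otimes C^TC]\tilde\xi$. Decomposing $\tilde\xi$ in the eigenbasis of $\mathcal L$ on the subspace orthogonal to $\mathbf 1$, where the eigenvalues are at least $\lambda_2$, the choice $\beta\ge 1/\lambda_2$ makes the second form nonpositive, so $\dot V\le -\lambda_{min}(Q)\|\tilde\xi\|^2\le 0$. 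The point is that $\beta$ never enters \eqref{eq:protocolstate}: it is only an analysis constant, so $\lambda_2$ is not needed in the controller and the protocol stays fully distributed.

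From $\dot V\le 0$ and $V\ge 0$ I obtain $V$ bounded, hence each $(c_{ij}-\beta)^2$ is bounded; being nondecreasing and bounded, every $c_{ij}$ converges to a finite limit, which is the last assertion of the corollary. Boundedness of $V$ also bounds $\tilde\xi$, whence the error dynamics bound $\dot{\tilde\xi}$, so $\|\tilde\xi\|^2$ is uniformly continuous; integrating $\dot V\le -\lambda_{min}(Q)\|\tilde\xi\|^2$ and invoking Barbalat's lemma yields $\tilde\xi\to 0$, i.e. $\xi_i-\xi_j\to 0$. Feeding this into $\tfrac{d}{dt}(\zeta_i-\zeta_j)=(A+BK_2)(\zeta_i-\zeta_j)-BK_2(\xi_i-\xi_j)$ with $A+BK_2$ Hurwitz gives $\zeta_i-\zeta_j\to 0$, hence $C(\zeta_i-\zeta_j)\to 0$, which is exactly \eqref{eq:time-varyingnotracking}. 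The step I expect to be most delicate is the Lyapunov derivative: because $C^TC$ is generally rank deficient, the coupling term only controls $C\tilde\xi$, so the full-state convergence $\tilde\xi\to 0$ must be extracted from the $-Q$ term furnished by the \emph{strict} inequality in \eqref{lmi}, and the whole argument hinges on proving the adaptive weights bounded so that the hidden constant $\beta$ may be taken as large as the analysis demands without ever appearing in the protocol.
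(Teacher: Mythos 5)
Your proof is correct, and since the paper omits this proof entirely (deferring to Theorem \ref{theorem_undrected}, whose own proof is outsourced to \cite{jiang2017distributed}), you have actually supplied the missing argument in full. Your route is the natural leaderless specialization and belongs to the same family as the paper's other proofs: a quadratic term in the observer error plus a sum of $(c_{ij}-\beta)^2$ terms, cancellation of the unknown adaptive gains between the two pieces, a free analysis-only constant ($\beta$ here, $\alpha$ in the paper's $V_1,V_2,V_3$) chosen large relative to a connectivity quantity that never enters the protocol, and the strict LMI \eqref{lmi} to close the estimate. The one structural difference is how you handle the absence of a leader: the paper's leaderless proof (Theorem \ref{theorem_stabilization}) works with the Laplacian-filtered signals $\psi,\eta,\varrho=(\mathcal L\otimes I_n)(v-\bar x)$ and Lemma \ref{lemma_stronglyconnected}, whereas you subtract the network average to form the disagreement vector $\tilde\xi$ and restrict the quadratic form $(I_N-\beta\mathcal L)\otimes C^TC$ to $\mathbf 1^\perp$, which is cleaner in the undirected case because $\mathcal L$ is symmetric and no left eigenvector weighting $R$ is needed. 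You also invoke Barbalat's lemma where the paper uses LaSalle; for this adaptive (state-augmented) system Barbalat is arguably the safer tool, and your verification of the uniform continuity of $\|\tilde\xi\|^2$ via boundedness of $c_{ij}$ and $\dot{\tilde\xi}$ is exactly the step that justifies it. The delicate points you flag --- rank deficiency of $C^TC$ forcing the full-state decay to come from the strict inequality in \eqref{lmi}, and the preservation of the symmetry $c_{ij}=c_{ji}$ so the averaged coupling vanishes --- are handled correctly.
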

The proof here is similar and thus is omitted for conciseness.

\subsection{Directed formation tracking with full access to the leader}\label{undirected_direacted}

\begin{assumption}\label{assumptiondirected}
	The graph $ \mathcal{G} $ contains a directed spanning tree where the leader acts as the root node.
\end{assumption}    

As we know, each follower has access to a weighted linear combination of relative outputs between itself and its neighbors. The network measurement for follower $ i $ can be synthesized as a single signal
\begin{equation}\label{network measurement}
\tilde{y}_{ij0}= \sum_{j=1}^{N} a_{ij}y_{ij}+d_{i}y_{i0}, i \in \mathbb{F}
\end{equation}
where $ y_{ij}$ and $ y_{i0} $ are defined in~\eqref{relative output}.
In Section \ref{sub_undirected} the key point to solve the formation tracking problem is $c_{ij}(t)=c_{ji}(t) $ in protocol \eqref{eq:protocol} due to the property of undirected topology $ \tilde{\mathcal{G}} $ in Assumption~\ref{assumptiontopology}, namely, $a_{ij}=a_{ji} $. But for directed topology, the adjacency matrix $ \mathcal{A} $ does not have the symmetric property, i.e. $a_{ij}\neq a_{ji} $. So the parameter $c_{ij}(t) $, which is the time-varying coupling weight between follower $ i $ and $ j $, will not be suitable for the protocol design of directed topology. In addition, note that parameter $ c_{i}(t) $ denotes the coupling weight between follower $ i $ and the leader. Based on the above finding, we replace the parameters $ c_{ij}(t) $ and $ c_{i}(t) $ by one parameter $ c_{i}(t) \geq 0 $ denoting the time-varying coupling weight associated with the $ i $-th follower, and modify the protocol \eqref{eq:protocol} to a new one in the following form
\begin{equation}\label{eq:protocol_directed_notgood}
\begin{aligned}
u_{i} =& K_{1}h_{i}+K_{2}v_{i},\\
\dot v_{i} =& (A+BK_{2})v_{i}+F (c_{i}+\rho_{i})\left(   \sum_{j=1}^{N} a_{ij}\bar{c}_{ij}+d_{i}\bar{c}_{i} -  \tilde{y}_{ij0} \right) \\
\dot c_{i} =& (\bar{c}_{i}-y_{i0})^{T} \Gamma (\bar{c}_{i}-y_{i0}), i \in \mathbb{F}
\end{aligned}
\end{equation}
where  $ \bar{c}_{i}=C(v_{i}+h_{i}), \bar{c}_{ij}=\bar{c}_{i}-\bar{c}_{j}  $ and $ \rho_{i} $ is a smooth and monotonically increasing function to be determined later. Other parameters are the same as in \eqref{eq:protocol}. 

Define the TVF tracking error $ \tilde x_{i}=x_{i}-h_{i}-x_{0} $ and the observer error $ e_{i}=\tilde{x}_{i}-v_{i}, e=\left[ e_{1}^{T}, \ldots , e_{N}^{T} \right] ^{T} $. 
we give the following theorem to solve the TVF tracking problem under directed topology.

\begin{theorem}\label{theorem_drected_notgood}
	The fully distributed TVF tracking problem is solved with Assumptions \ref{assumptionABC} and \ref{assumptiondirected} under the protocol \eqref{eq:protocol_directed_notgood} if $ A+BK_{2} $ is Hurwitz, $ \Gamma = I $, $ F=-P C^{T} $ and $ \rho_{i}=e_{i}P^{-1}e_{i} $, where $ P ^{-1} > 0 $ is a solution to the LMI~\eqref{lmi}.
	Moreover, the coupling weight $ c_{i} (t), i \in \mathbb{F}$ converge to some finite steady-state values.
\end{theorem}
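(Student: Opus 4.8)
The plan is to reduce the whole statement to showing that the stacked observer error $e$ converges to zero, and then to propagate this to the formation-tracking error through the Hurwitz matrix $A+BK_{2}$. First I would introduce the error coordinates $\tilde x_{i}=x_{i}-h_{i}-x_{0}$ and $e_{i}=\tilde x_{i}-v_{i}$. Using $u_{0}=0$, \eqref{eq:dynamics}, \eqref{h_dynamic} and the protocol \eqref{eq:protocol_directed_notgood}, a direct computation gives $\dot{\tilde x}_{i}=(A+BK_{2})\tilde x_{i}-BK_{2}e_{i}$. The crucial step is the pair of identities $\bar c_{i}-y_{i0}=-Ce_{i}$ and $\bar c_{ij}-y_{ij}=C(e_{j}-e_{i})$, which both follow from $v_{i}+h_{i}-x_{i}=-x_{0}-e_{i}$. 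These collapse the network-measurement bracket in \eqref{eq:protocol_directed_notgood} to $-C\xi_{i}$ with $\xi_{i}=\sum_{j}a_{ij}(e_{i}-e_{j})+d_{i}e_{i}$, so that, with $F=-PC^{T}$, the observer error obeys $\dot e_{i}=Ae_{i}-\mu_{i}PC^{T}C\xi_{i}$ where $\mu_{i}:=c_{i}+\rho_{i}$, and the adaptive law becomes $\dot c_{i}=\|Ce_{i}\|^{2}\ge 0$, so each $c_{i}$ is monotone nondecreasing. In stacked form this reads $\dot e=(I_{N}\otimes A)e-(M_{\mu}\mathcal H\otimes PC^{T}C)e$, with $\mathcal H$ the leader--follower interaction matrix (the follower Laplacian plus $\mathrm{diag}(d_{i})$) and $M_{\mu}=\mathrm{diag}(\mu_{i})$.

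Second, by Assumption \ref{assumptiondirected} the matrix $\mathcal H$ is a nonsingular $M$-matrix, so Lemma \ref{lemma_M_matrix} supplies a positive diagonal $G=\mathrm{diag}(g_{i})$ with $G\mathcal H+\mathcal H^{T}G>0$. I would take the Lyapunov candidate $V=\sum_{i}g_{i}\rho_{i}+\sum_{i}\tfrac{g_{i}}{2}(c_{i}-\alpha)^{2}=e^{T}(G\otimes P^{-1})e+\sum_{i}\tfrac{g_{i}}{2}(c_{i}-\alpha)^{2}$, with $\alpha>0$ a constant to be fixed. Differentiating and invoking the LMI \eqref{lmi} in the form $e_{i}^{T}(P^{-1}A+A^{T}P^{-1})e_{i}\le 2\|Ce_{i}\|^{2}-\epsilon_{0}\|e_{i}\|^{2}$ converts the drift into $\|Ce_{i}\|^{2}$-type quantities. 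Writing $z_{i}=Ce_{i}$ and using $C\xi_{i}=(\mathcal H z)_{i}$, the remaining task is to show that $2\sum_{i}g_{i}\|z_{i}\|^{2}+\sum_{i}g_{i}(c_{i}-\alpha)\|z_{i}\|^{2}-2\sum_{i}g_{i}\mu_{i}z_{i}^{T}(\mathcal H z)_{i}\le 0$.

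The hard part will be exactly this last bound, and it is where the directed topology bites: because $M_{\mu}\mathcal H$ is not symmetric and $\mu_{i}$ is node dependent, the naive symmetrization $GM_{\mu}\mathcal H+\mathcal H^{T}M_{\mu}G$ need not be sign definite, so the argument used in the undirected case (Theorem \ref{theorem_undrected}) no longer applies. This is precisely the role of the nonlinear gain $\rho_{i}=e_{i}^{T}P^{-1}e_{i}$: since $\rho_{i}\ge\kappa\|z_{i}\|^{2}$ with $\kappa=\lambda_{\min}(P^{-1})/\|C\|^{2}>0$, the diagonal contribution $-2\sum_{i}g_{i}\mu_{i}\mathcal H_{ii}\|z_{i}\|^{2}$ carries a quartic negative term $-2\kappa\sum_{i}g_{i}\mathcal H_{ii}\|z_{i}\|^{4}$, which via Young's inequality (Lemma \ref{lemma_pq}) dominates the asymmetric cross terms $2\sum_{i\ne j}g_{i}\mu_{i}a_{ij}z_{i}^{T}z_{j}$; the $c_{i}$-dependent part of the coupling, together with $\sum_{i}g_{i}(c_{i}-\alpha)\|z_{i}\|^{2}$, is then controlled through $G\mathcal H+\mathcal H^{T}G>0$ and a sufficiently large $\alpha$. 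This would yield $\dot V\le-\epsilon_{0}\sum_{i}g_{i}\|e_{i}\|^{2}\le 0$, so $V$ is bounded; hence $e$ and every $c_{i}$ are bounded, and being monotone and bounded each $c_{i}$ converges to a finite limit, which is the last assertion of the theorem.

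Finally I would extract convergence. Boundedness of $V$ with the integral bound on $\dot V$ gives $\int_{0}^{\infty}\|e_{i}\|^{2}\,dt<\infty$; together with boundedness of $\dot e$ (all signals being bounded), Barbalat's lemma yields $e(t)\to 0$. With $e\to 0$, the system $\dot{\tilde x}_{i}=(A+BK_{2})\tilde x_{i}-BK_{2}e_{i}$ is a Hurwitz system (since $A+BK_{2}$ is Hurwitz) driven by the vanishing input $-BK_{2}e_{i}$, so $\tilde x_{i}(t)\to 0$ by a standard vanishing-perturbation argument. Therefore $\|y_{i}-y_{0}-Ch_{i}\|=\|C\tilde x_{i}\|\to 0$, which is exactly the requirement of Definition \ref{definitionformation}, completing the proof.
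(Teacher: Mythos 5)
Your setup is correct and coincides with the paper's: the error coordinates $\tilde x_i=x_i-h_i-x_0$, $e_i=\tilde x_i-v_i$, the identities $\bar c_i-y_{i0}=-Ce_i$ and $\bar c_{ij}-y_{ij}=C(e_j-e_i)$, the resulting dynamics $\dot e_i=Ae_i-(c_i+\rho_i)PC^TC\xi_i$ with $\xi=(\hat{\mathcal L}\otimes I_n)e$, $\hat{\mathcal L}=\mathcal L+\mathcal D$, the use of Lemma~\ref{lemma_M_matrix} to get $G$, and the closing steps ($c_i$ monotone and bounded hence convergent; $e\to 0$ implies $\tilde x_i\to 0$ through the Hurwitz matrix $A+BK_2$) are all as in the paper, up to your use of Barbalat where the paper invokes LaSalle. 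The gap is in the central Lyapunov step. Your candidate $V=\sum_i g_i\rho_i+\tfrac12\sum_i g_i(c_i-\alpha)^2=e^T(G\otimes P^{-1})e+\cdots$ is not the one that makes the directed case close, and your proposed repair does not work. The paper's function \eqref{v2} is $V_1=\tfrac12\sum_i g_i(2c_i+\rho_i)\rho_i+\tfrac12\sum_i g_i(c_i-\alpha)^2$: the quadratic-in-$\rho_i$ weighting is deliberate, because $\tfrac{d}{dt}\bigl[\tfrac12(2c_i+\rho_i)\rho_i\bigr]=(c_i+\rho_i)\dot\rho_i+\rho_i\dot c_i$ reproduces exactly the time-varying gain $(c_i+\rho_i)$ of the protocol in front of $\dot\rho_i$, so that the coupling term in \eqref{L_derivative_notgood} appears with the factor $(\hat c+\hat\rho)$ on \emph{both} sides of $G\hat{\mathcal L}+\hat{\mathcal L}^TG$, i.e.\ as the two-sided (diagonal) congruence $(\hat c+\hat\rho)(G\hat{\mathcal L}+\hat{\mathcal L}^TG)(\hat c+\hat\rho)\ge\lambda_0(\hat c+\hat\rho)^2$, after which Lemma~\ref{lemma_pq} and a large $\alpha$ finish the job. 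Your $V$ produces instead the one-sided quantity $-2\,e^T[G(\hat c+\hat\rho)\hat{\mathcal L}\otimes C^TC]e$, whose symmetrization $G(\hat c+\hat\rho)\hat{\mathcal L}+\hat{\mathcal L}^TG(\hat c+\hat\rho)$ is, as you yourself note, not sign definite for node-dependent gains on a directed graph.

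The quartic-domination patch you offer to rescue this does not close, for two concrete reasons. First, $\rho_i=e_i^TP^{-1}e_i$ is only \emph{lower}-bounded by $\kappa\|Ce_i\|^2$; it is not upper-bounded by any multiple of $\|z_i\|^2=\|Ce_i\|^2$ unless $C$ is injective, so the cross terms $2g_i\rho_i a_{ij}z_i^Tz_j$ cannot be converted into quantities dominated by $-\kappa\|z_i\|^4$: after Young's inequality you are left with contributions like $g_i\rho_i\|z_j\|^2$, where a gain depending on agent $i$'s full state multiplies agent $j$'s output error, and no negative term of that form is available. Second, the same mismatch occurs for the $c_i$-part: the $c_i$ are a priori unbounded (boundedness is what you are trying to prove), and the off-diagonal terms $g_ic_ia_{ij}z_i^Tz_j$ leave residues $g_ic_i a_{ij}\|z_j\|^2$ that the inequality $G\hat{\mathcal L}+\hat{\mathcal L}^TG>0$ cannot absorb, since that inequality controls $\sum_i g_i z_i^T(\hat{\mathcal L}z)_i$ only for a \emph{common} scalar gain, not for distinct $c_i$'s weighting different rows. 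The missing idea is precisely the $(2c_i+\rho_i)\rho_i$ Lyapunov structure (standard in fully distributed adaptive consensus over directed graphs), which manufactures the squared gain $(\hat c+\hat\rho)^2$ needed to dominate both the asymmetry and the linear-in-gain term via Lemma~\ref{lemma_pq}. You should rebuild the estimate around \eqref{v2}; the remainder of your argument then goes through essentially as you wrote it.
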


\begin{proof}
	From $ y_{i}(t)-y_{0}(t)-Ch_{i}(t)=C \tilde{x}_{i}$, it follows \eqref{eq:time-varying} that the objective is to prove $ \lim_{t \to \infty} \tilde{x}_{i}=0, i \in \mathbb{F} $. Using \eqref{eq:dynamics}, \eqref{h_dynamic} and \eqref{eq:protocol_directed_notgood}, the TVF tracking error $ \tilde{x}_{i} $ satisfies
	\begin{equation}\label{tildex}
	\dot{\tilde{x}}_{i}=A \tilde{x}_{i}+BK_{2}v_{i}=(A+BK_{2})\tilde{x}-BK_{2}e_{i}.
	\end{equation}
	The objective now changes to prove $ \lim_{t \to \infty}e_{i}=0 $ such that $ \lim_{t \to \infty} \tilde{x}_{i}=0, i \in \mathbb{F} $ since $ A+BK_{2} $ is Hurwitz. 
	Using \eqref{tildex} and \eqref{eq:protocol_directed_notgood}, the system \eqref{eq:dynamics} can be rewritten in the following form
	\begin{equation}\label{e_dynamic_notgood}
	\begin{array}{rcl}
	\dot e_{i} &=& Ae_{i}+FC(c_{i}+\rho_{i})( \sum_{j=1}^{N} l_{ij}e_{i}+d_{i}e_{i}), \\[2ex]
	\dot c_{i} &=& e_{i}^{T} C^{T} \Gamma C e_{i}.
	\end{array}
	\end{equation}
	
	Let
	\begin{equation}\label{v2}
	V_{1}=\frac{1}{2}\sum_{i=1}^{N} g_{i}(2c_{i}+\rho_{i})\rho_{i}+\frac{1}{2}\sum_{i=1}^{N} g_{i}(c_{i}- \alpha)^{2}
	\end{equation}
	where $ g_{i} >0, i \in \mathbb{F} $ is defined in Lemma~\ref{lemma_M_matrix}.
	It follows from $ c_{i}(0) >0 $ and $\dot{c}_{i}(t)>0 $ that $ c_{i}(t) >0, \forall t>0 $.
	$ \alpha $ is a positive constant to be determined. Noting further that $ \rho_{i} \geq 0 $, thus $ V_{1} $ is positive definite. Then
	\begin{equation}\label{L_derivative_notgood}
	\begin{aligned}
	\dot{V}_{1} 
	=& \displaystyle \sum_{i=1}^{N} [g_{i}(c_{i}+\rho_{i}) \dot{\rho}_{i} +g_{i}\rho_{i}\dot{c}_{i}+g_{i}(c_{i}-\alpha)\dot{c}_{i}]\\
	=& e^{T} [G(\hat{c}+\hat{\rho}) \otimes(P^{-1}A+A^{T}P^{-1})  +(\hat{c}+\hat{\rho})
	(G\hat{\mathcal{L}} + \hat{\mathcal{L}}^{T}G)(\hat{c}+\hat{\rho}) \otimes P^{-1}FC + G(\hat{c}+\hat{\rho} - \alpha I) \otimes C^{T} \Gamma C  ] e \\
	\le & e^{T} [G(\hat{c}+\hat{\rho}) \otimes(P^{-1}A+A^{T}P^{-1})  -\lambda_{0}(\hat{c}+\hat{\rho})^{2} \otimes C^{T}C+ G(\hat{c}+\hat{\rho} - \alpha I) \otimes C^{T} C ] e
	\end{aligned}
	\end{equation}
	where  $ \hat{c} = diag(c_{1}, \ldots, c_{N}), \hat{\rho} = diag(\rho_{1},\ldots,\rho_{N}) $, $ \mathcal{D} = diag\{d_{1}, \ldots, d_{N}\} $ and $ \hat{\mathcal{L}}= \mathcal{L}+\mathcal{D}$. The Laplacian matrix $\mathcal{L} $ is corresponding to the subgraph among followers. It is known that $ \mathcal{D} > 0 $ with at least one diagonal entry being positive since at least one follower can get information from the leader. Then $ \hat{\mathcal{L}} $ is a $ M $-matrix with graph $ \mathcal{G} $ satisfying Assumption~\ref{assumptiondirected}, which means all eigenvalues of $ \hat{\mathcal{L}} $ have positive real parts \cite{hong_tracking_2006}. Furthermore, from Lemma \ref{lemma_M_matrix} there exists
	$ G = diag(g_{1}, \ldots, g_{N}) > 0 $ such that $ G\hat{\mathcal{L}}+\hat{\mathcal{L}}^{T}G \ge \lambda_{0} I $ where $ \lambda_{0} $ is the smallest positive eigenvalue of $ G\hat{\mathcal{L}}+\hat{\mathcal{L}}^{T}G $. Using Lemma \ref{lemma_pq} we get
	\begin{equation}\label{pq_notgood}
	e^{T}[G(\hat{c}+\hat{\rho}) \otimes C^{T} C] e \le e^{T}[(\frac{\lambda_{0}}{2} (\hat{c}+\hat{\rho})^{2} + \frac{G^{2}}{2\lambda_{0}}) \otimes C^{T} C]e.
	\end{equation}
	Substituting \eqref{pq_notgood} into \eqref{L_derivative_notgood} we have
	\begin{equation}
	\begin{aligned}
	\dot{V}_{3} \le & e^{T} [G(\hat{c}+\hat{\rho}) \otimes(P^{-1}A+A^{T}P^{-1})  - (\frac{\lambda_{0}}{2} (\hat{c}+\hat{\rho})^{2} - \frac{G^{2}}{2\lambda_{0}} +\alpha G) \otimes C^{T} C]e.
	\end{aligned} 
	\end{equation}
	Choosing $ \alpha \ge \max_{i \in \mathbb{F}} \frac{5g_{i}}{\sqrt{2\lambda_{0}}} $, we obtain
	\begin{equation}
	\dot{V}_{1} \le  e^{T} [G(\hat{c}+\hat{\rho}) \otimes \mathcal{X}]e
	\le  0
	\end{equation}
	where the last inequality comes directly from the LMI \eqref{lmi} which is $ \mathcal{X}=P^{-1}A+A^{T}P^{-1} -2C^{T} C <0 $. 
	Then $ V_{1}(t) $ is bounded and so is $ c_{i}(t) $. Each coupling weight $ c_{i}(t) $ increases monotonically and converges to some finite value finally. Note that $ \dot{V}_{1}\equiv 0 $ is equivalent to $ e =0 $. By LaSalle's Invariance principle \cite{krstic_nonlinear_1995}, it follows that $ \lim_{t \to \infty} e_{i}=0 $ such that $ v_{i} \rightarrow \tilde{x}_{i} $ as $ t \rightarrow \infty $, which means the function of  each follower's distributed observer $ v_{i} $ in \eqref{eq:protocol_directed_notgood} is to estimate its own TVF tracking error $ \tilde{x}_{i} $. 
	
	Since $ A+BK_{2} $ is Hurwitz and $ \lim_{t \to \infty} e_{i}=0 $, from \eqref{tildex} we have $ \lim_{t \to \infty} \tilde{x}_{i}=0, i \in \mathbb{F} $, i.e., the distributed  TVF tracking problem under the directed topology satisfying Assumption \ref{assumptiondirected} is solved.
\end{proof}

\begin{remark}
	Note that in order to calculate $ c_{i}(t) $ in protocol \eqref{eq:protocol_directed_notgood}, each follower $ i $ requires the knowledge of $ y_{i0} $ in \eqref{relative output}, namely the relative output measurement between the follower $ i $ and the leader. It means every follower needs to know the leader's output information, in other words, $ d_{i}>0, \forall i \in \mathbb{F} $, which is a stringent communication constraint and will increase communication cost heavily. We will solve the TVF tracking problem where the leader's output information is only available to a small subset of followers in the following sections.
\end{remark}

\subsection{Directed formation stabilization}\label{sec_stabilizaiton}

\begin{assumption}\label{assumption_stabilization}
	The communication graph $ \mathcal{G} $ is strongly connected.
\end{assumption} 

In Section \ref{undirected_direacted} we solved the leader-follower TVF tracking problem with directed spanning tree topology, but it requires each follower to know the leader's output information. In order to relax this severe constraint, we start to solve the formation stabilization problem first, namely without the leader. The inspiration comes from the last section. Recall the equation of adaptive parameter $ c_{i} $ in protocol \eqref{eq:protocol_directed_notgood} of Section \ref{undirected_direacted} as
\begin{equation*}
\dot c_{i} = (\tilde{x}_{i}-v_{i})^{T} C^{T} \Gamma C (\tilde{x}_{i}-v_{i}).
\end{equation*}

\begin{remark}
	It is obvious that the observer $ v_{i} $ in \eqref{eq:protocol_directed_notgood} is used to estimate the formation tracking error $ \tilde{x}_{i} = x_{i} -h_{i} -x_{0} $. For the formation stabilization problem without a leader, it is natural to design $ v_{i} $ to estimate $ \bar{x}_{i} = x_{i} -h_{i} $.
\end{remark}

Our goal in this section is to design the fully distributed protocol based only on the relative output measurements to make the system form a shape, namely, making agents $ i $ and $ j $ satisfy formation stabilization definition \eqref{eq:time-varyingnotracking}. Similar as the network measurement \eqref{network measurement}, we define two signals as
\begin{equation}\label{psi_eta}
\psi_{i}=\sum_{j=1}^{N} a_{ij}(v_{i}-v_{j}), \,
\eta_{i}=\sum_{j=1}^{N} a_{ij}(\bar{x}_{i}-\bar{x}_{j}).
\end{equation}
Denote $ \psi=[\psi_{1}^{T}, \ldots , \psi_{N}^{T}]^{T},  \eta=[\eta_{1}^{T}, \ldots , \eta_{N}^{T}]^{T}$, then $ \eta = (\mathcal{L} \otimes I_{n})\bar{x} $, where $ \mathcal{L} $ is the Laplacian matrix corresponding to the graph $ \mathcal{G} $ satisfying Assumption~\ref{assumption_stabilization}. Under Lemma~\ref{lemma_ren2005}, $ \mathcal{L} $ has a zero eigenvalue and other eigenvalues with positive real parts. From Definition \ref{definitionformation_keeping} we can say that the TVF stabilization problem is solved if $ \lim_{t \to \infty} \eta =0 $. So $ \eta $ can be viewed as formation stabilization error in this section. 

The fully distributed adaptive protocol based only on relative output measurements is proposed for each agent $ i $ as
\begin{equation}\label{eq:protocol_directed_stabilization}
\begin{aligned}
u_{i} =& K_{1}h_{i}+K_{2}v_{i},\\
\dot v_{i} =& (A+BK_{2})v_{i}+F (c_{i}+\rho_{i}) \sum_{j=1}^{N} a_{ij}(\bar{c}_{ij}-y_{ij}), \\
\dot c_{i} =& \left[ \sum_{j=1}^{N} a_{ij}(\bar{c}_{ij}-y_{ij})\right] ^{T}  \Gamma \sum_{j=1}^{N} a_{ij}(\bar{c}_{ij}-y_{ij})
\end{aligned}
\end{equation}
where $   \bar{c}_{i}=C(v_{i}+h_{i}), \bar{c}_{ij}=\bar{c}_{i}-\bar{c}_{j}  $ and other parameters are defined similarly as protocol \eqref{eq:protocol_directed_notgood}. By substituting \eqref{psi_eta} into \eqref{eq:protocol_directed_stabilization} we can write protocol \eqref{eq:protocol_directed_stabilization} as
\begin{equation}
\begin{aligned}
u_{i} =& K_{1}h_{i}+K_{2}v_{i},\\
\dot v_{i} =& (A+BK_{2})v_{i}+FC (c_{i}+\rho_{i}) (\psi_{i}-\eta_{i}), \\
\dot c_{i} =& (\psi_{i}-\eta_{i})^{T} C^{T} \Gamma C (\psi_{i}-\eta_{i}).
\end{aligned}
\end{equation}

Note that the term $ C (\psi_{i}-\eta_{i}) $ implies that each agent needs to receive the virtual outputs $ Cv_{j} $ and $ C\bar{x}_{j} $ from its neighbors via the communication graph $ \mathcal{G} $ satisfying Assumption~\ref{assumption_stabilization}. Let $ \varrho_{i} = \psi_{i}- \eta_{i}, \varrho= [\varrho_{1}^{T}, \ldots , \varrho_{N}^{T}]^{T}$, then we combine \eqref{psi_eta}, \eqref{eq:protocol_directed_stabilization}, \eqref{eq:dynamics} and get
\begin{equation}\label{psi_eta_varrho}
\begin{aligned}
\dot{\psi} =& [I_{N} \otimes (A+BK_{2})] \psi +[\mathcal{L} (\hat{c}+\hat{\rho}) \otimes FC]\varrho,\\
\dot{\eta}=& [I_{N} \otimes (A+BK_{2})] \eta +(I_{N} \otimes BK_{2})\varrho,\\
\dot{\varrho} =&  [I_{N} \otimes A + \mathcal{L} (\hat{c}+\hat{\rho}) \otimes FC]\varrho.
\end{aligned}
\end{equation}


\begin{theorem}\label{theorem_stabilization}
	Suppose Assumptions \ref{assumptionABC} and \ref{assumption_stabilization} hold, the fully distributed TVF stabilization problem is solved under the protocol \eqref{eq:protocol_directed_stabilization} if $ A+BK_{2} $ is Hurwitz, $ \Gamma = I $, $ F=-Q^{-1} C^{T} $ and $ \rho_{i}=\varrho_{i}^{T}Q\varrho_{i}$, where $ Q > 0 $ is a solution to the LMI
	\begin{equation}\label{lmi_stabilication}
	Q A+A^{T}Q-2C^{T}C <0.
	\end{equation}
	And $ c_{i} (t) $ converge to some finite steady-state values.
\end{theorem}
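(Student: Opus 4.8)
The plan is to turn formation stabilization into a statement about the composite disagreement variable $\varrho$. From \eqref{psi_eta_varrho}, $\eta$ obeys $\dot\eta=[I_N\otimes(A+BK_2)]\eta+(I_N\otimes BK_2)\varrho$, a Hurwitz system ($A+BK_2$ Hurwitz) driven by $\varrho$; hence if I can prove $\lim_{t\to\infty}\varrho(t)=0$, then $\eta\to0$ by the vanishing-input argument, and since the objective \eqref{eq:time-varyingnotracking} is equivalent to $\eta\to0$, the theorem follows. So I would focus entirely on the autonomous subsystem $\dot\varrho=[I_N\otimes A+\mathcal{L}(\hat c+\hat\rho)\otimes FC]\varrho$. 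Two preliminary facts are needed: first, with $F=-Q^{-1}C^{T}$ and $\Gamma=I$ one computes $\dot\rho_i=\varrho_i^{T}(QA+A^{T}Q)\varrho_i-2\sum_j\mathcal{L}_{ij}(c_j+\rho_j)\,\varrho_i^{T}C^{T}C\varrho_j$ and $\dot c_i=\varrho_i^{T}C^{T}C\varrho_i$; second, since $\varrho=(\mathcal{L}\otimes I)(v-\bar x)$ lies in the range of $\mathcal{L}\otimes I$, the left-eigenvector relation $r^{T}\mathcal{L}=0$ of Lemma~\ref{lemma_stronglyconnected} yields the key constraint $\sum_i r_i\varrho_i=0$.

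Following the template of Theorem~\ref{theorem_drected_notgood}, I would take the Lyapunov function $V=\tfrac12\sum_i r_i(2c_i+\rho_i)\rho_i+\tfrac12\sum_i r_i(c_i-\alpha)^2$, where $r_i>0$ are the weights of Lemma~\ref{lemma_stronglyconnected} and $\alpha>0$ will be fixed later; the initial condition $c_i(0)>0$ together with $\dot c_i\ge0$ keeps $c_i>0$ and $\rho_i\ge0$, so $V$ is positive definite. Writing $\Phi=\hat c+\hat\rho>0$ and $\phi_i=c_i+\rho_i$ and collecting terms, the only indefinite contribution is the cross term $-2\sum_{i,j}r_i\phi_i\mathcal{L}_{ij}\phi_j\,\varrho_i^{T}C^{T}C\varrho_j$. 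Since $R$ and $\Phi$ are diagonal and therefore commute, symmetrising this scalar turns it into $-\varrho^{T}[\Phi\hat{\mathcal{L}}\Phi\otimes C^{T}C]\varrho$ with $\hat{\mathcal{L}}=R\mathcal{L}+\mathcal{L}^{T}R$ the symmetrised Laplacian of Lemma~\ref{lemma_stronglyconnected}, giving $\dot V=\varrho^{T}[R\Phi\otimes(QA+A^{T}Q)]\varrho-\varrho^{T}[\Phi\hat{\mathcal{L}}\Phi\otimes C^{T}C]\varrho+\varrho^{T}[R(\Phi-\alpha I)\otimes C^{T}C]\varrho$.

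The main obstacle is that $\hat{\mathcal{L}}$ here is the \emph{singular} Laplacian of a connected graph, so the uniform bound $\hat{\mathcal{L}}\ge\lambda_0 I$ used in Theorem~\ref{theorem_drected_notgood} is unavailable; yet simply discarding the cross term destroys the quadratic-in-$\Phi$ damping that must dominate the growing adaptive gains. The device I would use is to apply the spectral estimate of Lemma~\ref{lemma_stronglyconnected} to the stacked vector with blocks $\zeta_i=\phi_i C\varrho_i$, choosing the \emph{state-dependent} positive test vector $\chi_i=r_i/\phi_i$: because $\sum_i\chi_i\zeta_i=C\sum_i r_i\varrho_i=0$, the lemma gives, componentwise, $\varrho^{T}[\Phi\hat{\mathcal{L}}\Phi\otimes C^{T}C]\varrho=\zeta^{T}(\hat{\mathcal{L}}\otimes I_q)\zeta\ge\tfrac{\lambda_2(\hat{\mathcal{L}})}{N}\,\varrho^{T}[\Phi^{2}\otimes C^{T}C]\varrho$, which recovers the decisive $-\lambda_0\Phi^2$ term with $\lambda_0=\lambda_2(\hat{\mathcal{L}})/N$. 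From here the computation parallels Theorem~\ref{theorem_drected_notgood}: split $QA+A^{T}Q=\mathcal{X}+2C^{T}C$ using the LMI \eqref{lmi_stabilication} (so $\mathcal{X}<0$), apply the Young inequality of Lemma~\ref{lemma_pq} to bound $3\,\varrho^{T}[R\Phi\otimes C^{T}C]\varrho\le\varrho^{T}[(\tfrac{\lambda_0}{2}\Phi^2+\tfrac{9}{2\lambda_0}R^2)\otimes C^{T}C]\varrho$, and pick $\alpha\ge\max_i\frac{9r_i}{2\lambda_0}$ so that $\tfrac{9}{2\lambda_0}R^2-\alpha R\le0$. These steps collapse $\dot V$ to $\dot V\le\varrho^{T}[R\Phi\otimes\mathcal{X}]\varrho\le0$.

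It remains to extract convergence. Since $\dot V\le0$, $V$ is bounded, whence every $\rho_i$ and the nondecreasing $c_i$ are bounded; thus each coupling weight $c_i(t)$ converges to a finite steady-state value and the trajectory $(\varrho,c)$ stays bounded. As $R\Phi>0$ and $\mathcal{X}<0$, the bound $\dot V\le\varrho^{T}[R\Phi\otimes\mathcal{X}]\varrho$ is strictly negative whenever $\varrho\neq0$, so $\dot V\equiv0$ is equivalent to $\varrho=0$. Because $(\varrho,c)$ is an autonomous system, LaSalle's invariance principle (as invoked in Theorem~\ref{theorem_drected_notgood}) then gives $\lim_{t\to\infty}\varrho(t)=0$. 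Feeding $\varrho\to0$ into the Hurwitz $\eta$-dynamics of \eqref{psi_eta_varrho} yields $\eta\to0$, i.e.\ \eqref{eq:time-varyingnotracking} holds and the fully distributed TVF stabilization problem is solved.
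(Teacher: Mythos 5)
Your proposal is correct and follows essentially the same route as the paper's proof: the identical Lyapunov function $\frac12\sum_i r_i(2c_i+\rho_i)\rho_i+\frac12\sum_i r_i(c_i-\alpha)^2$, the same use of Lemma~\ref{lemma_stronglyconnected} with the state-dependent positive test vector $(\hat c+\hat\rho)^{-1}r$ exploiting $r^{T}\mathcal{L}=0$ to recover the $-\lambda_2(\tilde{\mathcal{L}})(\hat c+\hat\rho)^2/N$ damping, the same Young-inequality absorption into the LMI \eqref{lmi_stabilication}, and the same LaSalle-plus-Hurwitz-cascade conclusion. The only differences are bookkeeping (your constant $\alpha\ge\max_i 9r_i/(2\lambda_0)$ versus the paper's $5N\lambda_{max}(R)/(2\lambda_2(\tilde{\mathcal{L}}))$, and your slightly more explicit blockwise application of the lemma to $\phi_iC\varrho_i$), which do not change the argument.
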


\begin{proof}
	First, we prove that $ \lim_{t \to \infty}\varrho =0 $. To this end, similar as \eqref{v2} in Theorem \ref{theorem_drected_notgood}, let
	\begin{equation}\label{v3}
	V_{2}=\frac{1}{2}\sum_{i=1}^{N} r_{i}(2c_{i}+\rho_{i})\rho_{i}+\frac{1}{2}\sum_{i=1}^{N} r_{i}(c_{i}- \alpha)^{2}
	\end{equation}
	where $ r= [r_{1}^{T}, \ldots , r_{N}^{T}]^{T}, r_{i} >0 $ is the left eigenvector of $ \mathcal{L} $ associated with the zero eigenvalue and other parameters are the same as in Theorem \ref{theorem_drected_notgood}. Similarly, the derivative of $ V_{2} $ is
	\begin{equation}\label{v3_derivative}
	\begin{aligned}
	\dot{V}_{2} 
	=& \varrho^{T} [R(\hat{c}+\hat{\rho}) \otimes(QA+A^{T}Q) + R(\hat{c}+\hat{\rho} - \alpha I)  \otimes C^{T} \Gamma C +(\hat{c}+\hat{\rho})\tilde{\mathcal{L}}(\hat{c}+\hat{\rho}) \otimes QFC ] \varrho
	\end{aligned}
	\end{equation}
	where $ R = diag(r_{1},\ldots, r_{N}) $ and $ \tilde{\mathcal{L}} =  R\mathcal{L} + \mathcal{L}^{T}R$. Denote $ \tilde{\varrho} = [(\hat{c}+\hat{\rho}) \otimes I_{n}]\varrho$. Considering $ \varrho= \psi - \eta= (\mathcal{L} \otimes I_{n})(v-\bar{x}), r^{T}\mathcal{L}=0$, then
	\begin{equation*}
	\tilde{\varrho}^{T}[(\hat{c} +\hat{\rho})^{-1}r \otimes \textbf{1}]= (v-\bar{x})^{T}(L^{T}r \otimes \textbf{1}) =0.
	\end{equation*}
	Since each entry of $ r$ is positive, then each entry of $ [(\hat{c} +\hat{\rho})^{-1}r \otimes \textbf{1}] $ is also positive. From Lemma \ref{lemma_stronglyconnected}
	\begin{equation*}
	\begin{aligned}
	\tilde{\varrho}^{T} (\tilde{\mathcal{L}} \otimes I_{n})\tilde{\varrho} >  \frac{\lambda_{2}(\tilde{\mathcal{L}})}{N} \tilde{\varrho}^{T}\tilde{\varrho}
	= \frac{\lambda_{2}(\tilde{\mathcal{L}})}{N} \varrho^{T} [(\hat{c} + \hat{\rho})^{2} \otimes I_{n}]\varrho.
	\end{aligned}
	\end{equation*}
	Similar as \eqref{pq_notgood} in Theorem \ref{theorem_drected_notgood}, using Lemma \ref{lemma_pq} we get
	\begin{equation*}
	\begin{aligned}
	\varrho^{T}[R(\hat{c}+\hat{\rho}) \otimes C^{T} C] \varrho \le& \varrho^{T} [(\frac{\lambda_{2}(\tilde{\mathcal{L}})}{2N}(\hat{c}+\hat{\rho})^{2}+\frac{N}{2\lambda_{2}(\tilde{\mathcal{L}})}R^{2}) \otimes C^{T} C] \varrho.
	\end{aligned}
	\end{equation*}
	Combining above two inequalities with \eqref{v3_derivative} and choosing $ \alpha \ge \frac{5N\lambda_{max}(R)}{2\lambda_{2}(\tilde{\mathcal{L}})} $, we have
	\begin{equation}
	\begin{aligned}
	\dot{V}_{2} 
	\le & \varrho^{T} [R(\hat{c}+\hat{\rho}) \otimes(QA+A^{T}Q-2C^{T}C) ]\varrho\\
	\le & 0
	\end{aligned}
	\end{equation}
	where the last inequality comes from LMI \eqref{lmi_stabilication}. So $ V_{2}(t) $ is bounded, and $ c_{i}(t) $ increases monotonically and converges to some finite value finally. 
	Similar as the proof in Theorem~\ref{theorem_drected_notgood}, $ \lim_{t \to \infty} \varrho=0$ can be proved. 
	From the second equation in \eqref{psi_eta_varrho} and $ A+BK_{2} $ is Hurwitz, we can prove $ \lim_{t \to \infty} \eta=0 $, i.e., the fully distributed TVF stabilization problem with the directed strongly connected topology is solved.
\end{proof}

\begin{remark}
	From \eqref{psi_eta} it is easy to get $ \psi = (\mathcal{L} \otimes I_{n})v, \eta = (\mathcal{L} \otimes I_{n})\bar{x} $ and $ \varrho = (\mathcal{L} \otimes I_{n})(v-\bar{x}) $. $ \lim_{t \to \infty} \varrho=0$ means that the error between observer $ v_{i} $ and formation stabilization error $ \bar{x}_{i} $ of each agent $ i $ will go to zero eventually. Similarly, $ \lim_{t \to \infty} \eta=0$ means that the formation stabilization error $ \bar{x}_{i} $ of each agent $ i $ will reach consistent eventually. Obviously, the observer $ v_{i} $ of each agent $ i $ will also reach consistent eventually. Note that \eqref{eq:protocol_directed_stabilization} is a consensus-based formation stabilization protocol. From Corollary 1 of \cite{olfati-saber_consensus_2004}, we know that the group decision value of formation is a function of each agent's initial state $ x_{i}(0), i=1,\ldots,N $. The group decision value decides where the leaderless formation to go, which means there is no precisely explicit equation defining where the leaderless formation to go. It is necessary and applicable to solve the leader-follower TVF tracking problem with directed topology when only a small subset of followers know leader's output information, which will be presented in next section.
\end{remark}

For the special case where the relative state measurements $ x_{ij}= x_{i}-x_{j}, i,j = 1,\ldots,N $ are available among neighbors, the fully distributed adaptive protocol is proposed as
\begin{equation}\label{eq:protocol_directed_stabilization_state}
\begin{aligned}
u_{i} =& K_{1}h_{i}+K_{2}v_{i},\\
\dot v_{i} =& (A+BK_{2})v_{i}+B\tilde{F}(c_{i}+\rho_{i})  \sum_{j=1}^{N} a_{ij}(\hat{c}_{ij}-x_{ij}), \\
\dot c_{i} =& \left[  \sum_{j=1}^{N} a_{ij}(\hat{c}_{ij}-x_{ij}) \right] ^{T}  \Gamma \sum_{j=1}^{N} a_{ij}(\hat{c}_{ij}-x_{ij})
\end{aligned}
\end{equation}
where $  \hat{c}_{ij}=v_{i}+h_{i} -v_{j}-h_{j} $.

\begin{corollary}\label{corollary_stabilization}
	Suppose Assumptions \ref{assum_AB} and \ref{assumption_stabilization} hold, the fully distributed TVF stabilization problem is solved under the protocol \eqref{eq:protocol_directed_stabilization_state} if $ A+BK_{2} $ is Hurwitz, $ \Gamma = \tilde{Q}^{-1}BB^{T}\tilde{Q}^{-1} $, $ \tilde{F}=-B^{T}\tilde{Q}^{-1} $ and $ \rho_{i}=\varrho_{i}^{T}\tilde{Q}^{-1}\varrho_{i}$, where $ \tilde{Q} > 0 $ satisfies the following LMI
	\begin{equation}\label{lmi_stabilication_state}
	A\tilde{Q}+\tilde{Q}A^{T}-2BB^{T} <0.
	\end{equation}
	And $ c_{i}(t) $ converge to some finite steady-state values.
\end{corollary}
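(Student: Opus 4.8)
The plan is to mirror the proof of Theorem \ref{theorem_stabilization} under the duality that exchanges the output--injection gain $FC$ for the state--feedback gain $B\tilde F$. First I would check that protocol \eqref{eq:protocol_directed_stabilization_state} collapses to the structure studied there. Writing $\bar x_i=x_i-h_i$ and keeping the signals $\psi_i,\eta_i,\varrho_i$ from \eqref{psi_eta}, the substitution $\hat c_{ij}=v_i+h_i-v_j-h_j$ gives $\hat c_{ij}-x_{ij}=(v_i-\bar x_i)-(v_j-\bar x_j)$, so the relative--state network measurement $\sum_j a_{ij}(\hat c_{ij}-x_{ij})$ is exactly $\psi_i-\eta_i=\varrho_i$ and $\dot c_i=\varrho_i^T\Gamma\varrho_i$. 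Hence the closed loop takes precisely the form \eqref{psi_eta_varrho} with $FC$ replaced everywhere by $B\tilde F$; in particular $\dot\varrho=[I_N\otimes A+\mathcal{L}(\hat c+\hat\rho)\otimes B\tilde F]\varrho$ and $\dot\eta=[I_N\otimes(A+BK_2)]\eta+(I_N\otimes BK_2)\varrho$. As in Theorem \ref{theorem_stabilization}, the stabilization objective \eqref{eq:time-varyingnotracking} follows once I show $\lim_{t\to\infty}\varrho=0$, since $\eta=(\mathcal{L}\otimes I_n)\bar x\to0$ then follows from the Hurwitz property of $A+BK_2$ applied to the $\eta$--equation, and $\eta\to0$ forces the $x_i-h_i$ into agreement.

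The heart of the argument is the same Lyapunov function $V_2$ of \eqref{v3}, now with $\rho_i=\varrho_i^T\tilde Q^{-1}\varrho_i$ and $r=[r_1,\dots,r_N]^T>0$ the left eigenvector of $\mathcal{L}$ for the zero eigenvalue (available under Assumption \ref{assumption_stabilization} by Lemmas \ref{lemma_ren2005} and \ref{lemma_stronglyconnected}). Differentiating along the error dynamics and inserting $\tilde F=-B^T\tilde Q^{-1}$, $\Gamma=\tilde Q^{-1}BB^T\tilde Q^{-1}$ reproduces the expression \eqref{v3_derivative} with $QA+A^TQ$ replaced by $\tilde Q^{-1}A+A^T\tilde Q^{-1}$, with $C^T\Gamma C$ replaced by $\tilde Q^{-1}BB^T\tilde Q^{-1}$, and with the cross term carrying the coefficient $\tilde Q^{-1}B\tilde F=-\tilde Q^{-1}BB^T\tilde Q^{-1}$. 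I would then run the identical two estimates: the projection bound of Lemma \ref{lemma_stronglyconnected} applied to $\tilde\varrho=[(\hat c+\hat\rho)\otimes I_n]\varrho$ (orthogonal to the positive vector $(\hat c+\hat\rho)^{-1}r\otimes\mathbf{1}$ because $r^T\mathcal{L}=0$) to dominate the indefinite directed cross term, and the Young--type inequality of Lemma \ref{lemma_pq} on the $\dot c_i$ contribution, with $\alpha\ge\frac{5N\lambda_{max}(R)}{2\lambda_2(\tilde{\mathcal{L}})}$. After these cancellations the bracket reduces to $R(\hat c+\hat\rho)\otimes(\tilde Q^{-1}A+A^T\tilde Q^{-1}-2\tilde Q^{-1}BB^T\tilde Q^{-1})$, which is negative semidefinite: congruence of LMI \eqref{lmi_stabilication_state} by $\tilde Q^{-1}$ gives exactly $\tilde Q^{-1}A+A^T\tilde Q^{-1}-2\tilde Q^{-1}BB^T\tilde Q^{-1}<0$, the state--feedback twin of \eqref{lmi_stabilication}. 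Thus $\dot V_2\le0$, $V_2$ and each $c_i$ are bounded, each $c_i$ is nondecreasing hence convergent, and LaSalle's invariance principle (with $\dot V_2\equiv0\Leftrightarrow\varrho=0$) yields $\varrho\to0$.

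The main obstacle is the $\dot V_2$ computation together with the treatment of the non--symmetric, sign--indefinite coupling matrix $\mathcal{L}(\hat c+\hat\rho)$ introduced by the directed topology; as in Theorem \ref{theorem_stabilization} this is precisely where the symmetrized Laplacian $\tilde{\mathcal{L}}=R\mathcal{L}+\mathcal{L}^T R$ and the connectivity estimate of Lemma \ref{lemma_stronglyconnected} must be combined with Lemma \ref{lemma_pq}, and where the size of $\alpha$ is forced. Everything else is bookkeeping: verifying that the substitutions $\tilde F=-B^T\tilde Q^{-1}$, $\Gamma=\tilde Q^{-1}BB^T\tilde Q^{-1}$, $\rho_i=\varrho_i^T\tilde Q^{-1}\varrho_i$ play the role that $F=-Q^{-1}C^T$, $\Gamma=I$, $\rho_i=\varrho_i^T Q\varrho_i$ played before, so that the proof of Theorem \ref{theorem_stabilization} transfers verbatim. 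Because the duality is clean, in the write--up I would simply point to Theorem \ref{theorem_stabilization} and record only these substitutions and the congruence step, exactly as the authors do for the corollary following Theorem \ref{theorem_undrected}.
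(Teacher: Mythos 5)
Your proposal is correct and is exactly the route the paper intends: the paper omits the proof with the remark that it is ``similar as in Theorem \ref{theorem_stabilization}'', and your write-up just fills in that duality explicitly (replacing $FC$ by $B\tilde F$, $Q$ by $\tilde Q^{-1}$, $C^TC$ by $\tilde Q^{-1}BB^T\tilde Q^{-1}$, and using the congruence of \eqref{lmi_stabilication_state} by $\tilde Q^{-1}$), with the same Lyapunov function \eqref{v3} and the same use of Lemmas \ref{lemma_stronglyconnected} and \ref{lemma_pq}.
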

The proof is similar as in Theorem \ref{theorem_stabilization} and the details are omitted here.

\subsection{Directed formation tracking with partial access to the leader}\label{section_tracking}

During the process of solving the TVF stabilization problem with directed topology in Section \ref{sec_stabilizaiton}, an observer $ v_{i} $ that estimates formation stabilization error is introduced to design the protocol \eqref{eq:protocol_directed_stabilization} based on the following structure
$$
\left.
\begin{array}{l}
\displaystyle \lim_{t \to \infty} \varrho_{i}=0 \Rightarrow v_{i}-\bar{x}_{i} \rightarrow v_{j}-\bar{x}_{j}\\
\displaystyle \lim_{t \to \infty} \psi_{i}=0 \Rightarrow v_{i} \rightarrow v_{j}
\end{array}
\right\}
\Rightarrow  \bar{x}_{i} \rightarrow \bar{x}_{j}, t \rightarrow \infty.
$$

In this section for the formation tracking problem, similar to that structure, we introduce two observers to design the fully distributed protocol as follows
\begin{equation}\label{eq:protocol_directed_tracking}
\begin{aligned}
u_{i} =& K_{1}h_{i}+K_{2}v_{i},\\
\dot w_{i} =& Aw_{i} + Bu_{i} - BK_{1}h_{i} +F[Cw_{i} -(y_{i}-Ch_{i})],\\
\dot v_{i} =& Av_{i}+ Bu_{i} - BK_{1}h_{i}  +FC (c_{i}+\rho_{i})(\psi_{i}-\eta_{i}) +F[Cw_{i} -(y_{i}-Ch_{i})],\\
\dot c_{i} =& (\psi_{i}-\eta_{i})^{T} C^{T} \Gamma C (\psi_{i}-\eta_{i}), i \in \mathbb{F}.
\end{aligned}
\end{equation}
Here $ \psi_{i}= \sum_{j=0}^{N} a_{ij}(v_{i}-v_{j}), \eta_{i} = \sum_{j=0}^{N} a_{ij}(w_{i}-w_{j}) $, which is  similar to \eqref{psi_eta}.
And $ w_{0}=Aw_{0}+F(Cw_{0}-y_{0}), v_{0}=0 $ meaning that the leader has only one observer $ w_{0} $ to estimate its state $ x_{0} $. 
Note here that $ a_{i0}>0 $ means follower $ i $ can receive information from the leader and can not if $ a_{i0}=0 $, which shows that only a subset of followers can get the leader's output information.
The local observer $ w_{i} $ is designed to estimate the formation stabilization error $ \bar{x}_{i} = x_{i}-h_{i} $ of each follower $ i $, while the distributed observer $ v_{i} $ is used to make formation tracking error $ \tilde{x}_{i}=\bar{x}_{i}-x_{0} $ converge to zero. Here we assume that each agent is introspective as termed in \cite{peymani_almost_2014}, which means each one has access to its own output. 

Under Assumption \ref{assumptiondirected}, the Laplacian matrix of graph $ \mathcal{G} $ can be partitioned as $ \mathcal{L} =  
\begin{bmatrix}
0 & 0_{1 \times N} \\
\mathcal{L}_{2} & \mathcal{L}_{1} 
\end{bmatrix} $, where $ \mathcal{L}_{1} \in \mathbb{R}^{N \times N} ,  \mathcal{L}_{2} \in \mathbb{R}^{N \times 1}$. It is easy to confirm that $ \mathcal{L}_{1} $ is a nonsingular $ M $-matrix. 

Denote $ w=[w_{1}^{T}, \ldots, w_{N}^{T}]^{T}, v=[v_{1}^{T}, \ldots, v_{N}^{T}]^{T} $ and $ \varrho_{i}=\psi_{i}-\eta_{i} , i \in \mathbb{F} $, then
\begin{equation}\label{psi_1}
\begin{aligned}
\psi=&( \mathcal{L}_{1} \otimes I_{n})v,\\
\eta=&(\mathcal{L}_{1} \otimes I_{n})(w- \textbf{1} \otimes w_{0}),\\
\varrho=&(\mathcal{L}_{1} \otimes I_{n})(v-w+\textbf{1} \otimes w_{0}).\\
\end{aligned}
\end{equation}

Our goal is try to prove that
$$
\left.
\begin{array}{r}
\left.
\begin{array}{l}
\varrho_{i}=0 \Rightarrow w_{i}-v_{i} \rightarrow w_{0}\\
\psi_{i}=0 \Rightarrow v_{i} \rightarrow 0
\end{array}
\right\}
\Rightarrow  w_{i} \rightarrow w_{0}\\
w_{i} \rightarrow \bar{x}_{i}, \quad w_{0} \rightarrow x_{0}
\end{array}
\right\}
\Rightarrow  \bar{x}_{i} \rightarrow x_{0}
$$
where $ \bar{x}_{i}-x_{0}=x_{i}-h_{i}-x_{0} $ is the same as formation tracking error $ \tilde{x}_{i} $ in the proof of Theorem \ref{theorem_drected_notgood}.

In this section, similar as~\eqref{network measurement}, define a signal as
\begin{equation}
\hat{x}_{i}=\sum_{j=1}^{N} a_{ij}(\bar{x}_{i}-\bar{x}_{j})+a_{i0}(\bar{x}_{i}-x_{0})
\end{equation}
and $ \hat{x}=[\hat{x}_{1}^{T}, \ldots, \hat{x}_{N}^{T}]^{T} $, then $ \hat{x}=(\mathcal{L}_{1} \otimes I_{n})(\bar{x}- \textbf{1} \otimes x_{0}) $. It is easy to see that the TVF tracking problem with the directed topology is solved if and only if $ \lim_{t \to \infty} \hat{x}=0 $. Substituting \eqref{eq:protocol_directed_tracking}, \eqref{psi_1} into \eqref{eq:dynamics}, we get
\begin{equation}\label{psi_eta_tracking}
\begin{aligned}
\dot{\hat{x}}=& (I_{N} \otimes A)\hat{x} +(I_{N} \otimes BK_{2})\psi,\\
\dot{\eta} =& (I_{N} \otimes A)\eta +(I_{N} \otimes BK_{2})\psi + (I_{N} \otimes FC)(\eta - \hat{x}),\\
\dot{\psi}=& [I_{N} \otimes (A+BK_{2})] \psi + [\mathcal{L}_{1}(\hat{c}+\hat{\rho}) \otimes FC]\varrho + (I_{N} \otimes FC)(\eta - \hat{x}) +(\mathcal{L}_{1} \otimes FC) [\textbf{1} \otimes (w_{0}-x_{0})],\\
\dot c_{i} =& (\psi_{i}-\eta_{i})^{T} C^{T} \Gamma C (\psi_{i}-\eta_{i}), \quad i \in \mathbb{F}.
\end{aligned}
\end{equation}
Defining $ \bar{x}_{0}=w_{0}-x_{0} $ as the leader's state estimation error, $ \zeta=[\zeta_{1}^{T}, \ldots, \zeta_{N}^{T}]^{T}=\eta-\hat{x} $ and $ \varrho=\psi -\eta $, we obtain
\begin{equation}\label{zeta}
\begin{aligned}
\dot{\zeta}=& [I_{N} \otimes (A+FC)]\zeta,\\
\dot{\varrho} =&  [I_{N} \otimes A + \mathcal{L}_{1}(\hat{c}+\hat{\rho}) \otimes FC]\varrho + (\mathcal{L}_{1}\otimes FC)(\textbf{1}\otimes \bar{x}_{0}),\\
\dot c_{i} =& \varrho_{i}^{T} C^{T} \Gamma C \varrho_{i}, \quad i \in \mathbb{F}.
\end{aligned}
\end{equation}

The following theorem presents a result of designing protocol \eqref{eq:protocol_directed_tracking} to solve the TVF tracking problem with only a small subset of followers knowing the leader's output information.
\begin{theorem}\label{theorem_tracking}
	The fully distributed TVF tracking problem is solved with Assumptions \ref{assumptionABC} and \ref{assumptiondirected} under the protocol \eqref{eq:protocol_directed_tracking} if $ A+BK_{2} $ is Hurwitz, $ \Gamma = I $, $ F=-Q^{-1} C^{T} $ and $ \rho_{i}=\varrho_{i}^{T}Q\varrho_{i}$, where $ Q > 0 $ satisfies the LMI \eqref{lmi_stabilication}.
	And $ c_{i}(t), i \in \mathbb{F} $ converge to some finite steady-state values.
\end{theorem}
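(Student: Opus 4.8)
The plan is to exploit the cascade structure of the error system \eqref{zeta}--\eqref{psi_eta_tracking} and to reduce the whole claim to the single statement $\lim_{t\to\infty}\varrho=0$. First I would record that the choice $F=-Q^{-1}C^{T}$ together with the LMI \eqref{lmi_stabilication} makes $A+FC$ Hurwitz, since $Q(A+FC)+(A+FC)^{T}Q=QA+A^{T}Q-2C^{T}C<0$ with $Q>0$. This settles the two observer layers at once: the leader estimation error satisfies $\dot{\bar x}_{0}=(A+FC)\bar x_{0}$, and each local error $w_{i}-\bar x_{i}$ obeys the same Hurwitz dynamics; both decay exponentially, hence so does $\zeta=(\mathcal{L}_{1}\otimes I_{n})[(w-\bar x)-\mathbf{1}\otimes\bar x_{0}]$. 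Consequently the forcing term $(\mathcal{L}_{1}\otimes FC)(\mathbf{1}\otimes\bar x_{0})$ appearing in the $\varrho$-equation of \eqref{zeta} is an exponentially vanishing, square-integrable perturbation (its $i$-th block is in fact $a_{i0}FC\bar x_{0}$, so only followers linked to the leader are perturbed).

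For the core claim I would treat the adaptive subsystem $\dot\varrho=[I_{N}\otimes A+\mathcal{L}_{1}(\hat c+\hat\rho)\otimes FC]\varrho+(\mathcal{L}_{1}\otimes FC)(\mathbf{1}\otimes\bar x_{0})$ with $\dot c_{i}=\varrho_{i}^{T}C^{T}C\varrho_{i}$. Since $\mathcal{L}_{1}$ is a nonsingular $M$-matrix under Assumption~\ref{assumptiondirected}, Lemma~\ref{lemma_M_matrix} furnishes $G=\mathrm{diag}(g_{1},\dots,g_{N})>0$ with $G\mathcal{L}_{1}+\mathcal{L}_{1}^{T}G\ge\lambda_{0}I$, and I would take the same Lyapunov function $V$ as in \eqref{v3}, namely $V=\tfrac12\sum_{i}g_{i}(2c_{i}+\rho_{i})\rho_{i}+\tfrac12\sum_{i}g_{i}(c_{i}-\alpha)^{2}$. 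Computing $\dot V$, the terms free of $\bar x_{0}$ reproduce the estimate of Theorem~\ref{theorem_stabilization} verbatim: after bounding the $G\mathcal{L}_{1}$ cross-term with Lemma~\ref{lemma_pq} and choosing $\alpha$ large, they collapse to $\varrho^{T}[G(\hat c+\hat\rho)\otimes(QA+A^{T}Q-2C^{T}C)]\varrho\le 0$. The perturbation enters only through $\dot\rho_{i}=2\varrho_{i}^{T}Q\dot\varrho_{i}$ and contributes a cross term of the form $-2\sum_{i}g_{i}(c_{i}+\rho_{i})a_{i0}(C\varrho_{i})^{T}(C\bar x_{0})$.

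The main obstacle is precisely this cross term: it carries the same growing factor $(c_{i}+\rho_{i})$ as the negative-definite nominal term, so a naive Young split with Lemma~\ref{lemma_pq} leaves a residual $C_{3}\|C\bar x_{0}\|^{2}\sum_{i}g_{i}(c_{i}+\rho_{i})$ whose gain is not yet known to be bounded --- a genuine circularity between proving gain boundedness and proving convergence. I would break it by splitting with a tunable constant so that half of the nominal negativity survives, and then bounding the remaining factor via the structure of $V$ (Cauchy--Schwarz gives $\sum_{i}g_{i}(c_{i}+\rho_{i})\le c\sqrt{V}+c'$). This yields a differential inequality $\dot V\le a(t)(\sqrt{V}+c')$ with $a(t)=C_{4}\|C\bar x_{0}(t)\|^{2}$ integrable; a Gronwall-type comparison argument then shows $V$ is bounded on $[0,\infty)$. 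Hence each monotone $c_{i}$ converges to a finite value and $\varrho$ stays bounded. Re-integrating the refined estimate gives $\int_{0}^{\infty}\|\varrho_{i}\|^{2}\,dt<\infty$ (using $c_{i}+\rho_{i}\ge c_{i}(0)>0$), and since $\dot\varrho$ is bounded, Barbalat's lemma delivers $\varrho\to 0$.

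Finally I would close the cascade. With $\varrho\to 0$, $\zeta\to 0$ and $\bar x_{0}\to 0$, every forcing term in the $\psi$-equation of \eqref{psi_eta_tracking} vanishes, so the Hurwitz matrix $A+BK_{2}$ drives $\psi\to 0$; then $\eta=\psi-\varrho\to 0$ and $\hat x=\eta-\zeta\to 0$. Because $\hat x=(\mathcal{L}_{1}\otimes I_{n})(\bar x-\mathbf{1}\otimes x_{0})$ with $\mathcal{L}_{1}$ nonsingular, this forces $\bar x_{i}\to x_{0}$, i.e. $\tilde x_{i}=\bar x_{i}-x_{0}\to 0$ and hence $y_{i}-y_{0}-Ch_{i}=C\tilde x_{i}\to 0$, which is exactly Definition~\ref{definitionformation}. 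Together with the convergence of the $c_{i}$ established above, this proves the theorem.
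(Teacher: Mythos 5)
Your proposal is sound and reaches the same cascade closure as the paper ($\varrho\to0$, $\zeta\to0$, $\bar x_{0}\to0$ $\Rightarrow$ $\psi\to0$ $\Rightarrow$ $\eta\to0$ $\Rightarrow$ $\hat x\to0$), but the central Lyapunov step is handled by a genuinely different mechanism. The paper augments the adaptive Lyapunov function with the leader-observer energy, $V_{3}=\tfrac12\sum_{i}g_{i}(2c_{i}+\rho_{i})\rho_{i}+\tfrac12\sum_{i}g_{i}(c_{i}-\alpha)^{2}+\gamma\,\bar x_{0}^{T}Q\bar x_{0}$, and---crucially---splits the cross term by Young's inequality with the \emph{entire} adaptive gain placed on the $\varrho$ side:
\begin{equation*}
-2\varrho^{T}\bigl[G(\hat c+\hat\rho)\mathcal{L}_{1}\otimes C^{T}C\bigr](\mathbf{1}\otimes\bar x_{0})\le\frac{\lambda_{0}'}{3}\,\varrho^{T}\bigl[(\hat c+\hat\rho)^{2}\otimes C^{T}C\bigr]\varrho+\frac{3}{\lambda_{0}'}\,\bar x_{0}^{T}\bigl(G\mathcal{L}_{2}\mathcal{L}_{2}^{T}G\otimes C^{T}C\bigr)\bar x_{0}.
\end{equation*}
The quadratic factor $(\hat c+\hat\rho)^{2}$ is then absorbed by the $-\lambda_{0}'(\hat c+\hat\rho)^{2}\otimes C^{T}C$ term already produced by Lemma~\ref{lemma_M_matrix}, while the $\bar x_{0}$ residual carries a \emph{constant} coefficient and is dominated by $-\gamma\bar x_{0}^{T}W\bar x_{0}$ for $\gamma$ large. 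This yields $\dot V_{3}\le-\varrho^{T}[G(\hat c+\hat\rho)\otimes W]\varrho-\bar x_{0}^{T}W\bar x_{0}\le0$ outright, so no comparison lemma is needed. The ``genuine circularity'' you flag is therefore an artifact of splitting the gain $(c_{i}+\rho_{i})$ symmetrically across the two Young terms; with the paper's asymmetric split it never arises. Your workaround---keeping $V$ unaugmented, exploiting that $\bar x_{0}$ and $\zeta$ decay exponentially because $A+FC$ is Hurwitz, bounding $\sum_{i}g_{i}(c_{i}+\rho_{i})\le c\sqrt{V}+c'$, and closing with a Gr\"onwall comparison on $\dot V\le a(t)(c\sqrt V+c')$ followed by an $L^{2}$--Barbalat argument---is correct as far as I can check (all the auxiliary facts you use, including $c_{i}(0)>0$, $\dot\varrho$ bounded and $\|C\bar x_{0}\|^{2}\in L^{1}$, do hold), and it is somewhat more robust in that it only needs the perturbation to be integrable rather than cancellable inside the Lyapunov function; the cost is a longer chain of boundedness arguments where the paper gets monotone decrease in one line. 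Either way the remainder of your argument (the $c_{i}$ converge by monotonicity and boundedness, and the nonsingularity of the $M$-matrix $\mathcal{L}_{1}$ converts $\hat x\to0$ into $\bar x_{i}\to x_{0}$) matches the paper.
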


\begin{proof}
	First, we prove that $ \lim_{t \to \infty}\varrho =0$ and $ \lim_{t \to \infty}\bar{x}_{0} =0  $. To this end, let
	\begin{equation}\label{v5}
	V_{3}=\frac{1}{2}\sum_{i=1}^{N} g_{i}(2c_{i}+\rho_{i})\rho_{i}+\frac{1}{2}\sum_{i=1}^{N} g_{i}(c_{i}- \alpha)^{2} + \gamma \bar{x}_{0}^{T}Q\bar{x}_{0}
	\end{equation}
	where $ \gamma $ is a positive constant to be determined later and other parameters are the same as in the proof of Theorem \ref{theorem_drected_notgood}. Similarly, $ V_{3} $ is positive definite with respect to $ \varrho_{i}, c_{i} $ and $ \bar{x}_{0} $. Then
	\begin{equation}
	\begin{aligned}\label{v5_derivative}
	\dot{V}_{3} 
	\le & \varrho^{T} [G(\hat{c}+\hat{\rho}) \otimes(QA+A^{T}Q) -\lambda_{0}^{'}(\hat{c}+\hat{\rho})^{2} \otimes C^{T}C + G(\hat{c}+\hat{\rho} - \alpha I) \otimes C^{T} C ] \varrho - \gamma \bar{x}_{0}^{T}W\bar{x}_{0}\\
	&-2\varrho^{T}[G(\hat{c}+\hat{\rho})\mathcal{L}_{1}\otimes C^{T}C ](\textbf{1}\otimes \bar{x}_{0})
	\end{aligned}
	\end{equation}
	where $ \lambda_{0}^{'} > 0 $ is the smallest eigenvalue of $ G\mathcal{L}_{1}+\mathcal{L}_{1}^{T}G $ and $ W=-(Q A+A^{T}Q-2C^{T}C) $ is a positive definite matrix according to \eqref{lmi_stabilication}. By using Lemma \ref{lemma_pq} and $ \mathcal{L}_{1}\textbf{1}=-\mathcal{L}_{2} $, we can get
	\begin{equation*}
	\varrho^{T}[G(\hat{c}+\hat{\rho}) \otimes C^{T} C] \varrho \le  \varrho^{T}[(\frac{\lambda_{0}^{'}}{3} (\hat{c}+\hat{\rho})^{2} + \frac{3G^{2}}{4\lambda_{0}}) \otimes C^{T} C]\varrho
	\end{equation*}
	and
	\begin{equation*}
	\begin{aligned}
	-2\varrho^{T}[G(\hat{c}+\hat{\rho})\mathcal{L}_{1}\otimes C^{T}C ](\textbf{1}\otimes \bar{x}_{0}) 
	\le&  \frac{\lambda_{0}^{'}}{3} \varrho^{T}[(\hat{c}+\hat{\rho})^{2}\otimes C^{T} C]\varrho + \frac{3}{\lambda_{0}^{'}}\bar{x}_{0}^{T}(G\mathcal{L}_{2}\mathcal{L}_{2}^{T}G \otimes C^{T}C) \bar{x}_{0}\\
	\le&  \frac{\lambda_{0}^{'}}{3} \varrho^{T}[(\hat{c}+\hat{\rho})^{2}\otimes C^{T} C]\varrho+ \frac{3\lambda_{max}(C^{T}C)\mathcal{L}_{2}^{T}GG\mathcal{L}_{2} }{\lambda_{0}^{'}\lambda_{min}(W)} \bar{x}_{0}^{T}W\bar{x}_{0},
	\end{aligned}
	\end{equation*}
	where $ \mathcal{L}_{2}^{T}GG\mathcal{L}_{2} $ is a scalar and $ \frac{W}{\lambda_{min}(W)} \ge I $ is used to arrive at the last inequality. Substituting above two inequalities into \eqref{v5_derivative}, we get
	\begin{equation*}
	\begin{aligned}
	\dot{V}_{3}
	\le & \varrho^{T} [G(\hat{c}+\hat{\rho}) \otimes(QA+A^{T}Q) -(\frac{\lambda_{0}^{'}}{3}(\hat{c}+\hat{\rho})^{2} -\frac{3G^{2}}{4\lambda_{0}^{'}} +\alpha G) \otimes C^{T}C]\varrho + (-\gamma + \frac{3\lambda_{max}(C^{T}C)\mathcal{L}_{2}^{T}GG\mathcal{L}_{2} }{\lambda_{0}^{'}\lambda_{min}(W)} ) \bar{x}_{0}^{T}W\bar{x}_{0}.
	\end{aligned}
	\end{equation*}
	Choosing $ \alpha \ge \frac{15 \lambda_{max}(G)}{4 \lambda_{0}^{'}}$ and $ \gamma = 1 + \frac{3\lambda_{max}(C^{T}C)\mathcal{L}_{2}^{T}GG\mathcal{L}_{2} }{\lambda_{0}^{'}\lambda_{min}(W)} $, we obtain
	\begin{equation}
	\begin{aligned}\label{v5_derivative_2}
	\dot{V}_{3}
	\le & -\varrho^{T} [G(\hat{c}+\hat{\rho}) \otimes W]\varrho - \bar{x}_{0}^{T}W\bar{x}_{0}\\
	\le &0
	\end{aligned}
	\end{equation}
	where the last inequality comes from $ W > 0 $. Similar as the proof in Theorem \ref{theorem_drected_notgood}, it is easy to verify that $ \varrho_{i}, \bar{x}_{0} $ and $ c_{i} $ are bounded, and the coupling weight $ c_{i}(t) $ converges to some finite value.
	
	Next we show the convergence of $ \zeta $ in \eqref{zeta}. Thanks to $ F=-Q^{-1} C^{T} $, it follows from LMI \eqref{lmi_stabilication} that
	$$
	(A+FC)^{T}Q+Q(A+FC)=A^{T}Q+Q A-2C^{T}C < 0.
	$$
	Therefore, $(A+FC)$ is Hurwitz and $ \zeta $ converges to zero.
	
	Then we try to verify the convergence of $ \psi $ in \eqref{psi_eta_tracking}. Based on $ \lim_{t \to \infty} \varrho =0, \lim_{t \to \infty} \bar{x}_{0}=0, \lim_{t \to \infty} \zeta =0 $ and $ (A+BK_{2}) $ being Hurwitz, from \eqref{psi_eta_tracking} we can conclude that $ \lim_{t \to \infty}\psi =0 $.
	
	Furthermore, based on $ \lim_{t \to \infty} \zeta =0, \lim_{t \to \infty}\psi =0 $, from \eqref{psi_eta_tracking} we can conclude that $ \lim_{t \to \infty}\eta =0 $.
	
	Finally, due to $ \hat{x}=\eta - \zeta $, based on $ \lim_{t \to \infty}\eta =0 $ and $ \lim_{t \to \infty}\zeta=0 $, we obtain $ \lim_{t \to \infty}\hat{x}=0 $. Recalling that $ \hat{x}=(\mathcal{L}_{1} \otimes I_{n})(\bar{x}- \textbf{1} \otimes x_{0}) $ and $ \mathcal{L}_{1} $ is a $ M $-matrix with all positive eigenvalues, we obtain $ \lim_{t \to \infty} (\bar{x}_{i}-x_{0})=\lim_{t \to \infty} (x_{i}-h_{i}-x_{0})=0 $, which means the distributed TVF tracking problem considering the leader of no input under directed spanning tree topology is solved where only a small subset of followers know leader's output information.
\end{proof}

\begin{remark}
	Compared with the TVF research \cite{dong_time-varying_2016-1}, where only the stabilization problem is solved, our control protocol in this section solves the TVF tracking problem and furthermore, is fully distributed due to the application of adaptive parameter $ c_{i}(t) $, while the protocol in \cite{dong_time-varying_2016-1} is not since its parameter depends on the smallest positive eigenvalue information of Laplacian matrices. The second improvement is that we use output measurements which are more applicable in reality than the state ones utilized in \cite{dong_time-varying_2016-1}. Thirdly, the protocol in \cite{dong_time-varying_2016-1} requires $ (A, B) $ to be controllable while we require $ (A, B) $ to be stabilizable, which is a more relaxed condition for system dynamics. Finally, the algorithm in \cite{dong_time-varying_2016-1} needs to check the TVF feasibility condition first, which is more complicated compared to our TVF shape information $ h(t) $ in \eqref{h_dynamic}. Furthermore, in contrast to the latest work~\cite{wang2017distributed} where the fully distributed TVF stabilization problem is solved with undirected communication topology among followers, our work is obviously a big improvement.
\end{remark}

Similar as Corollary \ref{corollary_stabilization} in Section \ref{sec_stabilizaiton} , for the special case where the relative state measurements $ x_{ij}= x_{i}-x_{j}, i, j \in \mathbb{F} $ are available among neighbors, the fully distributed adaptive relative state protocol is proposed for each follower $ i $ as
\begin{equation}\label{eq:protocol_directed_tracking_state}
\begin{aligned}
u_{i} =& K_{1}h_{i}+K_{2}v_{i},\\
\dot w_{i} =& Aw_{i} + Bu_{i} - BK_{1}h_{i} +B\tilde{F}[v_{i} -(x_{i}-h_{i})],\\
\dot v_{i} =& Av_{i}+ Bu_{i} - BK_{1}h_{i}  +B\tilde{F} (c_{i}+\rho_{i})(\psi_{i}-\eta_{i}) +B\tilde{F}[v_{i} -(x_{i}-h_{i})],\\
\dot c_{i} =& (\psi_{i}-\eta_{i})^{T}  \Gamma (\psi_{i}-\eta_{i}), i \in \mathbb{F}.
\end{aligned}
\end{equation}

\begin{corollary}\label{corollary_input}
	The fully distributed TVF tracking problem is solved with Assumptions \ref{assumptionABC} and \ref{assumptiondirected} under the protocol \eqref{eq:protocol_directed_tracking_state} if $ A+BK_{2} $ is Hurwitz, $ \Gamma = \tilde{Q}^{-1}BB^{T}\tilde{Q}^{-1} $, $ \tilde{F}=-B^{T}\tilde{Q}^{-1} $ and $ \rho_{i}=\varrho_{i}^{T}\tilde{Q}^{-1}\varrho_{i}$, where $ \tilde{Q} > 0 $ satisfies the LMI \eqref{lmi_stabilication_state}.
	And $ c_{i}(t), i \in \mathbb{F} $ converge to some finite steady-state values.
\end{corollary}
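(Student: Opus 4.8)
The plan is to mirror the proof of Theorem~\ref{theorem_tracking} under the standard controller--observer duality, replacing the output-feedback gain $FC$ by the state-feedback gain $B\tilde{F}$ and the Lyapunov matrix $Q$ by $\tilde{Q}^{-1}$. First I would reuse exactly the same error variables: $\bar{x}_{0}=w_{0}-x_{0}$, $\hat{x}=(\mathcal{L}_{1}\otimes I_{n})(\bar{x}-\textbf{1}\otimes x_{0})$, together with $\psi=(\mathcal{L}_{1}\otimes I_{n})v$, $\eta=(\mathcal{L}_{1}\otimes I_{n})(w-\textbf{1}\otimes w_{0})$, $\zeta=\eta-\hat{x}$ and $\varrho=\psi-\eta$. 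Substituting \eqref{eq:protocol_directed_tracking_state} into \eqref{eq:dynamics} yields error dynamics of the same shape as \eqref{psi_eta_tracking}--\eqref{zeta}, but with every occurrence of $FC$ replaced by $B\tilde{F}$; in particular the leader-estimation-error channel becomes $\dot{\zeta}=[I_{N}\otimes(A+B\tilde{F})]\zeta$ and the adaptive law reads $\dot c_{i}=\varrho_{i}^{T}\Gamma\varrho_{i}$ with $\Gamma=\tilde{Q}^{-1}BB^{T}\tilde{Q}^{-1}$.

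The key duality step, which is the analogue of ``$(A+FC)$ is Hurwitz'' in Theorem~\ref{theorem_tracking}, is to congruence-transform the LMI \eqref{lmi_stabilication_state} by $\tilde{Q}^{-1}$: since $\tilde{F}=-B^{T}\tilde{Q}^{-1}$ gives $B\tilde{F}=-BB^{T}\tilde{Q}^{-1}$, one obtains
\begin{equation*}
\tilde{Q}^{-1}(A\tilde{Q}+\tilde{Q}A^{T}-2BB^{T})\tilde{Q}^{-1}=\tilde{Q}^{-1}A+A^{T}\tilde{Q}^{-1}-2\tilde{Q}^{-1}BB^{T}\tilde{Q}^{-1}=(A+B\tilde{F})^{T}\tilde{Q}^{-1}+\tilde{Q}^{-1}(A+B\tilde{F})<0,
\end{equation*}
so $A+B\tilde{F}$ is Hurwitz with Lyapunov matrix $\tilde{Q}^{-1}$, and $\zeta\to0$. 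For the adaptive part I would take the same Lyapunov candidate as \eqref{v5} with $Q$ replaced by $\tilde{Q}^{-1}$, namely $V=\tfrac{1}{2}\sum_{i}g_{i}(2c_{i}+\rho_{i})\rho_{i}+\tfrac{1}{2}\sum_{i}g_{i}(c_{i}-\alpha)^{2}+\gamma\,\bar{x}_{0}^{T}\tilde{Q}^{-1}\bar{x}_{0}$, where $g_{i}$ comes from Lemma~\ref{lemma_M_matrix} applied to the nonsingular $M$-matrix $\mathcal{L}_{1}$ and $\rho_{i}=\varrho_{i}^{T}\tilde{Q}^{-1}\varrho_{i}$. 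Computing $\dot V$ and bounding the $G(\hat{c}+\hat{\rho})\otimes\tilde{Q}^{-1}BB^{T}\tilde{Q}^{-1}$ and the $\bar{x}_{0}$ cross-terms by Lemma~\ref{lemma_pq} exactly as in \eqref{pq_notgood} and the two displayed estimates of Theorem~\ref{theorem_tracking}, with the same choices $\alpha\ge\tfrac{15\lambda_{max}(G)}{4\lambda_{0}'}$ and $\gamma$ absorbing the cross-term constant, collapses $\dot V$ to $-\varrho^{T}[G(\hat{c}+\hat{\rho})\otimes W]\varrho-\bar{x}_{0}^{T}W\bar{x}_{0}\le0$ with $W=-(A^{T}\tilde{Q}^{-1}+\tilde{Q}^{-1}A-2\tilde{Q}^{-1}BB^{T}\tilde{Q}^{-1})>0$.

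From here the argument is identical to Theorem~\ref{theorem_tracking}: boundedness of $V$ gives convergence of the monotone $c_{i}(t)$, LaSalle's principle forces $\varrho\to0$ and $\bar{x}_{0}\to0$, Hurwitzness of $A+B\tilde{F}$ gives $\zeta\to0$, and the cascade $\varrho,\bar{x}_{0},\zeta\to0$ together with $A+BK_{2}$ Hurwitz yields $\psi\to0$, then $\eta\to0$, then $\hat{x}=\eta-\zeta\to0$; inverting the nonsingular $M$-matrix $\mathcal{L}_{1}$ finally gives $x_{i}-h_{i}-x_{0}\to0$, which is \eqref{eq:time-varying}. I expect the main obstacle to be the duality bookkeeping rather than any new idea: one must check that the state-feedback structure $B\tilde{F}$ and the choice $\Gamma=\tilde{Q}^{-1}BB^{T}\tilde{Q}^{-1}$ make $\tilde{Q}^{-1}B\tilde{F}=-\tilde{Q}^{-1}BB^{T}\tilde{Q}^{-1}$ play precisely the role that $QFC=-C^{T}C$ plays in the output proof, so that every term of the Lyapunov derivative lines up term-for-term with Theorem~\ref{theorem_tracking} under the substitution $Q\mapsto\tilde{Q}^{-1}$, $C^{T}C\mapsto\tilde{Q}^{-1}BB^{T}\tilde{Q}^{-1}$; once this correspondence is verified the remaining estimates are routine, which is why the full computation can be omitted as in Corollary~\ref{corollary_stabilization}.
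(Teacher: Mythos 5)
Your proposal is correct and matches the paper's intended argument: the paper omits this proof with the remark that it is ``similar as the details in Theorem~\ref{theorem_tracking}'', and your duality substitution $Q\mapsto\tilde{Q}^{-1}$, $FC\mapsto B\tilde{F}$, $C^{T}C\mapsto\tilde{Q}^{-1}BB^{T}\tilde{Q}^{-1}$ (justified by the congruence transformation of the LMI \eqref{lmi_stabilication_state}, which also gives that $A+B\tilde{F}$ is Hurwitz) is exactly the bookkeeping that makes that proof carry over term by term. No new idea is required, so your write-up is a faithful expansion of the paper's one-line proof.
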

The proof is similar as the details in Theorem~\ref{theorem_tracking} and is omitted for conciseness.

\subsection{Directed formation tracking with bounded leader input}\label{section_bounded_input}
In the previous sections, we dealt with TVF tracking control problem without leader's input for general linear MAS. In this section, we extend our analysis to address formation tracking issue with leader's control input $ u_{0}(t) $.
\begin{assumption}\label{leader input}
	The leader's control input satisfies that $ \|u_{0}(t)\| \le \epsilon $, where $ \epsilon $ is a positive constant.
\end{assumption}

Based on the protocol \eqref{eq:protocol_directed_tracking} in Section \ref{section_tracking}, the following fully distributed adaptive protocol is proposed to solve TVF tracking problem with leader's bounded input as
\begin{equation}\label{eq:protocol_tracking_input}
\begin{aligned}
u_{i} =& K_{1}h_{i}+K_{2}v_{i}- \beta z(B^{T}S\eta_{i}), \\
\dot w_{i} =& Aw_{i} + Bu_{i} - BK_{1}h_{i} +F[Cw_{i} -(y_{i}-Ch_{i})],\\
\dot v_{i} =& Av_{i}+ B[u_{i}-\beta z(B^{T}Q(\psi_{i} - \eta_{i}))] - BK_{1}h_{i}  + FC (c_{i}+\rho_{i}) (\psi_{i}-\eta_{i}) +FC(w_{i} - \bar{x}_{i}),\\
\dot c_{i} =& (\psi_{i}-\eta_{i})^{T} C^{T} \Gamma C (\psi_{i}-\eta_{i}), i \in \mathbb{F}
\end{aligned}
\end{equation}
where $ c_{i}(0)\ge 1 $, $ S \ge 0 $, $ \psi_{i}= \sum_{j=0}^{N} a_{ij}(v_{i}-v_{j}), \eta_{i} = \sum_{j=0}^{N} a_{ij}(w_{i}-w_{j}), i \in \mathbb{F} $ and $ w_{0}=Aw_{0} +Bu_{0}+F(Cw_{0}-y_{0}), v_{0}=0 $. The positive constant $ \beta $ is to be determined later and other parameters are the same as in \eqref{eq:protocol_directed_tracking} of Section \ref{section_tracking}. The nonlinear function $ z(\cdot) $ is defined as
\begin{equation}\label{zx_discontinuous}
z(x) =
\begin{cases}
\frac{x}{\|x\|}  & \text{if $ \|x\| \neq 0$,} \\
0 & \text{if $ \|x\| = 0$.}
\end{cases}
\end{equation}
Similar as in Section \ref{section_tracking}, combine \eqref{eq:protocol_tracking_input} with \eqref{eq:dynamics} then
\begin{equation}\label{psi_eta_tracking_input}
\begin{aligned}
\dot{\hat{x}}=& (I_{N} \otimes A)\hat{x} +(I_{N} \otimes BK_{2})\psi - (\mathcal{L}_{1} \otimes B)(\beta M(\eta) + \textbf{1}\otimes u_{0}),\\
\dot{\eta} =& (I_{N} \otimes A)\eta +(I_{N} \otimes BK_{2})\psi + (I_{N} \otimes FC)(\eta - \hat{x}) - (\mathcal{L}_{1} \otimes B)(\beta M(\eta) + \textbf{1}\otimes u_{0}),\\
\dot{\psi}=& [I_{N} \otimes (A+BK_{2})] \psi + [\mathcal{L}_{1}(\hat{c}+\hat{\rho}) \otimes FC]\varrho + (I_{N} \otimes FC)(\eta - \hat{x}) +(\mathcal{L}_{1} \otimes FC)[\textbf{1} \otimes (w_{0}-x_{0})]\\
&- (\mathcal{L}_{1} \otimes B)\beta [M(\eta) + Z(\varrho)], i \in \mathbb{F}\\
\end{aligned}
\end{equation}
where $ Z(\varrho)=[z(B^{T}Q(\psi_{1} - \eta_{1}))^{T}, \ldots , z(B^{T}Q(\psi_{N} - \eta_{N}))^{T}]^{T}, M(\eta)=[z(B^{T}S\eta_{1})^{T}, \ldots, z(B^{T}S\eta_{N})^{T}]^{T} $, and
\begin{equation}\label{zeta_input}
\begin{aligned}
\dot{\zeta}=& [I_{N} \otimes (A+FC)]\zeta,\\
\dot{\varrho} =&  [I_{N} \otimes A + \mathcal{L}_{1}(\hat{c}+\hat{\rho}) \otimes FC]\varrho + (\mathcal{L}_{1}\otimes FC)(\textbf{1}\otimes \bar{x}_{0})- (\mathcal{L}_{1} \otimes B)(\beta Z(\varrho) - \textbf{1}\otimes u_{0}),\\
\dot c_{i} =& \varrho_{i}^{T} C^{T} \Gamma C \varrho_{i}, i \in \mathbb{F}.
\end{aligned}
\end{equation}

The following theorem presents a result of designing protocol \eqref{eq:protocol_tracking_input} to solve the TVF tracking problem with leader's bounded input under directed topology.
\begin{theorem}\label{theorem_tracking_input}
	Suppose Assumptions \ref{assumptionABC}, \ref{assumptiondirected} and \ref{leader input} hold, the fully distributed TVF tracking problem with leader's bounded input is solved under the protocol \eqref{eq:protocol_tracking_input} if $ A+BK_{2} $ is Hurwitz, $ \Gamma = I $, $ F=-Q^{-1} C^{T} $ and $ \rho_{i}=\varrho_{i}^{T}Q\varrho_{i}$, where $ Q > 0 $ is a solution to the LMI \eqref{lmi_stabilication}. $ \beta \ge \epsilon$ and $ S > 0 $ satisfies
	\begin{equation}\label{S}
	S(A+BK_{2}) +(A+BK_{2})^{T}S <0.
	\end{equation}
	Moreover, the coupling weight $ c_{i}(t), i \in \mathbb{F} $ converge to some finite steady-state values.
\end{theorem}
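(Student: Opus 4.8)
The plan is to build on the cascade structure already used in Theorem~\ref{theorem_tracking} and to show that the two signum feedforward terms, with gain $\beta\ge\epsilon$, exactly reject the unknown bounded leader input $u_0$. Since $\hat x=\eta-\zeta$ and $\mathcal L_1$ is a nonsingular $M$-matrix, it suffices to prove $\lim_{t\to\infty}\zeta=0$ and $\lim_{t\to\infty}\eta=0$; then $\hat x\to0$ forces $x_i-h_i-x_0\to0$. First I would dispatch the two subsystems that decouple from the adaptive part: the leader estimation error obeys $\dot{\bar x}_0=(A+FC)\bar x_0$ and, by $F=-Q^{-1}C^{T}$ together with LMI~\eqref{lmi_stabilication}, $A+FC$ is Hurwitz, so $\bar x_0\to0$ exponentially; likewise $\dot\zeta=[I_N\otimes(A+FC)]\zeta$ in \eqref{zeta_input} gives $\zeta\to0$ exponentially.

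For the $\varrho$-subsystem I would reuse the Lyapunov candidate \eqref{v5}, namely $V=\tfrac12\sum_i g_i(2c_i+\rho_i)\rho_i+\tfrac12\sum_i g_i(c_i-\alpha)^2+\gamma\bar x_0^{T}Q\bar x_0$ with $g_i$ from Lemma~\ref{lemma_M_matrix} applied to $\mathcal L_1$. All terms already present in Theorem~\ref{theorem_tracking} reproduce the bound $\dot V\le-\varrho^{T}[G(\hat c+\hat\rho)\otimes W]\varrho-\bar x_0^{T}W\bar x_0$ with $W=-(QA+A^{T}Q-2C^{T}C)>0$. The genuinely new contribution comes from the extra term $-(\mathcal L_1\otimes B)(\beta Z(\varrho)-\mathbf 1\otimes u_0)$ in \eqref{zeta_input}, which using $\mathcal L_1\mathbf 1=-\mathcal L_2$ feeds $-\beta(\mathcal L_1\otimes B)Z(\varrho)-\mathcal L_2\otimes Bu_0$ into $\dot\rho_i=2\varrho_i^{T}Q\dot\varrho_i$. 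The key computation is that this is nonpositive: writing $\xi_i=B^{T}Q\varrho_i$ and using $\xi_i^{T}z(\xi_i)=\|\xi_i\|$ and $\|z(\cdot)\|\le1$ from \eqref{zx_discontinuous}, the disturbance part is bounded by $2\epsilon\sum_i g_i(c_i+\rho_i)a_{i0}\|\xi_i\|$, while the signum part splits into a diagonal piece $-2\beta\sum_i g_i(c_i+\rho_i)(\mathcal L_1)_{ii}\|\xi_i\|$ and off-diagonal cross terms. Since $(\mathcal L_1)_{ij}\le0$ for $i\ne j$ and the row-sum identity $(\mathcal L_1)_{ii}=a_{i0}+\sum_{j\ne i}|(\mathcal L_1)_{ij}|$ holds, the cross terms are dominated by exactly the $\sum_{j\ne i}|(\mathcal L_1)_{ij}|$ portion of the diagonal weight, leaving the net signum bound $-2\beta\sum_i g_i(c_i+\rho_i)a_{i0}\|\xi_i\|$. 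Hence the whole extra contribution is at most $-2(\beta-\epsilon)\sum_i g_i(c_i+\rho_i)a_{i0}\|\xi_i\|\le0$ whenever $\beta\ge\epsilon$. Thus $\dot V\le0$, each $c_i$ is monotone, bounded and convergent, $\varrho\in L^2$, and (since $\dot\varrho$ is bounded) Barbalat's lemma yields $\varrho\to0$.

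With $\zeta,\varrho\to0$ and $\bar x_0\to0$ in hand, I would close on the $\eta$-subsystem of \eqref{psi_eta_tracking_input}. Substituting $\psi=\eta+\varrho$ turns the drift into $[I_N\otimes(A+BK_2)]\eta$ plus the vanishing $L^2$ inputs $(I_N\otimes BK_2)\varrho$ and $(I_N\otimes FC)\zeta$, retaining only the feedforward/disturbance pair $-(\mathcal L_1\otimes B)(\beta M(\eta)+\mathbf 1\otimes u_0)$. Taking $V_\eta=\sum_i\eta_i^{T}S\eta_i$, the drift gives $\sum_i\eta_i^{T}[S(A+BK_2)+(A+BK_2)^{T}S]\eta_i\le-c\|\eta\|^2$ by \eqref{S}, the perturbations are absorbed via Young's inequality at the cost of $L^2$ terms in $\varrho,\zeta$, and the term involving $M(\eta)$ and $u_0$ is handled by the \emph{same} row-sum cancellation as above, now with $z(B^{T}S\eta_i)$ matched to the weight $S$, giving a net bound $-2(\beta-\epsilon)\sum_i a_{i0}\|B^{T}S\eta_i\|\le0$. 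Integrating, $\eta\in L^2$ with bounded derivative, so $\eta\to0$; therefore $\hat x=\eta-\zeta\to0$, and since $\hat x=(\mathcal L_1\otimes I_n)(\bar x-\mathbf 1\otimes x_0)$ with $\mathcal L_1$ nonsingular we conclude $x_i-h_i-x_0\to0$, i.e. the objective \eqref{eq:time-varying}.

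The hard part is the signum cancellation in the presence of the Laplacian coupling: unlike the scalar disturbance-rejection template, here each $z(\cdot)$ is premultiplied by $\mathcal L_1$, so the off-diagonal cross terms $\varrho_i^{T}QB\,z(\xi_j)$ do not vanish and must be controlled purely by $\|z\|\le1$ and the $M$-matrix row structure; arranging the bookkeeping so that precisely the $a_{i0}$-weighted margin $(\beta-\epsilon)$ survives is the crux. A secondary technical point is that $z(\cdot)$ is discontinuous and $u_0(t)$ renders the closed loop non-autonomous, so the convergence steps should be justified through Filippov solutions and Barbalat's lemma rather than the LaSalle invariance principle used in Theorem~\ref{theorem_drected_notgood}.
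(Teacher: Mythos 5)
Your proposal is correct and follows essentially the same route as the paper's proof: the same cascade (first $\zeta$ and $\bar{x}_{0}$ via $A+FC$ Hurwitz, then $\varrho$ via the Lyapunov function $V_{4}$ with the signum/row-sum cancellation over the $M$-matrix $\mathcal{L}_{1}$ yielding the $-(\beta-\epsilon)a_{i0}$ margin, then $\eta$), with Barbalat's lemma replacing LaSalle for the non-autonomous closed loop. The only cosmetic difference is the last step: the paper absorbs the $\varrho$- and $\zeta$-cross terms into a composite Lyapunov function $V_{5}=\eta^{T}(I_{N}\otimes S)\eta+\gamma_{1}\zeta^{T}(I_{N}\otimes Q)\zeta+\gamma_{2}V_{4}$ with explicitly chosen $\gamma_{1},\gamma_{2}$, whereas you treat those terms as $L^{2}$ inputs to $\eta^{T}(I_{N}\otimes S)\eta$; both are valid.
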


\begin{proof}
	First, based on the proof of Theorem \ref{theorem_tracking}, the convergence of $ \zeta $ in \eqref{zeta_input} is addressed. Then in order to prove $ \lim_{t \to \infty}\varrho =0 $, let
	\begin{equation}\label{v6}
	V_{4}=\frac{1}{2}\sum_{i=1}^{N} g_{i}(2c_{i}+\rho_{i})\rho_{i}+\frac{1}{2}\sum_{i=1}^{N} g_{i}(c_{i}- \alpha)^{2} + \gamma \bar{x}_{0}^{T}Q\bar{x}_{0}.
	\end{equation}
	By choosing the same parameters $ \alpha $ and $ \gamma $ as in the proof of Theorem \ref{theorem_tracking}, we get
	\begin{equation}
	\begin{aligned}\label{v6_derivative_input}
	\dot{V}_{4} 
	\le & -\varrho^{T} [G(\hat{c}+\hat{\rho}) \otimes W]\varrho - \bar{x}_{0}^{T}W\bar{x}_{0}-2\varrho^{T}[G(\hat{c}+\hat{\rho})\mathcal{L}_{1}\otimes QB ](\beta Z(\varrho) - \textbf{1}\otimes u_{0}).
	\end{aligned}
	\end{equation}
	Note that
	\begin{equation*}
	\begin{aligned}
	\varrho_{i}^{T}QBz(B^{T}Q\varrho_{i})=&\varrho_{i}^{T}QB \frac{B^{T}Q\varrho_{i}}{\|B^{T}Q\varrho_{i}\|}=\|B^{T}Q\varrho_{i}\|, \\
	\varrho_{i}^{T}QBz(B^{T}Q\varrho_{j})\le&\|\varrho_{i}^{T}QB\| \left\| \frac{B^{T}Q\varrho_{j}}{\|B^{T}Q\varrho_{j}\|} \right\| =\|B^{T}Q\varrho_{i}\|,
	\end{aligned}
	\end{equation*}
	then
	\begin{equation}\label{z1}
	\begin{aligned}
	-\varrho^{T}[G(\hat{c}+\hat{\rho})\mathcal{L}_{1}\otimes QB ]\beta Z(\varrho)
	=& -\sum_{i=1}^{N}g_{i}(c_{i}+\rho_{i})  \beta  \varrho_{i}^{T} QB \left[  \sum_{j=1}^{N}a_{ij}(z(B^{T}Q\varrho_{i})-z(B^{T}Q\varrho_{j}) ) +a_{i0}z(B^{T}Q\varrho_{i}) \right]  \\
	\le & -\sum_{i=1}^{N}g_{i}(c_{i}+\rho_{i})  \|B^{T}Q\varrho_{i}\| a_{i0} \beta.
	\end{aligned}
	\end{equation}
	On the other hand, using $ \mathcal{L}_{1}\textbf{1}=-\mathcal{L}_{2} $, we get
	\begin{equation}\label{z2}
	\begin{aligned}
	\varrho^{T}[G(\hat{c}+\hat{\rho})\mathcal{L}_{1}\otimes QB] (\textbf{1} \otimes u_{0})
	=& \sum_{i=1}^{N}g_{i}(c_{i}+\rho_{i}) \varrho_{i}^{T} QBa_{i0} u_{0}
	\le  \sum_{i=1}^{N}g_{i}(c_{i}+\rho_{i}) \|B^{T}Q\varrho_{i}\| a_{i0} \epsilon.
	\end{aligned}
	\end{equation}
	Substitute \eqref{z1} and \eqref{z2} into \eqref{v6_derivative_input} with $ \beta \ge \epsilon $, then
	\begin{equation}\label{v6_final}
	\begin{aligned}
	\dot{V}_{4} 
	\le & -\varrho^{T} [G(\hat{c}+\hat{\rho}) \otimes W]\varrho - \bar{x}_{0}^{T}W\bar{x}_{0}\\
	=& - \xi ^{T} (I_{N+1} \otimes W ) \xi 
	\le 0
	\end{aligned}
	\end{equation}
	where the last inequality comes from $ W=-(Q A+A^{T}Q-2C^{T}C) >0 $ and $ \xi=[\varrho^{T}(\sqrt{G(\hat{c}+\hat{\rho}) \otimes I_{n}}), \bar{x}_{0}^{T}]^{T} $. Similar as the proof in Theorem \ref{theorem_tracking}, it is easy to verify that $ V_{4}, \varrho_{i}, \bar{x}_{0}, c_{i} $ are bounded and $ c_{i}(t) $ converges to some finite value.
	
	By the definition of $\xi$ and the bounded property of $ \varrho_{i}, \bar{x}_{0} $, it can be get that  $\xi$ is bounded. In addition, since  $u_{0}(t)$ is bounded in Assumption \ref{leader input}, $\dot{\varrho}$ in~\eqref{zeta_input} is bounded, too. Recall that $ \dot{\bar{x}}_{0}=\dot{w}_{0}-\dot{x}_{0}=(A+FC)\bar{x}_{0} $ is also bounded, which furthermore implies that $\dot{\xi }$ is bounded.
	
	Integrate \eqref{v6_final} then
	$$
	\int _{0}^{\infty} \xi ^{T} (I_{N+1} \otimes W ) \xi dt \le V_{4}(0)- V_{4}(\infty).
	$$
	
	Since $V_{4}(\infty)$ is finite due to $\dot{V}_{4} \le 0$ and $ V_{4}(t)>0 $, we get that $\int _{0}^{\infty} \xi ^{T} (I_{N+1} \otimes W ) \xi dt$ has a finite limit.
	
	In fact, $2\xi ^{T} (I_{N+1} \otimes W ) \dot \xi $ is bounded because of the boundedness of $\xi$ and $\dot{\xi }$, which in turn proves that $\xi ^{T} (I_{N+1} \otimes W ) \xi$ is uniformly continuous.
	
	Finally, $\int _{0}^{\infty} \xi ^{T} (I_{N+1} \otimes W ) \xi dt$ is differentiable and has a finite limit as $t \to \infty$, and $\xi ^{T} (I_{N+1} \otimes W ) \xi$ is uniformly continuous. Then by Barbalat's Lemma \cite{khalil1996noninear} we get
	$\xi ^{T} (I_{N+1} \otimes W ) \xi \to 0$ as $t \to \infty$, which means $ \lim_{t \to \infty}\xi = 0 $ such that $  \lim_{t \to \infty}\varrho = 0$.
	
	Next, to prove $ \lim_{t \to \infty}\eta = 0 $, we consider the following Lyapunov function candidate
	\begin{equation}\label{v7}
	V_{5}=\eta^{T} (I_{N}\otimes S)\eta + \gamma_{1} \zeta^{T}(I_{N}\otimes Q)\zeta + \gamma_{2} V_{4}
	\end{equation}
	where $ \gamma_{1}, \gamma_{2} $ are positive constants to be determined later. $V_{5} $ is positive definite with respect to $ \eta, \zeta, \varrho, c_{i} $ and $ \bar{x}_{0} $. Combining \eqref{psi_eta_tracking_input} and \eqref{zeta_input}, the derivative of $ V_{5} $ is
	\begin{equation}\label{v7_derivative}
	\begin{aligned}
	\dot{V}_{5} 
	= & -\eta^{T} (I_{N}\otimes \bar{W})\eta -\gamma_{1} \zeta^{T}(I_{N}\otimes W)\zeta+2\eta^{T}(I_{N}\otimes SBK_{2})\varrho +2\eta^{T}(I_{N}\otimes SFC)\zeta\\
	&- 2\eta^{T}(L_{1}\otimes SB)(\beta M(\eta)+ \textbf{1}\otimes u_{0}) +\gamma_{2}\dot{V}_{4}
	\end{aligned}
	\end{equation}
	where $ \bar{W}=-[S(A+BK_{2}) +(A+BK_{2})^{T}S] > 0 $ and $ W=-(Q A+A^{T}Q-2C^{T}C) >0 $. Similarly by using Lemma \ref{lemma_pq}, we have
	\begin{equation}\label{v7_1}
	\begin{aligned}
	2\eta^{T}(I_{N}\otimes SBK_{2})\varrho \le & \frac{1}{4}\eta^{T}(I_{N}\otimes \bar{W})\eta + \frac{4\lambda_{max}(K_{2}^{T}B^{T}SSBK_{2})}{\lambda_{min}(\bar{W})} \varrho^{T}\varrho,\\
	2\eta^{T}(I_{N}\otimes SFC)\zeta \le & \frac{1}{4}\eta^{T}(I_{N}\otimes \bar{W})\eta + \frac{4\lambda_{max}(K_{2}^{T}B^{T}SSBK_{2})}{\lambda_{min}(\bar{W})} \zeta^{T}\zeta.
	\end{aligned}
	\end{equation}
	Due to $ c_{i}(0)\ge 1$, $ \dot{c}_{i} \ge 0$ and $ \rho_{i}=\varrho_{i}^{T}Q\varrho_{i} \ge 0 $
	we get $ (\hat{c}+\hat{\rho}) > I $. Choosing $ \gamma_{1} \ge  \frac{4\lambda_{max}(C^{T}F^{T}SSFC)}{\lambda_{min}(\bar{W}) \lambda_{min}(W) }, \gamma_{2} \ge \frac{4\lambda_{max}(K_{2}^{T}B^{T}SSBK_{2})}{\lambda_{min}(\bar{W}) \lambda_{min}(W) \lambda_{min}(G) } $ and substituting \eqref{v6_final}, \eqref{v7_1} into \eqref{v7_derivative}, we obtain
	\begin{equation}
	\begin{aligned}
	\dot{V}_{5} 
	= & -\frac{1}{2}\eta^{T}(I_{N}\otimes \bar{W})\eta- 2\eta^{T}(\mathcal{L}_{1}\otimes SB)(\beta M(\eta)+ \textbf{1}\otimes u_{0}).
	\end{aligned}
	\end{equation}
	Similar as in \eqref{z1} and \eqref{z2}, we can prove that
	$$
	- 2\eta^{T}(\mathcal{L}_{1}\otimes SB)(\beta M(\eta)+ \textbf{1}\otimes u_{0}) \le 0.
	$$
	Finally,
	\begin{equation*}
	\begin{aligned}
	\dot{V}_{5} 
	= & -\frac{1}{2}\eta^{T}(I_{N}\otimes \bar{W})\eta \le 0.
	\end{aligned}
	\end{equation*}
	Therefore, $ V_{5} $ is bounded and so is $ \eta $. It is easy to verify that $ \lim_{t \to \infty} \eta =0 $. Due to $ \lim_{t \to \infty}\zeta=0 $ and $ \hat{x}= \eta-\zeta $, we get $ \lim_{t \to \infty}\hat{x}=0 $. Similar as the proof of Theorem~\ref{theorem_tracking},
	the distributed adaptive TVF tracking problem considering the leader's bounded input with directed spanning tree topology is solved.
\end{proof}

\begin{remark}
	Compared to the previous protocols without leader's input, the nonlinear components $ z(B^{T}S\eta_{i}) $ and $ z(B^{T}Q\varrho_{i}) $ in protocol \eqref{eq:protocol_tracking_input} are used to deal with the leader's bounded input.
	It is worth noting that the techniques utilized in the proof are partially motivated by \cite{lv_fully_2015} where the distributed output feedback consensus problem for general linear MAS has been studied by using a sequential observer design approach.
\end{remark}
\begin{remark}
	Since function~\eqref{zx_discontinuous} is nonsmooth, the whole control protocol~\eqref{eq:protocol_tracking_input} is discontinuous dealing with the leader's bounded input $ u_{0}(t) $. In fact, from Subsection~\ref{section_tracking} to \ref{section_bounded_input}, we regard $ u_{0}(t) $ as one kind of disturbances and use function~\eqref{zx_discontinuous} to compensate it. The discontinuous protocol~\eqref{eq:protocol_tracking_input} can be modified to be continuous with the following smooth function $ z(x) $
	\begin{equation}
	z(x) =
	\begin{cases}
	\frac{x}{\|x\|}  & \text{if $ \|x\| > \delta$,} \\
	\frac{x}{\delta} & \text{if $ \|x\| \le \delta$}
	\end{cases}
	\end{equation}
	and $ \dot c_{i} = (\psi_{i}-\eta_{i})^{T} C^{T} \Gamma C (\psi_{i}-\eta_{i}) -\varepsilon_{i}(c_{i}(t)-1), i \in \mathbb{F}, $ where $ \varepsilon_{i}, i \in \mathbb{F} $ and $ \delta $ are small positive constants. It is worth noting that this modified continuous protocol's control effect will be uniformly ultimately bounded while protocol~\eqref{eq:protocol_tracking_input} make the TVF tracking error converge to zero asymptotically. Since this paper focus on proposing the unified framework of TVF control design from undirected to directed topology, from stabilization to tracking and from a leader without input to a one with bounded input $ u_{0}(t) $, we will not go into the proving detail about the modified protocol.
\end{remark}

\section{Simulation}\label{simulaiton}

\begin{figure}
	\centering
	\begin{subfigure}[h]{0.45\textwidth}
		\includegraphics[width=\textwidth]{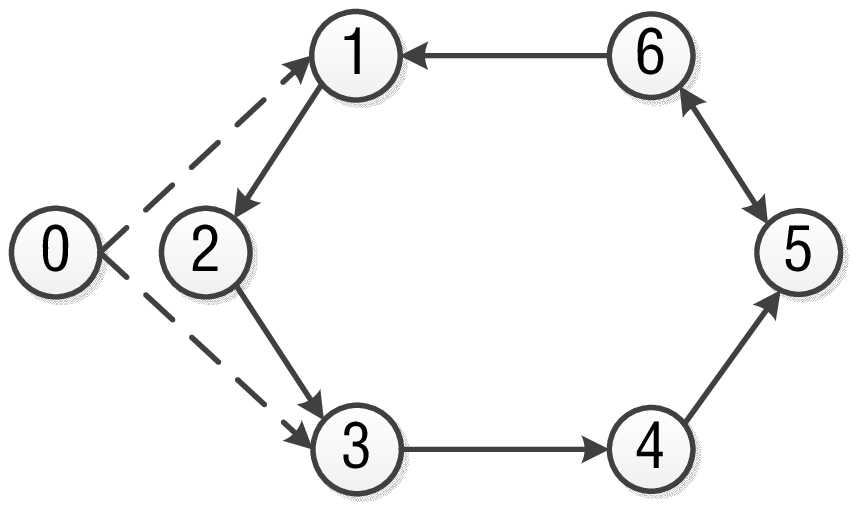}
		\caption{Among the multi-agent system.}
		\label{communication topology}
	\end{subfigure}\quad
	~ 
	\begin{subfigure}[h]{0.45\textwidth}
		\includegraphics[width=0.8\hsize]{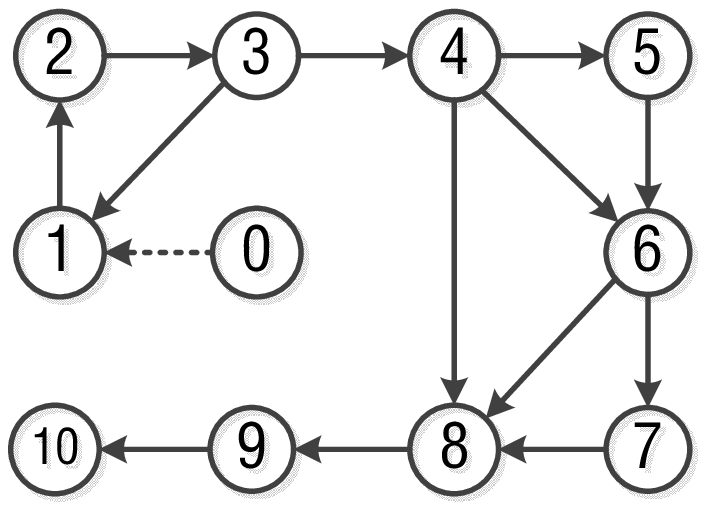}
		\caption{ Among multiple nonholonomic vehicles.}
		\label{communication topology_nonholo}
	\end{subfigure} 
	\caption{The communication topology $ \mathcal{G} $.}\label{fig:communication}
\end{figure}

The effectiveness of the proposed TVF control laws given in Section~\ref{mainresults} are demonstrated in this section by two numerical examples. The first example is presented to illustrate that the fully distributed adaptive controller~\eqref{eq:protocol_tracking_input} successfully achieve the TVF tracking with the leader of bounded input. Then, an application to nonholonomic mobile robots with the controller~\eqref{eq:protocol_tracking_input} in Section~\ref{section_bounded_input} is provided in the second example.

\textbf{Example 1.}
Consider a group of agents consisting of a leader labeled 0 and six followers labeled from 1 to 6, where the communication topology $ \mathcal{G} $ is shown in Fig. \ref{communication topology} satisfying Assumption~\ref{assumptiondirected}. Suppose that the state of agent $ i $ in~\eqref{eq:dynamics} is described as $ x_{i}(t) = (x_{i1}(t), \ldots, x_{i6}(t))^{T} \in \mathbb{R}^{6} $. $ A $ and $B $ in~\eqref{eq:dynamics} are given as follows
$$
A =  
\begin{bmatrix}
0 & 0 & 0 & 1 & 0& 0\\
0 & 0 & 0 & 0& 1& 0\\
0 & 0 & 0 & 0& 0& 1\\
-1 & 0 & 0 & 0& 0& 0\\
0 & -1 & 0 & 0& 0& 0\\
0 & 0& -1 & 0 & 0& 0
\end{bmatrix}, \quad 
B =\begin{bmatrix}
0 & 0 & 0\\
0 & 0 & 0\\
0 & 0 & 0\\
1 & 0& 0\\
0 & 1& 0\\
0 & 0& 1
\end{bmatrix}.
$$
Choosing $ C=[I_{5}, 0_{5 \times 1}] $, then it is easy to verify that $ (A,B,C) $ is stabilizable and detectable. 

\begin{figure}
	\centering
	\begin{subfigure}[h]{0.45\textwidth}
		\includegraphics[width=\textwidth]{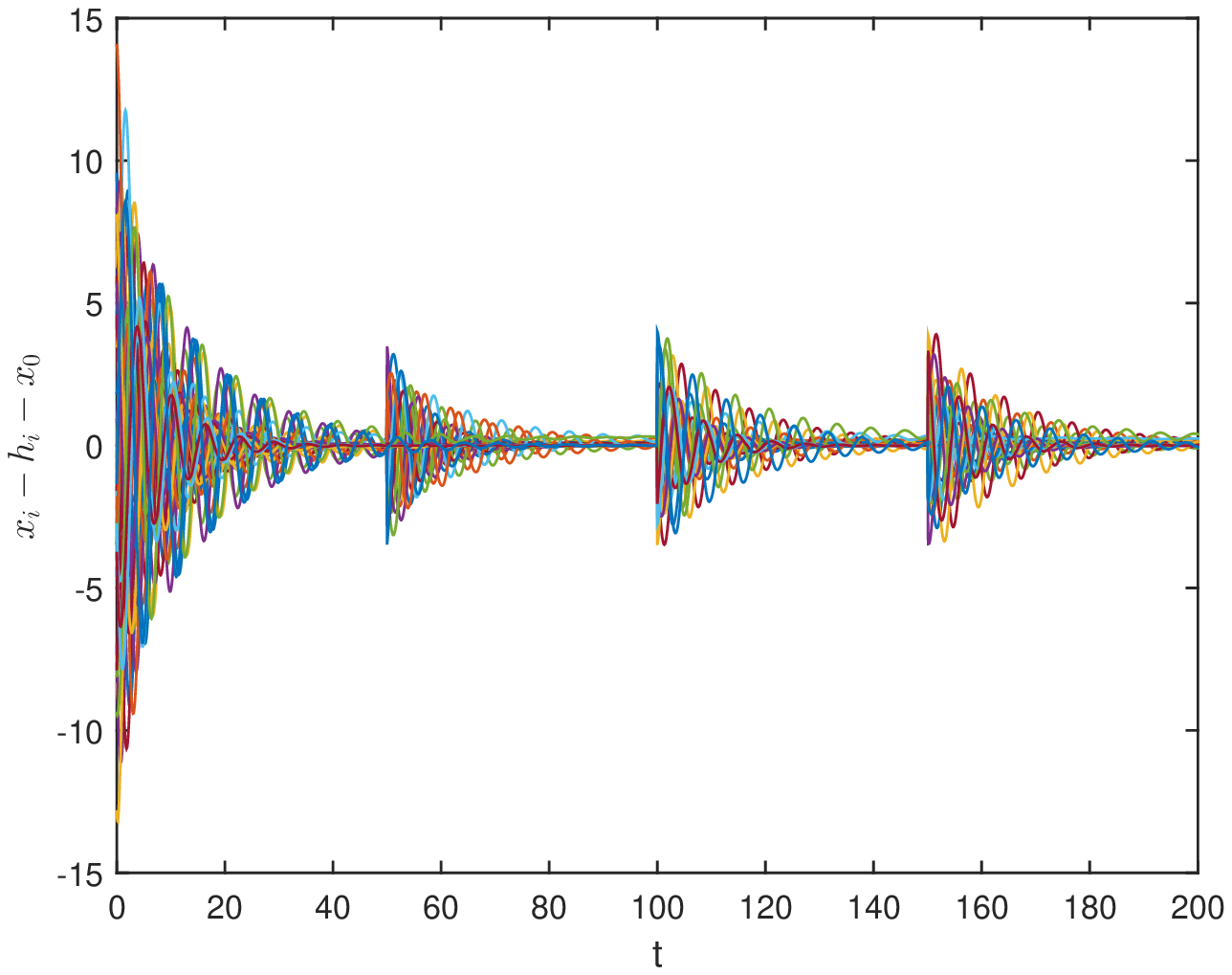}
		\caption{ $ \tilde{x}_{i}=x_{i}-h_{i}-x_{0} $.}
		\label{formation error}
	\end{subfigure}\quad
	~ 
	\begin{subfigure}[h]{0.45\textwidth}
		\includegraphics[width=\textwidth]{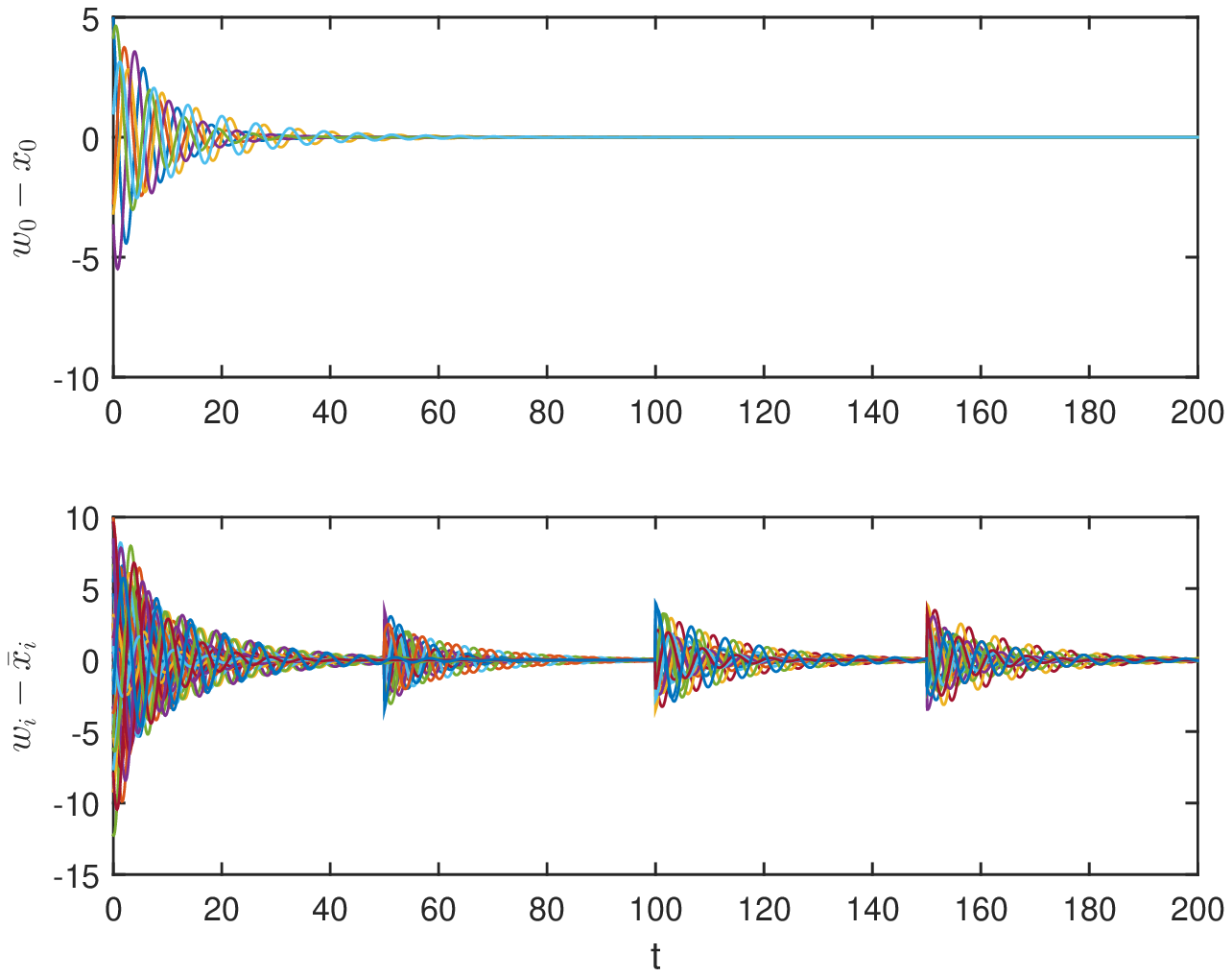}
		\caption{ $ \bar{x}_{0}=w_{0}-x_{0} $
			(top) and $ w_{i} - \bar{x}_{i} $ (bottom).}
		\label{estimation error}
	\end{subfigure} 
	\caption{The control errors.}\label{fig:control_errors}
\end{figure}

Inspired by our previous work~\cite{jiang2017distributed}, definite $ h_{i1}=-r \, cos(wt+(i-1) \pi /3) +r \, sin(wt+(i-1) \pi /3),
h_{i2}= 2r \, sin(wt+(i-1) \pi /3),
h_{i3}=2r \, cos(wt+(i-1)\pi/3),
h_{i4}=w \, r \, cos(wt+(i-1) \pi /3) +w \, r \, sin(wt+(i-1) \pi /3),
h_{i5}= 2w \, r \, cos(wt+(i-1) \pi /3),
h_{i6}= -2w \, r \, sin(wt+(i-1)\pi/3), \, i=1, \ldots, 6 $,
where $ r=2, w=2 $ for followers and $ h_{i}=[h_{i1},\ldots,h_{i6}]^{T} $. The TVF shapes for followers are described as the parallel hexagon shape when $ t \in [0,50) \cup [150, 200] $, the parallelogram shape when $ t \in [50,100) $ and the triangle shape when $ t \in [100,150) $ in the following

	\begin{equation*}
	h(t)  = \begin{cases}
	[h_{1}^{T}, h_{2}^{T}, h_{3}^{T}, h_{4}^{T}, h_{5}^{T}, h_{6}^{T}]^{T}  & \text{$ 0 \leq t < 50 $,} \\
	[h_{1}^{T}, (\frac{h_{1}+h_{3}}{2})^{T}, h_{3}^{T}, h_{4}^{T}, (\frac{h_{4}+h_{6}}{2})^{T}, h_{6}^{T}]^{T}  & \text{$ 50 \leq t < 100$,} \\
	[h_{1}^{T}, (\frac{h_{1}+h_{3}}{2})^{T}, h_{3}^{T}, (\frac{h_{3}+h_{5}}{2})^{T}, h_{5}^{T}, (\frac{h_{5}+h_{1}}{2})^{T}]^{T}  & \text{$ 100 \leq t < 150 $,}\\
	[h_{1}^{T}, h_{2}^{T}, h_{3}^{T}, h_{4}^{T}, h_{5}^{T}, h_{6}^{T}]^{T}  & \text{$ 150 \leq t \leq 200$.}
	\end{cases}
	\end{equation*}
It is obvious that $ \lim_{t \to \infty} \sum_{i=1}^{6} h_{i}(t)=0 $ , meaning that the six followers will keep TVF shapes around the leader when the desired formation tracking is achieved. 

Define the leader's bounded input as $ u_{0}(t)=[e^{-t}+1, e^{-2t}, 2+sin(\frac{t}{2})]^{T} $ and $ \beta =4 $. From \eqref{h_dynamic} we get
$$ K_{1}=\begin{bmatrix}
-3 & 0 & 0 & 0 & 0 & 0\\
0 & -3 & 0 & 0& 0 & 0\\
0 &0 & -3 & 0 & 0& 0
\end{bmatrix}. $$
Solving LMI \eqref{lmi_stabilication} gives a solution\\
\small{$$ Q=\begin{bmatrix}
	7.314  & -0.000&   -0.000&   -0.000&    0.000&    0.000\\
	-0.000  &  7.314 &  -0.000 &  -0.000 &   0.000 &  -0.000\\
	-0.000  & -0.000  &  7.412   &-0.000 &   0.000   &-0.487\\
	-0.000&   -0.000&   -0.000    &7.314&   -0.000 &  -0.000\\
	0.000    &0.000    &0.000&   -0.000    &7.314 &  -0.000\\
	0.000   &-0.000   &-0.487 &  -0.000  & -0.000 &   7.412
	\end{bmatrix}. $$}
Thus the feedback gain matrix in \eqref{eq:protocol_tracking_input} is obtained as\\
$$ F=\begin{bmatrix}
-0.1367&   -0.0000&   -0.0000&   -0.0000 &   0.0000\\
-0.0000  & -0.1367&   -0.0000  & -0.0000   & 0.0000\\
-0.0000&   -0.0000&   -0.1355&   -0.0000 &   0.0000\\
-0.0000  & -0.0000&   -0.0000&   -0.1367 &  -0.0000\\
0.0000&    0.0000&    0.0000&   -0.0000&   -0.1367\\
0.0000  & -0.0000 &  -0.0089 &  -0.0000 &  -0.0000
\end{bmatrix}. $$
Assign eigenvalue of $ A+BK_{2} $ as $ -1,-5,-10+10j, -10-10j, -20, -50 $ with $ j^{2}=-1 $, then 
$$ K_{2}=\begin{bmatrix}
-111.3&   21.8 &  10.7&  -13.8&  -11.3&  -11.0\\
56.3& -159.0 & -10.9 &  -1.8 & -25.5 &  -2.5\\
107.6 & -78.1 & -71.1 &  29.2 & -32.4&  -56.7
\end{bmatrix}. $$
\begin{figure}
	\centering
	\begin{subfigure}[h]{0.45\textwidth}
		\includegraphics[width=\textwidth]{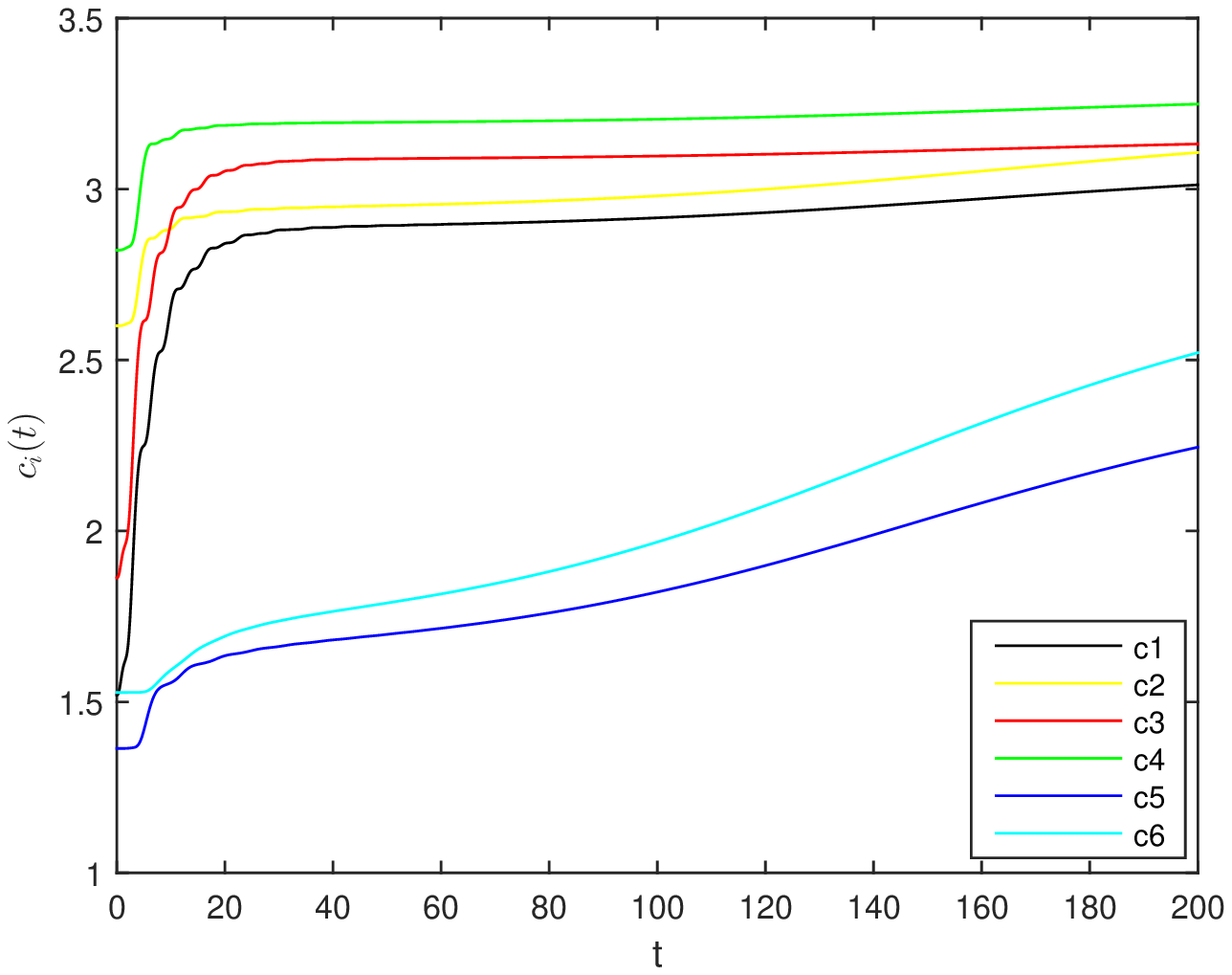}
		\caption{ Among the multi-agent system.}
		\label{ci}
	\end{subfigure}\quad
	~ 
	\begin{subfigure}[h]{0.45\textwidth}
		\includegraphics[width=\textwidth]{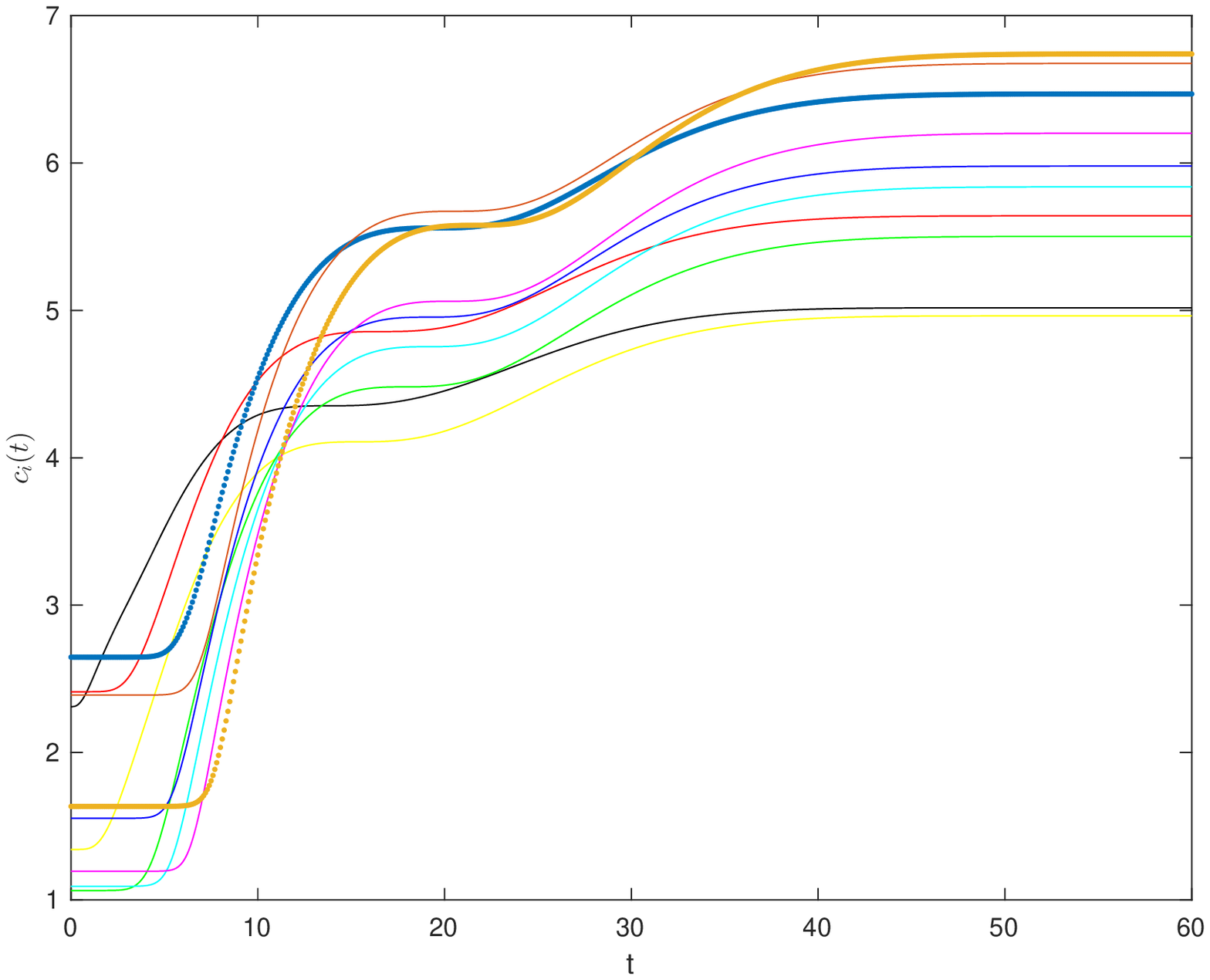}
		\caption{ Among multiple nonholonomic vehicles.}
		\label{ci_nonholo}
	\end{subfigure} 
	\caption{The coupling weights $ c_{i}(t) $.}\label{fig:ci}
\end{figure}
Substituting $ K_{2} $ into \eqref{S}, we get
$$ S=\begin{bmatrix}
2319.3 &  -422.9 &  -383.7 &   21.1  & -7.4  &  21.7\\
-422.9 &   2453.7  &  186.0 &  -11.3 &   9.3 &  -16.2\\
-383.7  &  186.0 &   1167.9 &  -6.1  &  3.1 &   3.9\\
21.1  & -11.3 &  -6.1  &  24.6&   -0.7 &  -0.2\\
-7.4  &  9.3 &   3.1  & -0.7  &  19.5 &  -1.0\\
21.7  & -16.2 &   3.9 &  -0.2 &  -1.0 &   16.9
\end{bmatrix}. $$
\begin{figure*}[!htb]
	\centering
	\includegraphics[width=\hsize]{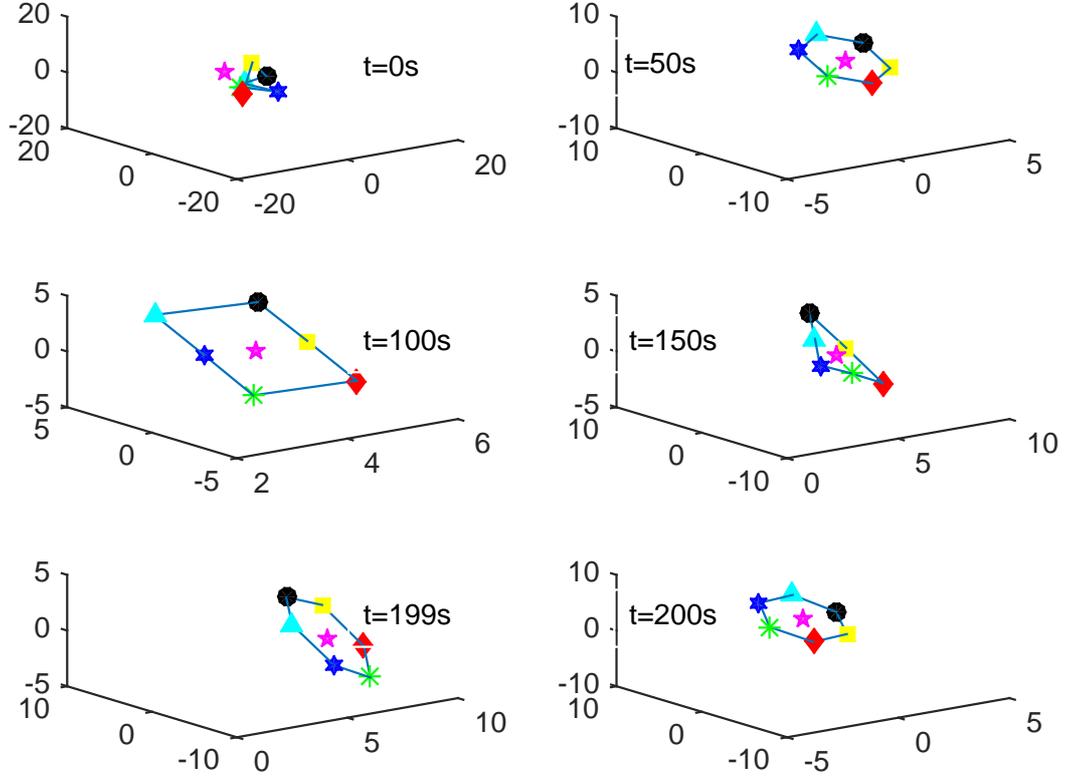}
	\caption{Position snapshots of six followers (circle, square, diamond, asterisk, hexagon, triangle) and the leader (pentagram) forming the shape from parallel hexagon to parallelogram, then triangle and finally back to parallel hexagon.}
	\label{snapshots}
\end{figure*}
Set the initial states $ x_{ij}(0)= 10\delta -5, c_{i}(0)=2\delta +1, i,j=1, \ldots, 6 $ for followers and $ x_{0j}(0)= 10\delta -5, j=1, \ldots, 6 $ for the leader, where $ \delta $ is a pseudorandom value with a uniform distribution on the interval $ (0,1) $.
Fig. \ref{formation error} shows the TVF tracking error $ x_{i}-h_{i}-x_{0} $, meaning that the time-varying output formation problem is indeed solved. 
Fig. \ref{estimation error} describes the leader's state estimation error $ \bar{x}_{0}=w_{0}-x_{0} $
(top) and followers' formation stabilization estimation error $ w_{i} - \bar{x}_{i} $, respectively, which means the distributed observers are designed correctly.
The coupling weights shown in Fig. \ref{ci} converge to some finite values. 
Fig. \ref{snapshots} depicts the position snapshots of followers and the leader at different timestamps. 
Six followers form formation shapes with random initial positions and keep tracking the leader which is located at the shape center at the same time. The TVF shapes change from parallel hexagon to parallelogram, then triangle and finally back to parallel hexagon.
From $ t=199s $ and $ t=200s $, we can see the shape keeps rotating around the leader, which means it is time-varying. The leader's trajectory is time-varying as well. It is worth noting that the presented results can be applied to target enclosing problems with regarding the leader as the target.

\textbf{Example 2.}
The MAS have many applications in reality. For instance, to accomplish an unknown area detection task, a group of autonomous nonholonomic vehicles is a good choice. Each vehicle can be equipped with different sensors. The multi-vehicle system should form a formation shape, and rotates the shape at the same time so that each direction is detected by different sensors.

Consider a group of eleven mobile vehicles with communication topology shown in Fig.~\ref{communication topology_nonholo}, each of which has the motion equations as follows
\begin{equation}
\begin{pmatrix}
\dot{r}_{xi}\\\dot{r}_{yi}\\\dot{\theta}_{i}\\\dot{\bar{v}}_{i}\\\dot{\bar{w}}_{i}
\end{pmatrix}=\begin{pmatrix}
\bar{v}_{i}cos(\theta_{i})\\\bar{v}_{i}sin(\theta_{i})\\\bar{w}_{i}\\0\\0
\end{pmatrix}+
\begin{pmatrix}
0&0\\0&0\\0&0\\\frac{1}{m_{i}} &0\\0 & \frac{1}{J_{i}}
\end{pmatrix} \begin{pmatrix}
F_{i}\\ \tau_{i}
\end{pmatrix}
\end{equation}
where$ (r_{xi}, r_{yi}) $ is the Cartesian position of the $ i $-th vehicle, $ \theta_{i} $ is the orientation, and $ \bar{v}_{i}, \bar{w}_{i} $ are the linear speed and angular speed, respectively. $ m_{i} $ is the mass, $ J_{i} $ is the moment of inertia, and $ F_{i}, \tau_{i} $ are the applied force and torque, respectively. Define $ \bar{u}_{i}=[F_{i}, \tau_{i}]^{T} $ as the control input.
\begin{figure}[!tb]
	\centering
	\includegraphics[width=0.7\hsize]{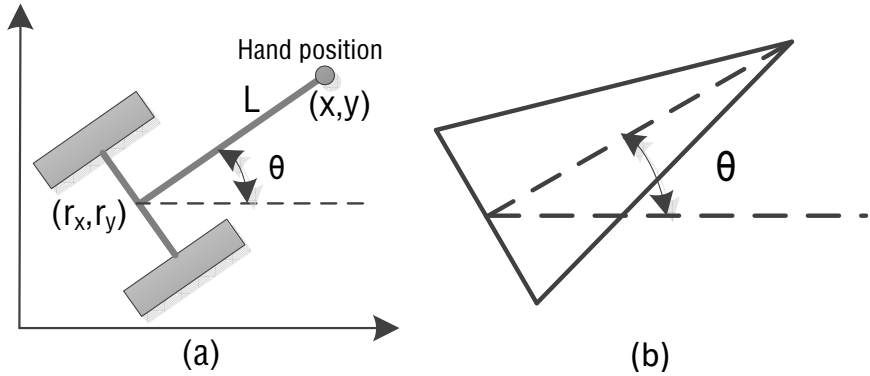}
	\caption{(a) Hand position. (b) The vehicle orientation presentation in Fig.~\ref{snapshots_nonholo}.}
	\label{hand_position}
\end{figure}

\begin{figure*}[!tb]
	\centering
	\includegraphics[width=\hsize]{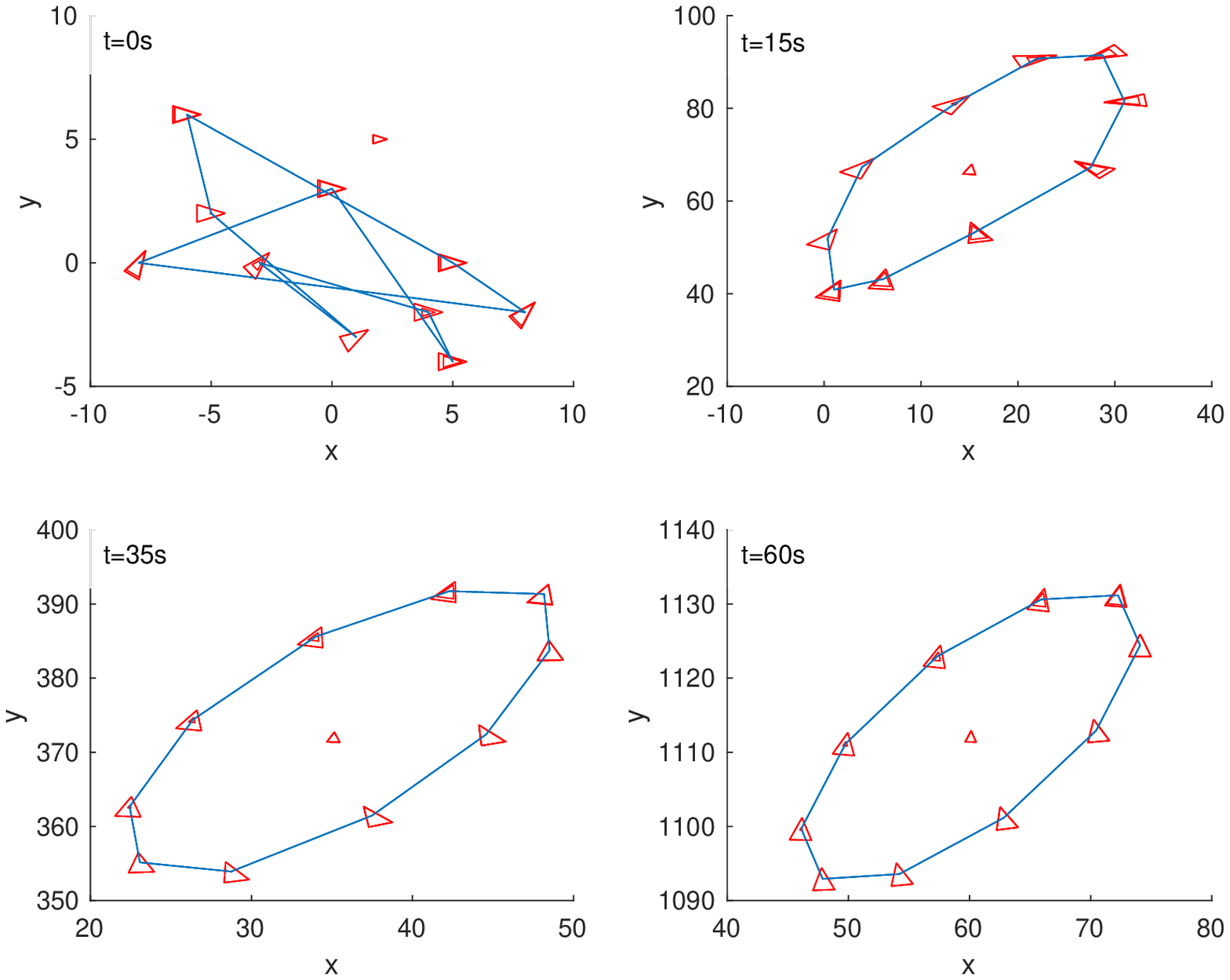}
	\caption{The movement snapshots of nonholonomic vehicles with orientation angles.}
	\label{snapshots_nonholo}
\end{figure*}
As shown in~\cite{lawton_decentralized_2003}, we focus on the TVF control of the vehicle's ``hand" position $ s_{i}=(x_{i}, y_{i}) $ which lies a distance $ L_{i} $ along the line that is normal to the wheel axis and intersects the wheel axis at the center point $ (r_{xi}, r_{yi}) $, as shown in Fig.~\ref{hand_position}a. Therefore, the hand position is defined as
\begin{equation}
\begin{pmatrix}
x_{i}\\y_{i}
\end{pmatrix}=\begin{pmatrix}
r_{xi}\\r_{yi}
\end{pmatrix}+ L_{i}\begin{pmatrix}
cos(\theta_{i})\\sin(\theta_{i})
\end{pmatrix}.
\end{equation}
Using the output feedback linearizing technique~\cite{lawton_decentralized_2003}, we have
\begin{equation}
\begin{aligned}
\bar{u}_{i}=& \begin{pmatrix}
\frac{1}{m_{i}}cos(\theta_{i}) & -\frac{L_{i}}{J_{i}}sin(\theta_{i})\\
\frac{1}{m_{i}}sin(\theta_{i}) & \frac{L_{i}}{J_{i}}cos(\theta_{i})
\end{pmatrix}^{-1}  \left[ u_{i} -\begin{pmatrix}
-\bar{v}_{i}\bar{w}_{i}sin(\theta_{i})-L_{i}\bar{w}_{i}^{2}cos(\theta_{i})\\
\bar{v}_{i}\bar{w}_{i}cos(\theta_{i})-L_{i}\bar{w}_{i}^{2}sin(\theta_{i})
\end{pmatrix} \right]
\end{aligned}
\end{equation}
where $ u_{i} $ is the linearized control input. Then the input output dynamics of each vehicle can be described as a double integrator system
\begin{equation}
\ddot{s}_{i}=u_{i}.
\end{equation}

Define the TVF shape information for followers as
\begin{equation*}
h_{i}(t)=\begin{bmatrix}
-r \, cos(wt+(i-1) \pi /5) + r \, sin(wt+(i-1) \pi /5)\\
2r \, sin(wt+(i-1) \pi /5)\\
w \, r \, cos(wt+(i-1) \pi /5) +w \, r \, sin(wt+(i-1) \pi /5)\\
2w \, r \, cos(wt+(i-1) \pi /5)\\
\end{bmatrix}
\end{equation*}
where $ r=10, w=0.5, i=1,\ldots,10 $. Similar as the Example 1, implement the control input $ u_{i} $ in~\eqref{eq:protocol_tracking_input} to this linearized model with
$$
A =  
\begin{bmatrix}
0 & 0 & 1 & 0\\
0 & 0 & 0 & 1\\
0 & 0 & 0 & 0\\
0 & 0 & 0 & 0
\end{bmatrix}, \quad 
B =\begin{bmatrix}
0 & 0\\
0 & 0\\
1 & 0\\
0 & 1
\end{bmatrix}, \quad 
C =  
\begin{bmatrix}
1 & 0 & 0 & 0\\
0 & 1 & 0 & 0\\
0 & 0 & 1 & 0
\end{bmatrix}.
$$
The leader's bounded input is defined as $ u_{0}(t)=[e^{-t}, \left\vert sin(\frac{t}{2}) \right\vert  ]^{T} $, and $ \beta =4 $. Choose the same parameter $ m_{i}=10.1 \, kg, J_{i}=0.13 \, kg \, m^{2} $ and $ L_{i}=0.12\, m $ as in~\cite{lawton_decentralized_2003}.

The initial positions and the movement snapshots of multi-vehicle system at different time-instants are shown in Fig.~\ref{snapshots_nonholo}. Note here that the orientation angle $ \theta $ of each nonholonomic vehicle is represented as in Fig.~\ref{hand_position}b, and the smaller triangle represents the leader. As we can see, the decagon shape is formed successfully by the ten follower vehicles and rotates around the leader vehicle. The coupling weights $ c_{i} $ associated with follower vehicles are depicted in Fig.~\ref{ci_nonholo}.

\section{conclusion}\label{conclusion}
\indent The unified framework of TVF protocol design based on the distributed observer viewpoint for general linear MAS has demonstrated from undirected to directed topology, from formation stabilization to tracking and from with a leader of no input to with one of bounded input whose information is only available to a small subset of followers.
The proposed observer-type protocols, which don't need any global information, e.g., the smallest positive eigenvalue of Laplacian matrices,  is thus fully distributed.
The absolute and relative output measurements used in this paper are more available in practice compared with the absolute or relative state ones.
Future work will focus on releasing the constraint that agents need to be introspective when dealing with the TVF tracking issue under the directed topology.

\bibliographystyle{model1-num-names}
\bibliography{refrence}







\end{document}